\begin{document}

\title{Assume-Guarantee Synthesis for Digital Contract Signing}

\author{
  Krishnendu Chatterjee \inst{1} \and
  Vishwanath Raman \inst{2}
}
\titlerunning{Assume Guarantee Synthesis for Digital Contract Signing}
\authorrunning{Chatterjee, Raman}

\institute{
  IST Austria (Institute of Science and Technology Austria)\\
  \email{krishnendu.chatterjee@ist.ac.at}
  \and
  Carnegie Mellon University, Moffett Field, USA\\
  \email{vishwa.raman@sv.cmu.edu}
}

\date{}

\maketitle

\begin{abstract}
We study the automatic synthesis of fair non-repudiation 
protocols, a class of fair exchange protocols, used for digital
contract signing.
First, we show how to specify the objectives of the participating 
agents and the trusted third party ($\ttp$) as path formulas in LTL
and prove that the satisfaction of these objectives imply 
\emph{fairness}; a property required of fair exchange protocols.
We then show that \emph{weak (co-operative) co-synthesis} and 
\emph{classical (strictly competitive) co-synthesis} fail, 
whereas \emph{assume-guarantee synthesis (AGS)} succeeds. 
We demonstrate the success of assume-guarantee synthesis as follows:
(a)~any solution of assume-guarantee synthesis is~\emph{attack-free}; 
no subset of participants can violate the objectives of the
other participants;
(b)~the Asokan-Shoup-Waidner (ASW) certified mail protocol 
that has known vulnerabilities is not a solution of AGS;
(c)~the Kremer-Markowitch (KM) non-repudiation protocol
is a solution of AGS; and 
(d)~AGS presents a new and symmetric fair non-repudiation 
protocol that is attack-free.
To our knowledge this is the first application of synthesis to fair 
non-repudiation protocols, and our results show how synthesis can 
both automatically discover vulnerabilities in protocols and 
generate correct protocols.
The solution to assume-guarantee synthesis can be computed
efficiently as the secure equilibrium solution of three-player
graph games.

\end{abstract}

\section{Introduction}

\noindent{\bf Digital contract signing.} 
The traditional paper-based contract signing mechanism involves two 
participants with an intent to sign a piece of contractual text, that 
is typically in front of them.
In this case, either both of them agree and sign the contract or
they do not.
The mechanism is ``fair'' to both participants in that it does
not afford either participant an unfair ``advantage'' over the 
other.
In digital contract signing, ubiquitous in the internet era,
an {\em originator\/} sends her intent to sign a contractual text
to a {\em recipient\/}.
Over the course of a set of messages, they then proceed to exchange
their actual signatures on the contract.
In this case, it is in general difficult to ensure fairness
as one of the two participants gains an advantage over the
other, during the course of the exchange.
If the participants do not trust each other, then neither 
wants to sign the contract first and the one that signs it first
may never get a reciprocal signature from the other participant.
Moreover, as these contracts are typically signed over asynchronous 
networks, the communication channels may provide no guarantees 
on message delivery.
The same situation arises in other related areas, such as fair 
exchange and certified email.

\noindent{\bf Protocols for digital contract signing.}
Many protocols have been designed to facilitate the exchange of 
digital signatures.
The earliest exchange protocols were probabilistic.
Participants transmit successive bits of information, under
the expectation that both participants have similar computation
power to detect dishonest behavior and stop participating 
in the protocol.
These protocols are impractical as the number of messages
exchanged may be very large, and both participants having similar 
computation power may not be realistic.
Even and Yacobi \cite{Yacobi80} first showed that no deterministic 
contract signing protocol can be realized without the involvement 
of a third party arbitrator who is trusted by all participants.
This was formalized as an impossibility result in \cite{Pagnia99},
where the authors show that fair exchange is impossible without 
a {\em trusted third party ($\ttp$)\/} for non-repudiation protocols.
A simple protocol with a $\ttp$ has a $\ttp$ collect all 
signatures and then distribute them to the participants.
But this is inefficient as it involves an online $\ttp$ to facilitate
every exchange, easily creating a bottleneck at the site of the
$\ttp$.
This has lead to the development of {\em optimistic protocols\/},
where two participants exchange their signatures without 
involving a $\ttp$, calling upon the $\ttp$ to adjudicate only
when one of the two participants is dishonest.
These protocols are called {\em fair non-repudiation protocols\/}
with {\em offline\/} $\ttp$.

\medskip\noindent{\bf Fair non-repudiation protocols.} 
A \emph{fair non-repudiation} protocol is therefore a contract 
signing protocol, falling under the category of fair exchange
protocols, that ensures that at the end of the 
exchange of signatures over a network, neither participant can
deny having participated in the protocol.
A non-repudiation protocol, upon successful termination, provides
each participant evidence of commitment to a contract
that cannot be repudiated by the other participant.
A {\em non-repudiation of origin (NRO)\/} provides the recipient
in an exchange, the ability to present to an adjudicator, 
evidence of the senders commitment to a contract.
A {\em non-repudiation of receipt (NRR)\/} provides the sender in an
exchange, the ability to present to an adjudicator, evidence of the
recipient's commitment to a contract.
An exchange protocol should satisfy the following informal
requirements \cite{MarkowitchGK02,GarayJM99}:
\begin{enumerate}
\item {\em Fairness.} The communication channels quality being 
fixed, at the end of the exchange protocol run, either all
involved parties obtain their expected items or none (even a part)
of the information to be exchanged with respect to the missing items
is received.
\item {\em Abuse-freeness.} It is impossible for a single entity
at any point in the protocol to be able to prove to an outside
party that she has the power to terminate (abort) or successfully
complete the protocol.
\item {\em Timeliness.} The communication channels quality being
fixed, the parties always have the ability to reach, in a finite
amount of time, a point in the protocol where they can stop the
protocol while preserving fairness.
\end{enumerate}

In addition to the above properties, a fair non-repudiation
protocol is also expected to satisfy the following requirements:
(a) {\em Viability.\/} Independently of the communication channels
quality, there exists an execution of the protocol, where the
exchange succeeds.
(b) {\em Non-repudiability.\/} It is impossible for a single
entity, after the execution of the protocol, to deny having
participated in a part or the whole of the communication.

\medskip\noindent{\bf Existing protocols.}
Some of the existing protocols in this category are the 
Zhou-Gollmann (ZG) protocol \cite{Zhou97}, the Asokan-Shoup-Waidner 
(ASW) protocol \cite{AsokanSW98}, the 
Garay-Jakobsson-MacKenzie (GJM) protocol \cite{GarayJM99} and
the Kremer-Markowitch (KM) protocol \cite{MarkowitchK01}.
Non-repudiation protocols are difficult to design in general 
\cite{ZhouDB00,ShmatikovM02,MarkowitchGK02,KremerMZ02,KremerR03} and
much literature covers the design and verification of
these protocols.
While some of the literature covers the discovery of 
vulnerabilities in these protocols based on the content of the 
exchanged messages, others have tried to find attacks based on 
the sequences of messages that can be exchanged, based on the rules
of the protocols.
However, there is no work that focuses on automatically obtaining 
correct solutions of these subtle and hard to design protocols.

\medskip\noindent{\bf Our contributions.} 
We show that the classical synthesis formulations that are strictly
competitive are inadequate for synthesizing these protocols and
that newer {\em conditionally competitive\/} formulations are more 
appropriate.
To our knowledge this is the first application of game-theoretic
controller synthesis to security protocols.
Synthesis has many advantages over model checking.
While model checking finds specific vulnerabilities for a designed
protocol, the counter-examples in synthesis are strategies (or
refinements) that exhibit vulnerabilities against a set of
protocol realizations.
Moreover, impossibility results such as failure to realize non-repudiation
protocols without a TTP cannot be deduced with model checking, whereas
such results can be deduced in a synthesis framework, as we show in
this paper.
Our contributions are as follows:
\begin{enumerate}
\item
We present the formal objectives of the participants
and the trusted third party as path formulas in Linear Temporal
Logic (LTL) and prove that satisfaction of the objectives imply
fairness of the protocols
(for syntax, semantics and a description of LTL see \cite{Pn1,MPvol1}).
The timeliness property is also satisfied easily.
The precise formalization of protocol requirements as LTL path
formulas is a basic pre-requisite for synthesis.

\item
We show that classical (strictly competitive) co-synthesis and
weak (co-operative) co-synthesis fail, whereas assume-guarantee
(conditionally competitive) co-synthesis \cite{CH07} succeeds.

\item We show that all solutions in the set $P_{AGS}$ of assume-guarantee
solutions are \emph{attack-free}, i.e., any solution in $P_{AGS}$ 
prevents malicious participants from gaining an unfair advantage.

\item
We show that the ASW certified mail protocol is not in $P_{AGS}$, 
due to known vulnerabilities that could have been automatically 
discovered.
The GJM protocol is also not in $P_{AGS}$ as it compromises
our objective for the $\ttp$, while providing fairness and abuse-freeness 
to the agents.
The KM protocol is in $P_{AGS}$ and it follows that it could have been 
automatically generated by formalizing the problem of protocol 
design as a synthesis problem.

\item The ASW, GJM and the KM protocol are not symmetric as they do not 
allow the recipient to abort the protocol. 
From our analysis of the refinements in $P_{AGS}$ we construct a \emph{new} 
and \emph{symmetric} fair non-repudiation protocol that provides not just 
the originator but also the recipient in an exchange, the ability to 
abort the protocol.
Given that the $\ttp$ satisfies certain constraints on her behavior,
such that her objective is satisfied, we show that the symmetric protocol
is attack-free.

\item
Our results provide a game-theoretic justification of the need for a
trusted third party. 
This gives an alternative justification of the 
impossibility results of \cite{Yacobi80,Pagnia99}.
\end{enumerate}
It was shown in~\cite{CH07} that the solutions of assume-guarantee 
synthesis can be obtained through the solution of secure equilibria
\cite{CHJ06} in graph games.
Applying the results of~\cite{CH07}, given our objectives, we show 
that for fair non-repudiation protocols, 
the solutions can be obtained in quadratic time.

\medskip\noindent{\bf Related works.}
The formal verification of fair exchange protocols uses model checking
to verify a set of protocol objectives specified in a suitable temporal
logic.
The work of Shmatikov and Mitchell \cite{ShmatikovM02} uses the
finite state tool Mur$\varphi$ to model the participants in a protocol
together with an intruder model, to check a set of safety properties
by state space exploration.
They expose a number of vulnerabilities that may lead to replay
attacks in both the ASW protocol and the GJM protocol.
Zhou et al., show the use of belief logics to verify non-repudiation
protocols \cite{Zhou98}.
The works \cite{Kremer02gameanalysis,KremerMZ02,KremerR03,ChadhaKS06} use
game theoretic models and the logic ATL to formally specify fairness,
abuse-freeness and timeliness, that they verify using the tool
MOCHA \cite{MOCHA}.
Independently, in \cite{ChadhaKS01} the authors use a game-based approach,
with a set-rewriting technique, to verify fair exchange protocols.
However, these works focus on verification and not on the synthesis of
protocols.
Armando et al., \cite{ArmandoCC07} use set-rewriting
with LTL to verify the ASW protocol and report a new
attack on the protocol.
Louridas in \cite{Louridas} provides several insightful guidelines for
the design of non-repudiation protocols.

The notion of weak or {\em co-operative\/} co-synthesis was introduced
in \cite{CE81}, classical or {\em strictly competitive\/} co-synthesis
was studied in \cite{PnueliRosner89,RW87} and
assume-guarantee or {\em conditionally competitive\/} co-synthesis
was introduced in \cite{CH07}.
But none of these works consider security protocols.
The first effort at synthesizing security protocols is
\cite{PerrigS00,Song01athena} and is related to the automatic
generation of mutual authentication protocols, where the authors use
iterative deepening with a cost function to generate
correct protocols that minimize cost; they do not address
digital contract signing.
In \cite{Saidi02}, the authors describe a prototype synthesis tool
that uses the BAN \cite{BurrowsAN90} logic to describe protocol goals with
extensions to describe protocol rules that, when combined with a proof
system, can be used to generate protocols satisfying those goals.
The authors use their approach to synthesize the Needham-Schroeder
protocol; they do not address digital contract signing.
The work of \cite{AlcaideECR08} uses multi-player games to obtain
correct solutions of multi-party rational exchange protocols in the
emerging area of rational cryptography.
These protocols do not provide fairness, but do ensure that rational 
parties would have no reason to deviate from the protocol.
None of the above works use a conditionally competitive synthesis
formulation, which we show is necessary for fair non-repudiation
protocols.
Our technique is very different from these and all previous works,
as we use the rich body of research in controller synthesis to
construct fair exchange protocols efficiently; in time that is
quadratic in the size of the model.
The finite state models are typically small, so that the application
of synthesis techniques as we propose in this paper is both appealing
and realizable in practice.

\section{Fair Non-repudiation Protocols}
\label{sec:fair-nonrep}

In this section we introduce fair non-repudiation protocols.
We first define a participant model, a protocol model and an attack model.
We then introduce the agents and the trusted third party that
participate in fair exchange protocols, the messages that they 
may send and receive, and the channels over which they 
communicate.
Finally, we introduce a set of predicates that are
set based on messages that are sent and received and that form the
basis for our protocol and participant objectives in the subsequent
section.

\medskip\noindent{\bf A participant model.}
Our protocol model is different from the Strand Space model 
\cite{ThayerHG99} and is closer to the model required for the synthesis 
of protocols as participant refinements.
We define our model as follows:
Let $V$ be a finite set of variables that take values in some
domain $D_v$.
A {\em valuation\/} $f$ over the variables $V$ is a function
$f: V \mapsto D_v$ that assigns to each variable $v \in V$, a
value $f(v) \in D_v$; we take $\calf[V]$ as the set of all
valuations over the variables in $V$.
Let $\calm$ be a finite set of messages ({\em terms\/} in the 
Strand Space model) that are exchanged between a
set $A = \set{A_i \ \vert \ 0 \le i \le n}$ ({\em roles\/} in the
Strand Space model) of participants.
We define each participant as the tuple
$A_i = (L_i, V_i, \mas_i, \trans_i)$, where $L_i$ is a finite set of
control points or values taken by a program counter, $V_i \subseteq V$ is 
a set of variables, 
$\mas_i: \calf[V_i] \mapsto 2^\calm$ is a message assignment, that 
given a valuation $f \in \calf[V_i]$, returns the set 
of messages that can be sent by $A_i$ at $f$; this set
includes all messages that can be composed by $A_i$ based on what
she knows in the valuation $f$.
Valuations over variables represent what a participant
knows at a given control point.
We take $V = \bigcup_{i = 0}^n V_i$ and assume that the sets $V_i$ 
form a partition of $V$.
An $A_i$ transition function is $\trans_i: L_i \times \calf[V_i] \times \calm 
\mapsto L_i \times \calf[V_i]$, that given a control point, 
a valuation over $V_i$ and a message either sent or received by $A_i$, 
returns the next control point of $A_i$ and an updated valuation.
The participants may send messages simultaneously and independently,
and can either receive a message or send a message at every control
point.

\medskip\noindent{\bf The most general participants.}
We interpret the elements of $A$ as the {\em most general participants\/}
in an exchange; the participants in $A$ can send any message that can be 
composed at each control point, based on messages they have received up to 
that control point.
We take the interaction between the elements of $A$ as the 
{\em most general exchange program\/}.
Every participant in an exchange has her own objective to 
satisfy.
We take the objective of a participant as a set of desired
sequences of valuations of the protocol variables.

\medskip\noindent{\bf A protocol model.}
A realization of an exchange protocol is a restriction of the most 
general exchange program that consists of the set 
$A' = \set{A_i' \ \vert \ 0 \le i \le n}$ of participants, with 
behaviors restricted by the rules of the protocol.
We take $A_i' = (L_i', V_i, \mas_i', \trans_i')$, where
$L_i' \subseteq L_i$; $V_i$ is the same set of variables as in $A_i$;
for every valuation $f \in \calf[V_i]$ we have
$\mas_i'(f) \subseteq \mas_i(f)$;
and $\trans_i': L_i' \times \calf[V_i] \times \calm \mapsto 
L_i' \times \calf[V_i]$ is the transition function, that given a
control point in $L_i'$, a valuation over $V_i$ and a message either 
sent or received by $A_i'$ returns the 
next control point of $A_i'$ and an updated valuation.
For $l \in L_i'$, $v \in \calf[V_i]$ and $m \in \calm$, we have 
$\trans_i'(l, v, m) = \trans_i(l, v, m)$.
We define a {\em protocol instance\/} (or a {\em protocol run\/}) 
as any sequence of valuations generated by the participants in $A'$
and take the set of all possible protocol runs as {\em Runs\/}$(A')$.
We refer to a message that can be sent by a participant as a
{\em move\/} of that participant.

\medskip\noindent{\bf An attack model.}
We define an {\em attack\/} on a protocol as the behavior of
a subset of protocol participants such that the resulting sequence
of messages is in their objective but not in the objective
of at least one of the other participants.
Formally, let $Y \subseteq A$ be a subset of the most general 
participants with $(A \setm Y)' = \set{A_i' \vert A_i \in (A \setm Y)}$ 
being the remaining participants that follow the rules of the protocol.
A protocol has a $Y$-attack if the most general participants in $Y$ 
can generate a message sequence, given $(A \setm Y)'$ follow the
protocol, that is not in the objective of at least one participant
in $(A \setm Y)'$ but is in the objectives of all participants in $Y$.
A protocol is {\em attack-free\/}, if there exists no $Y$-attack
for all $Y \in 2^A$.

\medskip\noindent{\bf Agents.}
An {\em agent\/} in a two-party exchange protocol is one of the two 
participating entities signing an online contract.
Based on whether an agent proposes a contract or accepts a contract
originating from another agent, we get two roles that an agent can
play; that of an {\em originator\/} of a contract, designated by
$\O$ or the {\em recipient\/} of a contract, designated
by $\R$.
Agents communicate with each other over channels.

\medskip\noindent{\bf Trusted third party ($\ttp$).}
The {\em trusted third party\/} or $\ttp$ is a participant who is 
trusted by the agents and adjudicates and resolves disputes.
It is known that a fair exchange protocol cannot be realized without
the $\ttp$ \cite{Yacobi80,Pagnia99}.
We model the $\ttp$ explicitly as a participant, define her objective 
and using our formulation give a game-theoretic justification that
the $\ttp$ is necessary.
Agents and the $\ttp$ communicate with each other over channels.

\medskip\noindent{\bf Messages.}
A {\em message\/} is an encrypted stream of bytes; we treat each
message as an atomic unit.
We assume each message contains a {\em nonce\/} that uniquely
identifies a protocol instance; participants can simultaneously
participate in multiple protocol instances.
We are not concerned with the exact contents of each message, but in
what each message conveys; this is in keeping with our objective
of synthesizing protocols that are attack-free with respect to
message interleavings.
From the definition of messages in fair exchange protocols
in \cite{Kremer02gameanalysis,KremerMZ02,KremerR03,ShmatikovM02}
and other works, we define the set $\calm$ of messages as follows:
\begin{itemize}
\item $m_1$ is a message that may be sent by $\O$ to $\R$. 
The intent of this message is to convey $\O$'s desire
to sign a contract with a recipient $\R$.
\item $m_2$ is a message that may be sent by $\R$ upon
receiving $m_1$ to $\O$.
This conveys $\R$'s intent to sign the contract sent by $\O$.
\item $m_3$ is a message that may be sent by $\O$ to $\R$ upon
receiving $m_2$ and contains the actual signature of $\O$.
\item $m_4$ is a message that contains the actual signature of
$\R$ and may be sent by $\R$ to $\O$ upon receiving $m_3$.
\item $a_1^\O$ is a message that may be sent by $\O$ to the $\ttp$ 
and conveys $\O$'s desire to {\em abort\/} the protocol.
\item $a_2^\O$ (resp. $a_2^\R$) is a message that may be sent by the 
$\ttp$ to $\O$ (resp. $\R$) that confirms the abort by including an 
abort token for $\O$ (resp. $\R$).
\item $r_1^\O$ (resp. $r_1^\R$) is a message that may be sent by $\O$
(resp. $\R$) to the $\ttp$ and conveys $\O$'s (resp. $\R$'s) desire to 
get the $\ttp$ to {\em resolve\/} a protocol instance by explicitly 
requesting the $\ttp$ to adjudicate.
We do not specify the content of $r_1^\O$ or $r_1^\R$ but make the 
assumption that the $\ttp$ needs $m_1$ to recover the 
protocol for $\R$ and similarly needs $m_2$ to recover the 
protocol for $\O$.
\item $r_2^\O$ (resp. $r_2^\R$) is a message that may be sent by the 
$\ttp$ to $\O$ (resp. $\R$) and contains a universally verifiable 
signature in lieu of the signature of $\R$ (resp. $\O$).
\end{itemize}
The messages that each participant can send in a state depends
on what the participant {\em knows\/} in that state.
We assume that every recipient can check if the message she receives 
contains what she expects and that it originates from the purported 
sender.
We impose an order on the messages $m_1, m_2, m_3$ and $m_4$ as 
it can be shown trivially in our synthesis formulation that $\O$ sending 
$m_3$ before receiving $m_2$ and $\R$ sending $m_4$ before receiving 
$m_3$ violates their respective objectives.
Further, since our concern in this paper is not to synthesize messages 
impervious to attacks, but rather the correct sequences of messages that are
impervious to attacks, we assume the former can be accomplished by 
the use of appropriate cryptographic primitives.
We remark that primitives such as {\em private contract signatures 
(PCS)\/} introduced by Garay et al., in \cite{GarayJM99}, can be used 
with protocols that are synthesized using our technique to ensure
such properties as the {\em designated verifier property\/} which
guarantees abuse-freeness.
In our formulations, we consider a {\em reasonable $\ttp$\/} that 
satisfies the following restrictions on behavior:
\begin{enumerate}
\item The $\ttp$ will never send a message unless it receives
an abort or a resolve request.
\item The $\ttp$ processes messages in a 
first-in-first-out fashion.
\item If the first message received by the $\ttp$
is an abort request, then the $\ttp$ will eventually send an abort
token.
\item If the first message received by the $\ttp$ is a
resolve request, then the $\ttp$ will eventually send an agent 
signature.
\end{enumerate}

\medskip\noindent{\bf Channels.}
A channel is used to deliver a {\em message\/}.
There are three types of channels that are typically modeled in
the literature.
We present them here in decreasing order of reliability:
\begin{enumerate}
\item An {\em operational\/} channel delivers all messages within a 
known, finite amount of time.
\item A {\em resilient\/} channel eventually delivers all messages,
but there is no fixed finite bound on the time to deliver a message.
\item An {\em unreliable\/} channel may not deliver all messages
eventually.
\end{enumerate}
We model the channels between the agents as unreliable and those 
between the agents and the $\ttp$ as resilient as in prevailing
models;
messages sent to the $\ttp$ and by the $\ttp$ will 
eventually be delivered. 
We do not model the channels explicitly, but synthesize protocols 
irrespective of channel behavior.
In particular, unreliable channels may never deliver messages and
messages sent to the $\ttp$ may arrive in any order at the $\ttp$.

\medskip\noindent{\bf Scheduler.}
The scheduler is not explicitly part of any fair exchange protocol.
The protocol needs to provide all agents the ability to send
messages asynchronously.
This implies that the agents can choose their actions simultaneously
and independently.
We model this behavior by using a fair scheduler that assigns each 
participant a turn and we synthesize refinements
against all possible behaviors of a fair scheduler.

\medskip\noindent{\bf Predicates.}
We introduce the following set of predicates.
\begin{itemize}
\item $M_1$ is set by $\O$, when she sends message $m_1$ to
$\R$.
\item $\eoo$, referred to as the {\em Evidence Of Origin\/},
is set by $\R$ when either $m_1$ or $r_2^\R$ is received.
\item $\eor$, referred to as the {\em Evidence of Receipt\/},
is set by $\O$ when either $m_2$ or $r_2^\O$ is received.
\item $\siga^\O$ and $\siga^\ttp$ are referred to as 
{\em $\O$'s signature\/}.
$\siga^\O$ is set by $\R$ when $\R$ receives $m_3$ and $\siga^\ttp$
is set by $\R$ when he receives $r_2^\R$.
\item $\sigb^\R$ and $\sigb^\ttp$ are referred to as 
{\em $\R$'s signature\/}.
$\sigb^\R$ is set by $\O$ when $\O$ receives $m_4$ and $\sigb^\ttp$
is set by $\O$ when she receives $r_2^\O$.
\item $\abo$ is set by $\O$ and indicates that $a_2^\O$ has been
received.
\item $\abr$ is set by $\R$ and indicates that $a_2^\R$ has been
received.
\item $\reqa$ is set by the $\ttp$ when an abort request,
$a_1^\O$ is received.
\item $\reqr$ is set by the $\ttp$ when a resolve request, 
$r_1^\O$ or $r_1^\R$, is received.
\end{itemize}
All predicates are {\em monotonic\/} in that once they are set, 
they remain set for the duration of a protocol instance
\cite{ShmatikovM02}.
We distinguish between a signature sent by an agent and 
the signature sent by the $\ttp$ as a replacement for an agent's 
signature in the predicates.
Distinguishing these signatures enables modeling $\ttp$ 
accountability \cite{ShmatikovM02}.
The non-repudiation of origin for $\R$, denoted by $\nro$, means 
that $\R$ has received both $\O$'s intent to sign a contract and 
$\O$'s signature on the contract so that $\O$ cannot deny having 
signed the contract to a third party.
Formally, $\nro$ is defined as: $\nro = \eoo \wedge (\siga^\O \vee
\siga^\ttp)$.
The non-repudiation of receipt for $\O$, denoted by $\nrr$, means 
that $\O$ has received both the intent and signature of $\R$ on a 
contract so that $\R$ cannot deny having signed the contract to a 
third party.
Formally, $\nrr$ is defined as: $\nrr = \eor \wedge (\sigb^\R \vee
\sigb^\ttp)$.

\section{LTL Specifications for Protocol Requirements}
\label{sec:specifications}

The synthesis of programs requires a formal objective of their
requirements.
One of our contributions in this paper is to present a precise and
formal description of the protocol requirement as a path formula in
Linear Temporal Logic (LTL \cite{Pn1,MPvol1}), which then becomes 
our synthesis objective.
In this section, we define the objective for fair non-repudiation
protocols, objectives for the agents and the $\ttp$ and show that 
satisfaction of the objectives of the agents and the $\ttp$ imply 
satisfaction of the objective of the protocols.
We use LTL, a logic that is used to specify properties of infinite
paths in finite-state transition systems.
In our specifications, we use the usual LTL notations $\bo$ 
and $\diam$ to denote \emph{always} (safety) and 
\emph{eventually} (reachability) specifications, respectively.

\medskip\noindent{\bf Fairness.\/}
Informally, fairness for $\O$ can be stated as {\em ``For all
protocol instances if the non-repudiation of origin ($\nro$) is
ever true, then eventually the non-repudiation of receipt ($\nrr$)
is also true''\/} \cite{KremerR03}.
The fairness property for $\O$ is expressed by the LTL formula
\begin{align*}
\varphi_f^\O = \bo (\nro \im \diam \nrr) \eqpun . 
\end{align*}
Similarly, the fairness property for $\R$ is expressed by the
LTL formula $\varphi_f^\R = \bo (\nrr \im \diam \nro)$.
We say that a protocol is fair, if in all instances of the protocol,
fairness for both $\O$ and $\R$ holds.
Hence the fairness requirement for the protocol is expressed
by the formula
\begin{align}
\varphi_f = \varphi_f^\O \wedge \varphi_f^\R \eqpun . 
\label{objective-f}
\end{align}

\medskip\noindent{\bf Abuse-freeness.\/}
The definition of abuse-freeness as given in \cite{GarayJM99}, 
is the following:
{\em ``An optimistic contract signing protocol is abuse-free if it is 
impossible for a single player at any point in the protocol to be able 
to prove to an outside party that he has the power to terminate 
(abort) or successfully complete the contract''\/}.
In \cite{ChadhaMSS03}, the authors prove that in any fair optimistic 
protocol, an optimistic participant yields an advantage to the other 
participant.
In a given protocol instance, once an agent has the other agent's intent 
to sign a contract, he can use this intent to negotiate a different 
contract with a third party, while ensuring that the original protocol
instance is aborted.
The term aborted is used here to mean that neither agent can get 
a non-repudiation evidence in a given protocol instance, once that 
instance is aborted.
As noted by the authors of \cite{ChadhaMSS03}, the best that 
one can hope for is to prevent either participant from proving to a 
third party that he has an advantage, or in other words, that he has
the other participant's intent to sign the contract.
This is defined as {\em abuse-freeness\/}.
As noted by the authors of \cite{GarayJM99,Kremer02gameanalysis}, using 
PCS or {\em Private Contract Signatures\/}, introduced by Garay et al., 
in \cite{GarayJM99}, which provides the designated verifier property, 
neither agent can prove the other agent's intent to sign the contract 
to anyone other than the $\ttp$.
Therefore, ensuring abuse-freeness requires the use of PCS.
Since PCS are requisite to ensure abuse-freeness, we do not model
abuse-freeness, or the stronger property balance \cite{ChadhaKS01},
in our formalism.

\medskip\noindent{\bf Timeliness.\/}
Informally, timeliness is defined as follows: {\em ``A protocol 
respects timeliness, if both agents always have the ability to reach, 
in a finite amount of time, a point in the protocol where they can stop
the protocol while preserving fairness''\/}.
We do not model timeliness in this paper as the cases in the 
literature where timeliness is compromised involve the lack of
an abort subprotocol.
Since we explicitly include the capability to abort the protocol,
our solution provides timeliness as guaranteed by existing
protocols.
Alternatively, timeliness could be explicitly modeled in the 
specifications of the agents and the $\ttp$, but in the interest 
of keeping the objectives simpler so that we convey the more 
interesting idea of using assume-guarantee synthesis, we avoid 
modeling timeliness explicitly.

\medskip\noindent{\bf Signature exchange.\/}
A protocol is an exchange protocol if it enables the exchange
of signatures.
This is also referred to as {\em Viability\/} in the literature.
For an exchange protocol to be a non-repudiation protocol,
at the end of every run of the protocol, either the agents 
have their respective non-repudiation evidences, or, if they
do not have their non-repudiation evidences, they have the abort 
token.
The property that evidences once obtained are not repudiable is
referred to as {\em Non-repudiability\/}.
A fair non-repudiation protocol must satisfy fairness,
abuse-freeness, non-repudiability and viability.

We now present intuitive objectives for the agents and the
trusted third party and show that satisfaction of these objectives 
implies that the protocols we synthesize are fair.

\medskip\noindent{\bf Specification for the originator $\O$.}
The objective of the originator $\O$ is expressed as follows:
\begin{itemize}
\item In all protocol instances, she eventually sends the evidence 
of origin.
This is expressed by the LTL formula $\varphi_\O^1 = \diam M_1$.
\item In all protocol instances, one of the following statements 
should be true:
\begin{enumerate}
\item (a) The originator eventually gets the recipient's signature 
$\sigb^\R$ or, (b) she eventually gets the recipient's signature 
$\sigb^\ttp$ and never gets the abort token $\abo$.
This is expressed by the LTL formula 
\begin{align*}
\varphi_\O^2 = (\diam \sigb^\R \vee 
(\diam \sigb^\ttp \wedge \bo \neg \abo))
\eqpun .
\end{align*}
\item (a) The originator eventually gets the abort token and, 
(b) the recipient never gets her signature $\siga^\O$ and never gets 
her signature $\siga^\ttp$ from the $\ttp$.
This is expressed by the LTL formula
\begin{align*}
\varphi_\O^3 = 
\diam \abo \wedge (\bo \neg \siga^\O \wedge \bo \neg \siga^\ttp) =
\diam \abo \wedge \bo (\neg \siga^\O \wedge \neg \siga^\ttp)
\eqpun .
\end{align*}
\end{enumerate}
\end{itemize}
The objective $\varphi_\O$ of $\O$ can therefore be expressed by the 
following LTL formula
\begin{align}
\varphi_\O = \varphi_\O^1 \wedge \bo (\varphi_\O^2 \vee \varphi_\O^3)
\eqpun . \label{objective-o}
\end{align}
There are two interpretations of the abort token in the 
literature.
On the one hand the abort token was never intended to serve as
a proof that a protocol instance was not successfully completed;
it was to guarantee that the $\ttp$ would never resolve a protocol
after it has been aborted.
On the other hand, there is mention of the abort token being used by
the recipient to prove that the protocol was aborted.
We take the position that the abort token may be used to ensure
$\ttp$ accountability as noted in \cite{ShmatikovM02} and hence 
include it in the objective of $\O$. 
If the $\ttp$ misbehaves and issues both $\sigb^\ttp$ and
$\abo$, we claim that the objective $\varphi_\O$ of the originator
should be violated, but in this case, she has the power to 
prove that the $\ttp$ misbehaved by presenting both $\sigb^\ttp$ 
and $\abo$ to demonstrate inconsistent behavior.
While having both $\sigb^\ttp$ and $\abo$ is a violation of
$\varphi_\O$, having both $\sigb^\R$ and $\abo$ is not a violation
of $\varphi_\O$; once $\O$ receives $\sigb^\R$, we take it that the 
objective $\varphi_\O$ is satisfied.
While having both $\sigb^\R$ and $\sigb^\ttp$ may be interpreted
as $\O$ having inconsistent signatures, we do not consider this
to be a violation of $\O$'s objective; given the nature of
asynchronous networks it may well be the case that both these 
evidences arrive eventually, one from the $\ttp$ and the other from $\R$,
as $\O$ did not wait long enough before sending $r_1^\O$.

\medskip\noindent{\bf Specification for the recipient $\R$.}
The objective of the recipient $\R$ can be expressed as follows:
\begin{itemize}
\item In all protocol instances, if he gets the evidence of origin 
$\eoo$, then one of the following statements should be true:
\begin{enumerate}
\item (a) The recipient eventually gets the originator's signature 
$\siga^\O$ or, (b) he eventually gets the originator's signature
$\siga^\ttp$ and never gets the abort token $\abr$.
This is expressed by the LTL formula
\begin{align*}
\varphi_\R^1 = (\diam \siga^\O \vee 
(\diam \siga^\ttp \wedge \bo \neg \abr)) \eqpun .
\end{align*}
\item (a) The recipient eventually gets the abort token and, 
(b) the originator never gets his signature $\sigb^\R$ and never gets 
his signature $\sigb^\ttp$ from the $\ttp$.
This is expressed by the LTL formula
\begin{align*}
\varphi_\R^2 = 
\diam \abr \wedge (\bo \neg \sigb^\R \wedge \bo \neg \sigb^\ttp) =
\diam \abr \wedge \bo (\neg \sigb^\R \wedge \neg \sigb^\ttp) 
\eqpun .
\end{align*}
\end{enumerate}
\end{itemize} 
The objective $\varphi_\R$ can therefore be expressed by the LTL formula
\begin{align}
\varphi_\R = \bo (\eoo \im (\varphi_\R^1 \vee \varphi_\R^2))
\eqpun . \label{objective-r}
\end{align}
If the $\ttp$ misbehaves and issues both $\siga^\ttp$ and
$\abr$, we claim that the objective $\varphi_\R$ of the recipient
should be violated, but in this case he has the power to 
prove that the $\ttp$ misbehaved by presenting both $\siga^\ttp$ 
and $\abr$.
Similar to the case of $\O$, once $\R$ receives $\siga^\O$, the 
objective $\varphi_\R$ is satisfied whether or not abort tokens or 
non-repudiation evidences are issued by the $\ttp$.

\medskip\noindent{\bf Specification for the trusted third party $\ttp$.}
The objective of the trusted third party is to treat both agents 
symmetrically and be accountable to both agents.
It can be expressed as follows:
\begin{itemize}
\item In all protocol instances, if the abort request $a_1^\O$ or
a resolve request $r_1^\O$ or $r_1^\R$ is received, then eventually the 
$\ttp$ sends the abort token $\abo$ or the abort token $\abr$ or the 
originator's signature $\siga^\ttp$ or the recipient's signature 
$\sigb^\ttp$.
This can be expressed by the LTL formula
\begin{align*}
\varphi_\ttp^1 =
\bo ((\reqa \vee \reqr) \im (\diam \abo \vee \diam \abr \vee
\diam \siga^\ttp \vee \diam \sigb^\ttp))
\eqpun .
\end{align*}
\item In all protocol instances, if the originator's signature
$\siga^\ttp$ has been sent to the recipient, then the originator
should eventually get the recipient's signature $\sigb^\ttp$ 
and the agents should never get the abort token.
This can be expressed by the LTL formula
\begin{align*}
\varphi_\ttp^2 = 
\bo (\siga^\ttp \im 
(\diam \sigb^\ttp \wedge 
\bo (\neg \abo \wedge \neg \abr))) \eqpun .
\end{align*}
\item Symmetrically, in all protocol instances, if the recipient's 
signature $\sigb^\ttp$ has been sent to the originator, then the 
recipient should eventually get the originator's signature $\siga^\ttp$ 
and the agents should never get the abort token.
This can be expressed by the LTL formula
\begin{align*}
\varphi_\ttp^3 = 
\bo (\sigb^\ttp \im 
(\diam \siga^\ttp \wedge 
\bo (\neg \abo \wedge \neg \abr))) \eqpun .
\end{align*}
\item In all protocol instances, if the originator gets the
abort token $\abo$, then the recipient should eventually
get the abort token $\abr$ and the originator should never get
the recipient's signature $\sigb^\ttp$ and the recipient should
never get the originator's signature $\siga^\ttp$.
This can be expressed by the LTL formula
\begin{align*}
\varphi_\ttp^4 = 
\bo (\abo \im 
(\diam \abr \wedge 
\bo (\neg \siga^\ttp \wedge \neg \sigb^\ttp))) \eqpun .
\end{align*}
\item Symmetrically, in all protocol instances, if the recipient 
gets the abort token $\abr$, then the originator should eventually
get the abort token $\abo$ and the originator should never get
the recipient's signature $\sigb^\ttp$ and the recipient should
never get the originator's signature $\siga^\ttp$.
This can be expressed by the LTL formula
\begin{align*}
\varphi_\ttp^5 = 
\bo (\abr \im 
(\diam \abo \wedge 
\bo (\neg \siga^\ttp \wedge \neg \sigb^\ttp))) \eqpun .
\end{align*}
\end{itemize}
The objective $\varphi_\ttp$ of the $\ttp$ is then defined as:
\begin{align}
\varphi_\ttp = \varphi_\ttp^1 \wedge \varphi_\ttp^2 \wedge
\varphi_\ttp^3 \wedge \varphi_\ttp^4 \wedge \varphi_\ttp^5
\eqpun . \label{objective-ttp}
\end{align}
Note that our objective for the $\ttp$ treats both 
agents symmetrically.
In this paper we present assume-guarantee synthesis for the above
objective of the $\ttp$.
But in general, the objective of the $\ttp$ can be weakened if
desired, by treating the agents asymmetrically, and the assume-guarantee
synthesis technique can be applied with this weakened objective.
We remark that the specifications of the participants in our protocol 
model are sequences of messages.
Using predicates that are set when messages are sent or received by the
agents or the $\ttp$, we transform those informal specifications 
into formal objectives using the predicates and LTL.
The following theorem shows that the objectives we have introduced
(\ref{objective-o}), (\ref{objective-r}) and
(\ref{objective-ttp}) imply fairness (\ref{objective-f}).

\begin{theo}{\ (Objectives imply fairness)} 
\label{objectives-imply-ft}
We have,
$\varphi_\O \wedge \varphi_\R \wedge \varphi_\ttp \im \varphi_f$.
\end{theo}

\begin{proof}
To prove the assertion, assume towards a contradiction that there exists
a path that satisfies $\varphi_\O \wedge \varphi_\R \wedge \varphi_\ttp$ 
but does not satisfy $\varphi_f$.
We consider the case when the path does not satisfy the first conjunct
$\varphi_f^\O = \bo (\nro \im \diam \nrr)$ (a similar
argument applies to the second conjunct).
If the path does not satisfy $\varphi_f^\O$, then there is a suffix of 
the path, where $\eoo \wedge (\siga^\O \vee \siga^\ttp)$ holds but 
$\eor \wedge (\sigb^\R \vee \sigb^\ttp)$ does not hold at all states 
of the suffix.
It follows that the path satisfies
\begin{align}
\diam \bo (\eoo \wedge (\siga^\O \vee \siga^\ttp) \wedge 
(\neg \eor \vee (\neg \sigb^\R \wedge \neg \sigb^\ttp))) \eqpun .
\label{theo-1-a-1}
\end{align}
Consider the objective
$\varphi_\O^2 = (\diam \sigb^\R \vee (\diam \sigb^\ttp \wedge
\bo \neg \abo))$.
Since all predicates are monotonic, we can rewrite $\varphi_\O^2$
as follows:
\[
\varphi_\O^2 = \diam \bo (\sigb^\R \vee (\sigb^\ttp \wedge
\neg \abo)) \eqpun .
\]
Similarly, we can rewrite $\varphi_\O^3$ as follows:
\[
\varphi_\O^3 = \diam \bo (\abo \wedge \neg \siga^\O \wedge
\neg \siga^\ttp) \eqpun .
\]
If a path satisfies (\ref{theo-1-a-1}), then it also satisfies
$\diam \bo (\siga^\O \vee \siga^\ttp)$.
By the monotonicity of the predicates, we have 
$\diam \bo (\siga^\O \vee \siga^\ttp)$ is equivalent to 
$\diam \bo \siga^\O \vee \diam \bo \siga^\ttp$.
We consider the following cases to complete the proof:
\begin{enumerate}
\item {\em Case 1. Path satisfies $\diam \bo \siga^\O$.\/}
If the path satisfies $\diam \bo \siga^\O$, then the path does not
satisfy $\varphi_\O^3$.
We now show that the path also does not satisfy $\varphi_\O^2$.
Since the path satisfies $\diam \bo \siga^\O$, it must be the case that 
message $m_2$ was received by $\O$, as otherwise $\O$ will not send 
$\siga^\O$.
This implies that the path satisfies $\diam \bo \eor$.
Since the path satisfies both $\diam \bo \eor$ and (\ref{theo-1-a-1}), 
it follows that the path must satisfy 
$\diam \bo (\neg \sigb^\R \wedge \neg \sigb^\ttp)$.
Hence the path does not satisfy $\diam \bo \sigb^\R$ and 
$\diam \bo \sigb^\ttp$ leading to the path violating $\varphi_\O^2$.
Since the path does not satisfy both $\varphi_\O^2$ and $\varphi_\O^3$, 
it does not satisfy $\varphi_\O$, which is a contradiction.
\item {\em Case 2. Path satisfies $\diam \bo \siga^\ttp$.\/}
If the path satisfies $\diam \bo \siga^\ttp$, then either $\O$ or 
$\R$ must have sent the resolve request.
If the $\ttp$ resolves the protocol only to the agent that sends
the resolve request and not the other, then the path does not 
satisfy $\varphi_\ttp$, leading to a contradiction.
For $\varphi_\ttp$ to hold, the $\ttp$ must have sent both
$\siga^\ttp$ and $\sigb^\ttp$, which given the channels between the
agents and the $\ttp$ are resilient implies, (a) $\eor$ must have
been set by $\O$ upon receiving $\sigb^\ttp$ leading to the path
satisfying $\diam \bo \eor$ and (b) the path satisfies 
$\diam \bo \sigb^\ttp$.
Since the path satisfies $\diam \bo \eor$ and $\diam \bo \sigb^\ttp$,
it cannot satisfy (\ref{theo-1-a-1}), leading to a contradiction.
\end{enumerate}
\qed
\end{proof}

\section{Co-synthesis}
\label{sec:co-synthesis}

In this section we first define processes, schedulers and
objectives for synthesis along the lines of \cite{CH07}.
Next we define traditional co-operative~\cite{CE81} and
strictly competitive~\cite{PnueliRosner89,RamadgeWonham87} versions
of the co-synthesis problem; we refer to them as 
\emph{weak co-synthesis} and \emph{classical co-synthesis}, respectively.
We then define a formulation of co-synthesis introduced in
\cite{CH07} called \emph{assume-guarantee synthesis}.
We show later in the paper that the protocol model of 
Section~\ref{sec:fair-nonrep} reduces to the process model for
synthesis that we present in this section.

\medskip\noindent{\bf Variables, valuations, and traces.}
Let $X$ be a finite set of variables such that each variable
$x \in X$ has a finite domain $D_x$.
A \emph{valuation} $\valua$ on $X$ is a function
$\valua: X \to \bigcup_{x \in X} D_x$ that assigns to each variable
$x \in X$ a value $\valua(x) \in D_x$.
We write $\Valua[X]$ for the set of valuations on $X$.
A \emph{trace} on $X$ is an infinite sequence
$(v_0,v_1,v_2,\ldots) \in \Valua[X]^\omega$ of valuations on $X$.
Given a valuation $\valua[X] \in \Valua[X]$ and a subset 
$Y \subseteq X$ of the variables, we denote by 
$\valua[X] \obciach Y$ the restriction of the valuation $\valua[X]$ to 
the variables in $Y$.
Similarly, for a trace $\trace(X) = (v_0,v_1,v_2,\ldots)$ on $X$, we write
$\trace(X) \obciach Y = (v_0 \obciach Y, v_1 \obciach Y, v_2 \obciach Y,
\ldots)$ for the restriction of $\trace(X)$ to the variables in $Y$.
The restriction operator is lifted to sets of valuations, and to
sets of traces.

\medskip\noindent{\bf Processes and refinement.}
Let $\moves$ be a finite set of moves.
For $i \in \set{1,2,3}$, a \emph{process} is defined by the
tuple $P_i=(X_i,\mov_i,\trans_i)$ where,
\begin{enumerate}
\item $X_i$ is a finite set of variables of process $P_i$ with
$X = \bigcup_{i = 1}^3 X_i$ being the set of all process variables,
\item $\mov_i: \Valua_i[X_i] \to 2^\moves \setminus \emptyset$ is a move
assignment that given a valuation in $\Valua_i[X_i]$, returns a 
non-empty set of moves, where $\Valua_i[X_i]$ is the set of 
valuations on $X_i$, and
\item $\trans_i: \Valua_i[X_i] \times \moves \to 2^{\Valua_i[X_i]} 
\setminus \emptyset$ is a non-deterministic transition function.
\end{enumerate}
The set of process variables $X$ may be shared between processes.
The processes only choose amongst available moves at every valuation of 
their variables as determined by their move assignment.
The transition function maps a present valuation and a process move to 
a nonempty set of possible successor valuations such that each successor 
valuation has a unique pre-image.
The uniqueness of the pre-image is a property of fair exchange
protocols; unique messages convey unique content and generate unique
valuations.

A \emph{refinement} of process $P_i=(X_i,\mov_i,\trans_i)$ is a process
$P_i'=(X_i',\mov_i',\trans_i')$ such that:
\begin{enumerate}
\item $X_i \subseteq X_i'$,
\item for all valuations $\valua_i[X_i']$ on $X_i'$, we have
$\mov_i'(\valua_i[X_i']) \subseteq \mov_i(\valua_i[X_i'] \obciach X_i)$, 
and
\item for all valuations $\valua_i[X_i']$ on $X_i'$ and for all moves
$a \in \mov_i'(\valua_i[X_i'])$, we have
$\trans_i'(\valua_i[X_i'], a) \obciach X_i \subseteq 
\trans_i(\valua_i[X_i'] \obciach X_i, a)$.
\end{enumerate}
In other words, the refined process $P_i'$ has possibly more variables
than the original process $P_i$, at most the same moves as the moves of
the original process $P_i$ at every valuation, and every possible update
of the variables in $X_i$ given $\mov_i'$ by $P_i'$ is a possible update 
by $P_i$.
We write $P_i' \preceq P_i$ to denote that $P_i'$ is a refinement of $P_i$.
Given refinements $P_1'$ of $P_1$, $P_2'$ of $P_2$ and $P_3'$ of $P_3$,
we write $X' = X_1' \cup X_2' \cup X_3'$ for the set of variables of all
refinements, and we denote the set of valuations on $X'$ by $\Valua[X']$.

\medskip\noindent{\bf Schedulers.}
Given processes $P_i$, where $i \in \set{1, 2, 3}$, a \emph{scheduler} 
$\Sc$ for $P_i$ chooses at each computation step whether it is 
process $P_1$'s turn, process $P_2$'s turn or process $P_3$'s turn to 
update her variables.
Formally, the scheduler $\Sc$ is a function 
$\Sc: \Valua[X]^* \to \set{1,2,3}$
that maps every finite sequence of global valuations
(representing the history of a computation)
to $i\in\set{1,2,3}$, signaling that process $P_i$ is next to update her
variables.
The scheduler $\Sc$ is \emph{fair} if it assigns turns to $P_1$, $P_2$ and
$P_3$ infinitely often;
i.e., for all traces $(v_0,v_1,v_2,\ldots) \in \Valua[X]^\omega$, there
exist infinitely many $j_i \geq 0$, such that
$\Sc(v_0, \ldots, v_{j_1}) = 1$, $\Sc(v_0, \ldots, v_{j_2}) = 2$ and
$\Sc(v_0, \ldots, v_{j_3}) = 3$.
Given three processes $P_1 = (X_1, \mov_1, \trans_1)$, 
$P_2 = (X_2, \mov_2, \trans_2)$ and $P_3 = (X_3, \mov_3, \trans_3)$,
a scheduler $\Sc$ for $P_1$, $P_2$ and $P_3$, and a start valuation
$v_0 \in \Valua[X]$, the set of possible traces is:
\begin{align*}
\semb{(P_1 \para P_2 \para P_3 \para \Sc)(v_0)}  = 
\set{ & (v_0, v_1, v_2, \ldots) \in \Valua[X]^\omega \mid
\forall j \geq 0. \ \Sc(v_0, \ldots, v_j) = i; \\
& v_{j+1} \obciach (X \setminus X_i) = v_j \obciach (X \setminus X_i); \\
& v_{j+1} \obciach X_i \in \trans_i(v_j \obciach X_i, a)
\text{ for some }
a \in \mov_i(v_j \obciach X_i))} \eqpun .
\end{align*}
Note that during turns of one process $P_i$, the values of the private
variables $X \setminus X_i$ of the other processes remain unchanged.
We define the projection of traces to moves as follows:
\begin{align*}
(v_0, v_1, v_2, \ldots) \obciach \moves =
\set{ & (a_0, a_1, a_2, \ldots) \in \moves^\omega \mid 
\forall j \geq 0. \ \Sc(v_0, \ldots, v_j) = i; \\
& v_{j+1} \obciach X_i \in \trans_i(v_j \obciach X_i, a_j);
a_j \in \mov_i(v_j \obciach X_i)}
\eqpun .
\end{align*}

\medskip\noindent{\bf Specifications.}
A \emph{specification} $\varphi_i$ for process $P_i$ is a set of traces
on $X$;
that is, $\varphi_i \subseteq \Valua[X]^\omega$.
We consider only $\omega$-regular specifications~\cite{Thomas97}.
We define boolean operations on specifications using logical operators
such as $\wedge$ (conjunction) and $\im$ (implication).

The input to the co-synthesis problem is given as follows:
for $i \in \set{1, 2, 3}$, processes $P_i = (X_i, \mov_i, \trans_i)$,
specifications $\varphi_i$ for process~i, and a start valuation 
$v_0 \in \Valua$.

\medskip\noindent{\bf Weak co-synthesis.}
The \emph{weak co-synthesis} problem is defined as follows:
do there exist refinements $P_i'=(X_i',\mov_i', \trans_i')$ and a 
valuation $v_0' \in \Valua'$,
such that,
\begin{enumerate}
\item $P_i' \preceq P_i$ and $v_0' \obciach X =v_0$, and 
\item For all fair schedulers $\Sc$ for $P_i'$ we have,\\
$\semb{(P_1' \para P_2' \para P_3' \para \Sc)(v_0')} \obciach X
\subseteq (\varphi_1 \wedge \varphi_2 \wedge \varphi_3).$
\end{enumerate}
Intuitively, weak co-synthesis or co-operative co-synthesis is
a synthesis formulation that seeks refinements $P_1'$, $P_2'$ and $P_3'$ 
where the processes co-operate to satisfy their respective objectives.

\medskip\noindent{\bf Classical co-synthesis.}
The \emph{classical co-synthesis} problem is defined as follows:
do there exist refinements $P_i'=(X_i', \mov_i', \trans_i')$ and
a valuation $v_0' \in \Valua'$, such that,
\begin{enumerate}
\item $P_i' \preceq P_i$ and $v_0' \obciach X =v_0$, and
\item For all fair schedulers $\Sc$ for $P_i'$ we have,
\begin{enumerate}
\item $\semb{(P_1' \para P_2 \para P_3 \para \Sc)(v_0')} \obciach X 
\subseteq \varphi_1$;
\item $\semb{(P_1 \para P_2' \para P_3 \para \Sc)(v_0')} \obciach X 
\subseteq \varphi_2$;
\item $\semb{(P_1 \para P_2 \para P_3' \para \Sc)(v_0')} \obciach X 
\subseteq \varphi_3$.
\end{enumerate}
\end{enumerate}
Classical or strictly competitive co-synthesis is a formulation
that seeks refinements $P_1'$, $P_2'$ and $P_3'$ such that $P_1'$
can satisfy $\varphi_1$ against all possible, and hence adversarial,
behaviors of the other processes; similarly for $P_2'$ and $P_3'$.

\medskip\noindent{\bf Assume-guarantee synthesis.}
The \emph{assume-guarantee synthesis} problem is defined as
follows:
do there exist refinements $P_i'=(X_i', \mov_i', \trans_i')$ 
and a valuation $v_0' \in \Valua'$, such that,
\begin{enumerate}
\item $P_i' \preceq P_i$ and $v_0' \obciach  X = v_0$,
and
\item For all fair schedulers $\Sc$ for $P_i'$ we have,
\begin{enumerate}
\item $\semb{(P_1' \para P_2 \para P_3 \para \Sc)(v_0')} \obciach X
\subseteq (\varphi_2 \wedge \varphi_3) \im \varphi_1$;
\item $\semb{(P_1 \para P_2' \para P_3 \para \Sc)(v_0')} \obciach X
\subseteq (\varphi_1 \wedge \varphi_3) \im \varphi_2$;
\item $\semb{(P_1 \para P_2 \para P_3' \para \Sc)(v_0')} \obciach X
\subseteq (\varphi_1 \wedge \varphi_2) \im \varphi_3$;
\item $\semb{(P_1' \para P_2' \para P_3' \para \Sc)(v_0')} \obciach X
\subseteq (\varphi_1 \wedge \varphi_2 \wedge \varphi_3)$.
\end{enumerate}
\end{enumerate}
Assume-guarantee synthesis or conditionally competitive co-synthesis
is a formulation that seeks refinements $P_1'$, $P_2'$ and $P_3'$
such that $P_1'$ can satisfy $\varphi_1$ as long as processes $P_2$
and $P_3$ satisfy their objectives; similarly for $P_2'$ and $P_3'$.
This synthesis formulation is well suited for those cases 
where processes are primarily concerned with satisfying their own 
objectives and only secondarily concerned with violating the objectives 
of the other processes.
We want protocols to be correct under {\em arbitrary\/} behaviors of 
the participants, and the arbitrary or worst case behavior of a 
participant without sabotaging her own objective, is to first satisfy her 
own objective, and only then to falsify the objectives of the other 
participants. 
The primary goal of satisfying her own objective, and secondary goal of 
falsifying other participant objectives formally captures this 
worst case or arbitrary behavior assumption.
We show that this synthesis formulation is the only one that works for
fair non-repudiation protocols.
While classical co-synthesis can be solved as zero-sum games,
assume-guarantee synthesis can be solved using non zero-sum games
with lexicographic objectives \cite{CH07}. 
For brevity, we drop the initial valuation $v_0$ in the set of traces.

\section{Protocol Co-synthesis}
\label{sec:protocol-cosynth}

In this section, we present our results on synthesizing fair
non-repudiation protocols.
We use the process model in Section~\ref{sec:co-synthesis} to 
define agent and $\ttp$ processes, with objectives as defined in 
Section~\ref{sec:specifications}.
We then introduce the protocol synthesis model and show that
classical co-synthesis fails and weak co-synthesis generates 
unacceptable solutions.
We provide a game theoretic justification of the need for a
$\ttp$ by showing that without the $\ttp$ neither classical
co-synthesis nor assume-guarantee synthesis can be used to
synthesize fair non-repudiation protocols.
We define the set $P_{AGS}$ of assume-guarantee refinements and
prove that the refinements are attack-free.
We then present an alternate characterization of the set $P_{AGS}$
and show that the Kremer-Markowitch (KM) non-repudiation 
protocol with offline $\ttp$, proposed in 
\cite{MarkowitchK01,KremerMZ02,KremerR03}, is included in $P_{AGS}$ 
whereas the ASW certified mail protocol and the GJM protocol are not.
Finally, we systematically analyze refinements of the most general
agents and the $\ttp$ with respect to their membership in $P_{AGS}$
and show the KM protocol can be automatically generated.

\psfrag{i}{$\idle$}
\psfrag{m1!}{$m_1!$}
\psfrag{m2!}{$m_2!$}
\psfrag{m3!}{$m_3!$}
\psfrag{m4!}{$m_4!$}
\psfrag{m1?}{$m_1?$}
\psfrag{m2?}{$m_2?$}
\psfrag{m3?}{$m_3?$}
\psfrag{m4?}{$m_4?$}
\psfrag{a1!o}{$a_1^\O!$}
\psfrag{r1!o}{$r_1^\O!$}
\psfrag{a1!r}{$a_1^\R!$}
\psfrag{r1!r}{$r_1^\R!$}
\psfrag{a1!}{$a_1^*!$}
\psfrag{r1!}{$r_1^*!$}
\psfrag{a1!}{$a_1^*!$}
\psfrag{r1!}{$r_1^*!$}
\psfrag{a2?}{$a_2^*?$}
\psfrag{r2?}{$r_2^*?$}

\begin{figure}[!t]
\centering
\includegraphics[width=4.5in,height=3in]{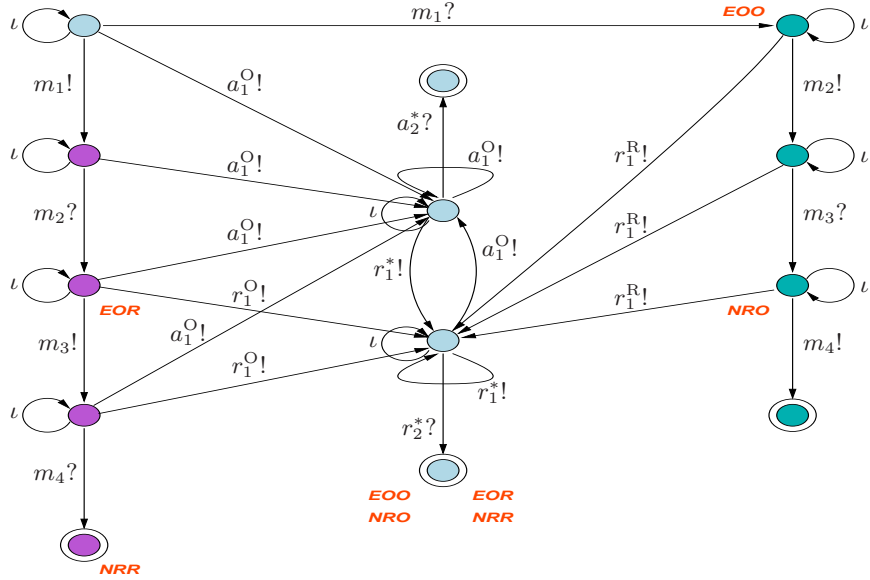}
\caption{An interface automaton that shows the states and 
enabled moves of the agents $\O$ (on the left) and $\R$ 
(on the right).
Move $\idle$ is the idle move. 
The states with no outgoing edges are terminal. 
We consider the most liberal behaviors of the agents wherein the 
abort and resolve messages can be sent from all states where the 
agents have the data they need to send those messages. 
The predicates are monotonic and are shown in the first state at 
which they hold.
In states that can be either agent state, we use the $^*$ in
the messages $a_2^*, r_1^*, r_2^*$ to denote one of $\O$
or $\R$.
Abort or resolve requests can be sent from the states
marked terminal, but they have no bearing on the outcome of the 
protocol and hence we omit them.}
\label{figure:agents}
\end{figure}

\medskip\noindent{\bf The process $\O$.}
We distinguish between the set of messages sent by $\O$ and the set
of messages received by $\O$.
We first recall, from Section~\ref{sec:fair-nonrep}, that $\O$ sets 
the predicates $\eor$, $\sigb^\R$, $\sigb^\ttp$ and $\abo$ when she 
receives messages $m_2$, $m_4$, $r_2^\O$ and $a_2^\O$ respectively. 
We add to this set the predicates $M_1$, $M_3$, $\reqa^\O$ and
$\reqr^\O$ that are set by $\O$ when she sends messages 
$m_1$, $m_3$, $a_1^\O$ and $r_1^\O$ respectively.
We take the set of variables of the process $\O$ as
$X_\O = \set{M_1, \eor, M_3, \sigb^\R, \sigb^\ttp, \reqa^\O,
\reqr^\O, \abo}$; the union of the predicates set by $\O$ when she 
receives messages and the set of predicates set by $\O$ when she 
sends messages.
By an abuse of notation, we take the set of all messages that can
be sent by $\O$ as the moves of process $\O$.
By including an idle move $\idle$, which $\O$ may choose in lieu of
sending a message, we get the following set of moves for $\O$:
$\moves_\O = \set{\idle, m_1, m_3, a_1^\O, r_1^\O}$.

\medskip\noindent{\bf The process $\R$.}
Similar to the case of process $\O$, we define the set of 
variables of process $\R$ as the union of the set of predicates set by 
$\R$ when he sends messages and the set of predicates he sets when
he receives messages.
We have the predicates $\eoo$, $\siga^\O$, $\siga^\ttp$ and $\abr$,
set by $\R$ when he receives messages $m_1$, $m_3$, $r_2^\R$
and $a_2^\R$ respectively.
We add to this the predicates $M_2$, $M_4$ and $\reqr^\R$, set by $\R$
when he sends messages $m_2$, $m_4$ and $r_1^\R$ respectively.
This gives us the following variables for process $\R$:
$X_\R = \set{\eoo, M_2, \siga^\O, M_4, \siga^\ttp, \reqr^\R, 
\abr}$.
The set of moves for $\R$ is given by 
$\moves_\R = \set{\idle, m_2, m_4, r_1^\R}$.
In Figure~\ref{figure:agents}, we show an interface 
automaton for an agent.
Since an agent can act either as an originator or a recipient,
we show the actions available to the agent in both roles in the figure.

\medskip\noindent{\bf The process $\ttp$.}
The predicates $\reqa$ and $\reqr$ are set by the $\ttp$ when she
receives an abort or a resolve request from either agent.
We add to this the predicates $A_2^\O$, $A_2^\R$, $R_2^\O$ and
$R_2^\R$, set by the $\ttp$ when she sends messages $a_2^\O$,
$a_2^\R$, $r_2^\O$ and $r_2^\R$ respectively.
We get the following set of process variables for the $\ttp$:
$X_\ttp = \set{\reqa, \reqr, A_2^\O, A_2^\R, R_2^\O, R_2^\R}$.
The set of moves for the $\ttp$ are defined as follows:
$\moves_\ttp = \set{\idle, a_2^\O, a_2^\R, \comp{a_2^\O, a_2^\R},
r_2^\O, r_2^\R, \comp{r_2^\O, r_2^\R}}$.
The $\ttp$ move $\comp{a_2^\O, a_2^\R}$
results in the $\ttp$ sending messages $a_2^\O$ to $\O$ and 
$a_2^\R$ to $\R$.
The $\ttp$ can choose to send them in any order; all that is 
guaranteed is that both messages will be sent by the $\ttp$.
Similarly for the $\ttp$ move $\comp{r_2^\O, r_2^\R}$.
The moves for the $\ttp$ are shown in 
Table~\ref{table-ttp-moves};
these include the moves for the $\ttp$ in the ASW certified
mail protocol \cite{AsokanSW98}, the GJM protocol \cite{GarayJM99}
and the KM protocol \cite{MarkowitchK01}.
We show moves for the $\ttp$ with and without a persistent database
for completeness.
Since it is trivially the case that $\ttp$ accountability cannot be 
satisfied without a persistent database, we do not consider the absence 
of a persistent database in the rest of this paper.
\begin{table*}[!t]
\begin{center}
\begin{tabular}{|l||l|l|l|l|l|}
\hline
\bf{Agent moves} & \multicolumn{5}{c|}{\bf{Enabled $\ttp$ moves}} \\
\cline{2-6}
& \multicolumn{2}{c|}{\bf{Without DB}} & \multicolumn{3}{c|}{\bf{With a persistent DB}} \\
\cline{4-6}
& \multicolumn{2}{c|}{} & \bf{ASW} & \bf{GJM} & \bf{KM} \\
\hline \hline
\bf{$\O$ sends $a_1^\O$} & $a_2^\O$ & $\comp{a_2^\O, a_2^\R}$ & If $\R$ has recovered, & If recovered, then $r_2^\O$ & If aborted or \\
& & & invite $\O$ to recover & else $a_2^\O$ & recovered, then \\
& & & else $a_2^\O$ & & $\idle$ else $\comp{a_2^\O, a_2^\R}$ \\
\hline
\bf{$\O$ sends $r_1^\O$} & $r_2^\O$ & $\comp{r_2^\O, r_2^\R}$ & If aborted, then $a_2^\O$ & If aborted, then $a_2^\O$ & If aborted or \\
& & & else $r_2^\O$ & else $r_2^\O$ & recovered, then \\
& & & & & $\idle$ else $\comp{r_2^\O, r_2^\R}$ \\
\hline
\bf{$\R$ sends $r_1^\R$} & $r_2^\R$ & $\comp{r_2^\O, r_2^\R}$ & If aborted, then $a_2^\R$ & If aborted, then $a_2^\R$ & If aborted or \\
& & & else $r_2^\R$ & else $r_2^\R$ & recovered, then \\
& & & & & $\idle$ else $\comp{r_2^\O, r_2^\R}$ \\
%\hline \hline
%\rowcolor[rgb]{0.8,0.8,0.8}\bf{$\R$ sends $a_1^\R$} & $a_2^\R$ & $\comp{a_2^\O, a_2^\R}$ & & If recovered, then $r_2^\R$ & If aborted or \\
%\rowcolor[rgb]{0.8,0.8,0.8}& & & & else $a_2^\R$ & recovered, then \\
%\rowcolor[rgb]{0.8,0.8,0.8}& & & & & $\idle$ else $\comp{a_2^\O, a_2^\R}$ \\
\hline
\end{tabular}
\end{center}
\caption[The moves available to the trusted third party]{In this
table we list the choices of moves available to the trusted third
party. Each row begins with a message sent by an agent to the $\ttp$
followed by the choices available to the $\ttp$ in all subsequent
states. The $\ttp$ moves for the ASW, GJM and KM protocols are shown.}
%The gray row is not part of the ASW, GJM or KM protocols but has
%the possible responses of the $\ttp$ if $\R$ aborts the protocol.}
\label{table-ttp-moves}
\end{table*}

\medskip\noindent{\bf The protocol synthesis model.}
We now have all the ingredients to define our protocol synthesis 
model.
Given process $\O$, process $\R$ and process $\ttp$ as defined
above, we take $X = X_\O \cup X_\R \cup X_\ttp$ as the joint set of
process variables.
We take the objectives $\varphi_\O$, $\varphi_\R$ and 
$\varphi_\ttp$ for the processes $\O$, $\R$ and $\ttp$ respectively,
as defined in Section~\ref{sec:specifications}.
The set of traces $\semb{\O \para \R \para \ttp \para \Sc}$,
given $\Sc$ is a fair scheduler, is then the joint behavior of the 
most general agents and the most general $\ttp$, subject to the 
constraint that they can only send messages based on what they know 
at every valuation of their variables.
A protocol is a refinement $\O' \preceq \O$, $\R' \preceq \R$
and $\ttp' \preceq \ttp$, where each participant has a restricted
set of moves at every valuation of the process variables; the
restrictions constituting the rules of the protocol.
%By considering all possible schedules of the processes as chosen by
%$\Sc$, we then have a protocol synthesis model that is equivalent 
%to the protocol model defined in Section~\ref{sec:fair-nonrep}.
We take a protocol state as a valuation over the process 
variables.
By an abuse of notation, we represent every state of the protocol by the 
set of variables that are set to true in that state; for example a 
valuation $f = \set{M_1, \eoo, M_2, \eor}$ corresponds to the state of
the protocol after messages $m_1$ and $m_2$ have been received.
$f \obciach X_\R = \set{\eoo, M_2}$ 
corresponds to the restriction of the valuation $f$ to the variables 
of process $\R$; all that $\R$ knows in this state is that he has 
received $m_1$ and has sent $m_2$.
We take $v_0$ as the initial valuation where all variables are
false.
The set of variables in the refinements 
$\O' \preceq \O$, $\R' \preceq \R$ and $\ttp' \preceq \ttp$ are the
same as those in processes $\O$, $\R$ and $\ttp$,
respectively, and all traces begin with the initial valuation $v_0$.
We do not model the set of channels explicitly but reason against all
possible behaviors of unreliable channels.
We assume that every message at least includes the name
of the sender, is signed with the private key of the sender and 
encrypted with the public key of the recipient.

The following theorem states that the protocol model from 
Section~\ref{sec:fair-nonrep} and the protocol synthesis model presented
above are equivalent.
Let $A_0 = \O$, $A_1 = \R$ and $A_2 = \ttp$ be the most general 
participants with $A = \set{A_0, A_1, A_2}$ and variables 
$V_0 = X_\O$ for $A_0$, $V_1 = X_\R$ for $A_1$ 
and $V_2 = X_\ttp$ for $A_2$ as defined above.
%For $i \in \set{0,1,2}$, such that $i = 0$ when $j = \O$,
%$i = 1$ when $j = \R$ and, $i = 2$ when $j = \ttp$, for all
%$v \in \calf[V_i]$, let $\mas_i(v) = \mov_j(v)$.
It is then easy to show that,

\begin{theo}{\ (Trace equivalence of models)}
\label{theo-trace-equiv}
For all participant restrictions $A_i'$ and refinements 
$\O' \preceq \O$, $\R' \preceq \R$ and $\ttp' \preceq \ttp$, such that
$i \in \set{0, 1, 2}$ with $j = \O$ when $i = 0$, $j = \R$ when
$i = 1$ and $j = \ttp$ when $i = 2$, for all valuations
$v \in \calf[V_i]$, if $\mas_i'(v) = \mov_{j'}(v)$, then we have,
{\em Runs\/}$(\set{A_0',A_1',A_2'}) = 
\semb{\O' \para \R' \para \ttp' \para \Sc}$.
%
%\begin{compactenum}
%\item $\semb{A_0' \para A_1' \para A_2'} \subseteq 
%\semb{\O' \para \R' \para \ttp' \para \Sc}$, and
%\item $\semb{\O' \para \R' \para \ttp' \para \Sc} 
%\subseteq \semb{A_0' \para A_1' \para A_2'}$.
%\end{compactenum}
%
\end{theo}

We note that in Theorem~\ref{theo-trace-equiv}, when we say all 
restrictions $A_i'$ or all refinements $\O'$, $\R'$, and $\ttp'$, 
the most general participants are included 
(for example $\O'$ can be $\O$) and hence Theorem~\ref{theo-trace-equiv}
covers trace equivalence for all required cases.

\subsection{Failure of Classical and Weak Co-Synthesis}

In this subsection we show that classical co-synthesis fails 
while weak co-synthesis generates solutions that
are not attack-free and are hence unacceptable.
We first tackle classical co-synthesis.
In order to show failure of classical co-synthesis we need to
show that one of the following conditions:
\begin{enumerate}
\item $\semb{(\O' \para \R \para \ttp \para \Sc)} 
\subseteq \varphi_\O$;
\item $\semb{(\O \para \R' \para \ttp \para \Sc)}
\subseteq \varphi_\R$;
\item $\semb{(\O \para \R \para \ttp' \para \Sc)}
\subseteq \varphi_\ttp$,
\end{enumerate}
can be violated.
We show that for all refinements $\R'$ of the recipient $\R$, 
that is, for every sequence of moves ending in a move chosen by $\R'$, 
there exist moves for the other processes $\O$, $\ttp$ and $\Sc$,
and a behavior of the the channels, to extend that sequence such that 
the objective $\varphi_\R$ is violated.
Since $\R$ should satisfy his objective against all possible 
behaviors of the channels, to show failure of classical co-synthesis
it suffices to fix the behavior of all channels.
We assume the channels eventually deliver all messages.

\begin{theo}{\ (Classical co-synthesis fails for $\R$)} 
\label{classical-failure}
For all refinements $\R' \preceq \R$, the following assertion holds:
\[
\semb{\O \para \R' \para \ttp \para \Sc} \not \subseteq \varphi_\R.
\]
\end{theo}

\begin{proof}
We consider every valuation of the process 
variables and the set of all possible moves that can be selected by 
$\R$ at each valuation.
This defines all possible refinements of $\R$.
Since every valuation is the result of a finite sequence of
moves (messages) chosen (sent)  by the agents and the $\ttp$, it 
suffices to consider all possible finite sequences of messages 
received, ending in a message chosen by $\R$.
Let $\trace = (v_0, v_1, \ldots, v_n)$ be a finite sequence of 
valuations seen in a partial protocol run, where $v_0$ is the starting 
valuation.
Let $\sigma = \trace \obciach \moves = \seq{a_0, a_1, \ldots, a_{n-1}}$
be the corresponding sequence of $n$ moves seen in the run.
At the beginning of a protocol run, we have $\sigma = \emptyset$.
In the following, on a case by case basis, we show the sequence of 
moves seen in a partial protocol run, ending in a move chosen by 
$\R$, followed by moves for $\O$, $\ttp$ and $\Sc$ that leads 
to a violation of $\varphi_\R$.
\begin{enumerate}[]
\item R1: $\seq{m_1, \idle}$
  \begin{itemize}
  \item Whenever $Sc$ schedules $\O$, she chooses the idle action
    $\idle$.
    Since $\eoo$ is true, as long as $\O$ does not abort the protocol
    but chooses to remain idle, $\varphi_\R$ is violated.
  \item $\varphi_\R$ and $\varphi_\O$ are violated but $\varphi_\ttp$
    is satisfied.
  \end{itemize}
\item R2: $\seq{m_1, m_2}$
  \begin{itemize}
  \item $\Sc$ schedules $\O$; $\O$ sends $r_1^\O$ to $\ttp$;
  \item $\Sc$ schedules $\ttp$; $\ttp$ resolves the protocol for $\O$
    and sends $r_2^\O$;
  \item $\Sc$ schedules $\O$; $\O$ aborts the protocol by sending
    $a_1^\O$;
  \item $\Sc$ schedules $\ttp$; $\ttp$ sends $\comp{a_2^\O, a_2^\R}$ 
    with $\R$ having no option of obtaining $\O$'s signature;
  \item $\varphi_\R$ and $\varphi_\ttp$ are violated but $\varphi_\O$
    is satisfied.
  \end{itemize}
\item R3: $\seq{m_1, r_1^\R}$
  \begin{itemize}
  \item $\Sc$ schedules $\ttp$; $\ttp$ resolves and sends 
    $\comp{r_2^\O, r_2^\R}$;
  \item $\Sc$ schedules $\O$; $\O$ sends $a_1^\O$ to $\ttp$;
  \item $\Sc$ schedules $\ttp$; $\ttp$ sends $\comp{a_2^\O, a_2^\R}$;
  \item $\varphi_\R$, $\varphi_\ttp$ and $\varphi_\O$
    are violated.
  \end{itemize}
\item R4: $\seq{m_1, m_2, r_1^\R}$
  \begin{itemize}
  \item $\Sc$ schedules $\ttp$; $\ttp$ resolves and sends 
    $\comp{r_2^\O, r_2^\R}$;
  \item $\Sc$ schedules $\O$; $\O$ sends $a_1^\O$ to $\ttp$;
  \item $\Sc$ schedules $\ttp$; $\ttp$ sends $a_2^\R$;
  \item $\varphi_\R$ and $\varphi_\ttp$ are violated but $\varphi_\O$
    is satisfied.
  \end{itemize}
\end{enumerate}
\qed
\end{proof}

It is easy to verify that the sequences in the proof are exhaustive.
From the agent interface automaton shown in Figure~\ref{figure:agents}
we can extract all the partial sequences of moves ending in a move
of $\R$ and similarly for $\O$.
In all of the above cases, $\varphi_\R$ is violated.
In all of the above cases $\varphi_\R \wedge \varphi_\ttp$ is
also violated.
This shows that for all counter moves of $\O$ and the $\ttp$, 
violation of the specification of $\R$ also violates the 
specification of $\O$ or the $\ttp$.
Since $\O$ and the $\ttp$ co-operate, $\O$ never sends $m_3$,
instead choosing to use the $\ttp$ to get her non-repudiation 
evidence while denying $\R$ the ability to get his evidence.

The following example illustrates that given our objectives,
given a reasonable $\ttp$ as defined in Section~\ref{sec:fair-nonrep},
weak co-synthesis yields solutions that are not attack-free and
are hence unacceptable.

\begin{example}{\bf{(Weak co-synthesis generates unacceptable 
solutions)}}
Consider a refinement $\O'$, $\R'$ and $\ttp'$, that generates
the following sequence of messages:
$\seq{m_1, m_2, r_1^\O, r_2^\O, r_1^\R, r_2^\R}$; the agents
send $m_1$ and $m_2$ and then resolve the protocol individually.
We assume that $\ttp'$ needs both $m_1$ and $m_2$ to resolve the
protocol for either $\O$ or $\R$.
The trace corresponding to this sequence satisfies weak 
co-synthesis, but then this behavior of the $\ttp$, that assumes 
co-operative agent behavior, is not attack-free.
Taking $Y = \set{\R}$, consider the following $Y$-attack
where $\R$ exploits the fact that a reasonable 
$\ttp$ responds with $r_2^\R$ when she receives $r_1^\R$.
If $\R$ sends a resolve request immediately after receiving $m_1$
we get the message sequence $\seq{m_1, r_1^\R, r_2^\R}$.
In this case $\varphi_\R$ is satisfied, but $\varphi_\O$ and 
$\varphi_\ttp$ are violated.
The only way to satisfy $\varphi_\O$ and $\varphi_\ttp$ is if 
$\O'$ sends $r_1^\O$, which she cannot do, as she does not know
the contents of $m_2$.
This is an attack on the ASW certified mail protocol 
that compromises fairness for $\O$ \cite{KremerR03}.
Similarly, there exists a $Y$-attack for $Y = \set{\O, \R}$ as follows:
after resolving the protocol, if $\O$ decides to send $m_3$ and
$\R$ responds with $m_4$, we get the following
message sequence: $\seq{m_1, m_2, r_1^\O, r_2^\O, m_3, m_4}$.
In this case, the objectives $\varphi_\O$ and 
$\varphi_\R$ are satisfied but the objective $\varphi_\ttp$ 
is violated; a reasonable $\ttp$ will only send messages in
response to abort and resolve requests and thus needs $r_1^\R$ 
to satisfy $\varphi_\ttp$.
Therefore, solutions that satisfy weak co-synthesis may not be 
attack-free.
\qed
\end{example}

\subsection{The Need for a $\ttp$}

We now provide a justification of the need for a $\ttp$ in fair 
non-repudiation protocols, given our synthesis objective.
While this follows from \cite{Yacobi80,Pagnia99}, our proof gives
an alternative game-theoretic proof through synthesis.
We present the following theorem which shows that if we remove 
the $\ttp$, then both classical and assume-guarantee synthesis fail to 
synthesize a fair non-repudiation protocol.

\begin{theo}{\ (Classical and assume-guarantee synthesis fail without the $\ttp$)} 
\label{theo-no-ttp}
For all refinements $\O' \preceq \O$, the following assertions hold:
\begin{enumerate}
\item {\em Classical co-synthesis fails\/}:
$\semb{\O' \para \R \para \Sc} \not \subseteq \varphi_\O$.
\item {\em Assume-guarantee synthesis fails\/}:
\begin{enumerate}
\item $\semb{\O' \para \R \para \Sc} \not \subseteq 
(\varphi_\R \im \varphi_\O)$ or,
\item
$\semb{\O' \para \R \para \Sc} \subseteq 
(\varphi_\R \im \varphi_\O)$;
$\semb{\R' \para \O \para \Sc} \subseteq 
(\varphi_\O \im \varphi_\R)$; and \\
$\semb{\O' \para \R' \para \Sc} \not \subseteq (\varphi_\O \wedge
\varphi_\R)$.
\end{enumerate}
\end{enumerate}
\end{theo}

\begin{proof}
We note that as the $\ttp$ is not involved, $\abo$, $\abr$, $\siga^\ttp$
and $\sigb^\ttp$ are always false.
The agent objectives then simplify to,
\begin{align*}
\varphi_\O = \diam M_1 \wedge \diam \sigb^\R; \qquad \qquad
\varphi_\R = \bo (\eoo \im \diam \siga^\O) \eqpun .
\end{align*}
For assertion 1, consider an arbitrary refinement $\O' \preceq \O$.
We show a witness trace in 
$\semb{\O' \para \R \para \Sc}$ that violates $\varphi_\O$.
If $\O'$ does not send $m_1$ in the initial protocol state $v_0$, 
then we have a witness trace that trivially violates $\varphi_\O$ and
hence $\semb{\O' \para \R \para \Sc} \not \subseteq \varphi_\O$.
Assume $\O'$ sends $m_1$ and the channel between $\O$ and $\R$ eventually 
delivers all messages.
Consider a partial trace ending in protocol state 
$\set{M_1, \eoo, M_2, \eor}$;
messages $m_1$ and $m_2$ have been received.
The only choice of moves for $\O'$ in this state of the protocol
are $\idle$ or $m_3$.
If $\O'$ chooses $\idle$, then the trace does not satisfy 
$\varphi_\O$ and hence
$\semb{\O' \para \R \para \Sc} \not \subseteq \varphi_\O$.
If $\O'$ chooses $m_3$ and upon receiving $m_3$ if $\R$ decides
to stop participating in the protocol by choosing $\idle$, then
the trace satisfies $\varphi_\R$ but violates $\varphi_\O$ and hence
$\semb{\O' \para \R \para \Sc} \not\subseteq \varphi_\O$.

For assertion 2, consider an arbitrary refinement $\O' \preceq \O$.
If $\O'$ does not send $m_1$ in the initial protocol state $v_0$,
we have a witness trace that trivially violates $\varphi_\O$ but 
satisfies $\varphi_\R$.
Therefore, the trace does not satisfy $\varphi_\R \im \varphi_\O$
and
$\semb{\O' \para \R \para \Sc} \not\subseteq (\varphi_\R \im 
\varphi_\O)$.
Assume the channels eventually deliver all messages and as in
the proof of assertion 1, consider a partial trace ending in protocol 
state $\set{M_1, \eoo, M_2, \eor}$.
To produce a witness trace we have the following cases based on
the move chosen by $\O'$:
\begin{itemize}
\item {\em Case 1. $\O'$ chooses $\idle$.\/}
Since $\O'$ chooses $\idle$, she does not send her signature
$\siga^\O$.
Therefore, the trace does not satisfy $\varphi_\R$.
Since $\R$ sends $m_4$ only in response to $m_3$, $\O$ does
not get $\sigb^\R$ from $\R$ in this case.
Therefore, the trace does not satisfy $\varphi_\O$ either and hence 
satisfies $\varphi_\O \im \varphi_\R$ and $\varphi_\R \im \varphi_\O$ 
but does not satisfy $\varphi_\O \wedge \varphi_\R$.
This leads to,
$\semb{\O' \para \R \para \Sc} \subseteq (\varphi_\O \im \varphi_\R)$
and 
$\semb{\O' \para \R \para \Sc} \subseteq (\varphi_\R \im \varphi_\O)$
but 
$\semb{\O' \para \R \para \Sc} \not\subseteq (\varphi_\O \wedge 
\varphi_\R)$
\item {\em Case 2. $\O'$ chooses $m_3$.\/}
Since $m_3$ is eventually delivered, $\R$ gets his 
non-repudiation evidence and the trace satisfies $\varphi_\R$.
If $\R$ now stops participating in the protocol and chooses the
idle move $\idle$ instead of sending $m_4$, then $\O$ does not
get her non-repudiation evidence and the trace does not satisfy
$\varphi_\O$.
We therefore have a witness trace that does not satisfy
$\varphi_\R \im \varphi_\O$.
This leads to,
$\semb{\O' \para \R \para \Sc} \not\subseteq (\varphi_\R \im 
\varphi_\O)$
\end{itemize}
\qed
\end{proof}

If the agents co-operate, then a refinement $\O' \preceq \O$ that
sends $m_1$ and then $m_3$ upon receiving $m_2$ and similarly
a refinement $\R' \preceq \R$ that sends $m_2$ and $m_4$ upon 
receiving $m_1$ and $m_3$ respectively, is a solution to the
weak co-synthesis problem.
The sequence of messages in this case is precisely $\seq{m_1, m_2,
m_3, m_4}$ which is the main protocol in all the fair exchange
protocols we have studied.
The problem arises when either $\O$ or $\R$ are dishonest and
try to cheat the other agent.

\subsection{Assume-guarantee Solutions are Attack-Free}

In this subsection we show that assume-guarantee solutions are
attack free; no coalition of participants can violate the objective
of at least one of the other participants while satisfying their
own objectives.
Let $P' = (\O', \R', \ttp')$ be a tuple of refinements of the agents
and the $\ttp$.
For two refinements $P' = (\O', \R', \ttp')$ and 
$P'' = (\O'', \R'', \ttp'')$, we write $P' \preceq P''$ if 
$\O' \preceq \O''$, $\R' \preceq \R''$ and $\ttp' \preceq \ttp''$.
Given $P = (\O, \R, \ttp)$, the most general behaviors of
the agents and the $\ttp$, let $P_{AGS}$ be the set of all possible 
refinements $P' \preceq P$ that satisfy the conditions of 
assume-guarantee synthesis. 
For a refinement $P' = (\O', \R', \ttp')$ to be in $P_{AGS}$, we 
require that the refinements $\O' \preceq \O$, $\R' \preceq \R$ and 
$\ttp' \preceq \ttp$ satisfy the following conditions:

For all fair schedulers $\Sc$, for all possible behaviors of
the channels,
\begin{enumerate}
\item $\semb{(\O' \para \R \para \ttp \para \Sc)}
\subseteq (\varphi_\R \wedge \varphi_\ttp) \im \varphi_\O$;
\item $\semb{(\O \para \R' \para \ttp \para \Sc)}
\subseteq (\varphi_\O \wedge \varphi_\ttp) \im \varphi_\R$;
\item $\semb{(\O \para \R \para \ttp' \para \Sc)}
\subseteq (\varphi_\O \wedge \varphi_\R) \im \varphi_\ttp$;
\item $\semb{(\O' \para \R' \para \ttp' \para \Sc)} 
\subseteq (\varphi_\O \wedge \varphi_\R \wedge \varphi_\ttp)$.
\end{enumerate}
We now characterize the smallest restriction on the refinements
$\ttp' \preceq \ttp$ that satisfy the implication condition,
\begin{align}
\semb{(\O \para \R \para \ttp' \para \Sc)}
\subseteq (\varphi_\O \wedge \varphi_\R) \im \varphi_\ttp
\eqpun .
\label{ttp-implication}
\end{align}
In order to characterize the smallest restriction on $\ttp'$ we first
define the following constraints on the $\ttp$ and prove 
that they are both necessary and sufficient to satisfy 
(\ref{ttp-implication}).

\medskip\noindent{\bf AGS constraints on the $\ttp$.}
We say that a refinement $\ttp' \preceq \ttp$ satisfies the 
{\em AGS constraints on the $\ttp$}, if $\ttp'$ satisfies the
the following constraints:
\begin{enumerate}
\item {\em Abort constraint.\/}
If the first request received by the $\ttp$ is an abort
request, then her response to that request should be 
$\comp{a_2^\O, a_2^\R}$;
\item {\em Resolve constraint.\/}
If the first request received by the $\ttp$ is a resolve
request, then her response to that request should be 
$\comp{r_2^\O, r_2^\R}$;
\item {\em Accountability constraint.\/}
If the first response from the $\ttp$ is 
$\comp{x, y}$, then for all subsequent abort or resolve requests
her response should be in the set $\set{\idle, x, y, \comp{x, y}}$.
\end{enumerate}
We assume a reasonable $\ttp$, as defined in 
Section~\ref{sec:fair-nonrep}; 
in particular she only responds to abort or resolve requests.
In the following lemma, in assertion 1 we show that for all 
refinements $\ttp' \preceq \ttp$ that satisfy the AGS constraints on 
the $\ttp$, we have $\ttp'$ is inviolable , i.e., neither agent can
violate the objective $\varphi_\ttp$, and hence satisfies the
implication condition (\ref{ttp-implication}); in assertion (2) we
show that if $\ttp'$ does not satisfy the AGS constraints on the
$\ttp$, the implication condition (\ref{ttp-implication}) is not 
satisfied.

\begin{lem}{} \label{lem-ttp-inviol}
For all refinements $\ttp' \preceq \ttp$, the following assertions
hold:
\begin{enumerate}
\item if $\ttp'$ satisfies the AGS constraints on the $\ttp$, then
\[
\semb{\O \para \R \para \ttp' \para \Sc} \subseteq \varphi_\ttp
\subseteq (\varphi_\O \wedge \varphi_\R) \im \varphi_\ttp.
\]
\item if $\ttp'$ does not satisfy the AGS constraints on the $\ttp$,
then
\[
\semb{\O \para \R \para \ttp' \para \Sc} \not\subseteq
(\varphi_\O \wedge \varphi_\R) \im \varphi_\ttp.
\]
\end{enumerate}
\end{lem}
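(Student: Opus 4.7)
The statement is an ``if and only if'' characterization of the AGS constraints on the $\ttp$ as exactly what is needed to secure $\varphi_\ttp$ under $(\varphi_\O \wedge \varphi_\R) \im \varphi_\ttp$. My plan is to handle the two assertions separately, exploiting that $\varphi_\ttp = \varphi_\ttp^1 \wedge \cdots \wedge \varphi_\ttp^5$ so each assertion reduces to a conjunct-by-conjunct analysis, and that the reasonable-$\ttp$ assumption plus the FIFO processing of requests pins down the $\ttp$'s behavior to a small number of response patterns.

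\textbf{Assertion~1.} I would fix a refinement $\ttp'$ satisfying the three AGS constraints and an arbitrary trace $\pi \in \semb{\O \para \R \para \ttp' \para \Sc}$, and argue $\pi \models \varphi_\ttp$. Since a reasonable $\ttp$ only acts after receiving a request, I split on what the first request is. If it is an abort request, then by the abort constraint $\ttp'$ responds with $\comp{a_2^\O, a_2^\R}$, so both $\abo$ and $\abr$ eventually hold; by the accountability constraint every later response lies in $\set{\idle, a_2^\O, a_2^\R, \comp{a_2^\O, a_2^\R}}$, so $\siga^\ttp, \sigb^\ttp$ remain false throughout. This makes $\varphi_\ttp^1$ hold via the $\diam \abo$ and $\diam \abr$ disjuncts, $\varphi_\ttp^2, \varphi_\ttp^3$ hold vacuously (their premises never become true), and $\varphi_\ttp^4, \varphi_\ttp^5$ hold because both abort tokens appear and neither TTP signature ever does. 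The first-request-is-resolve case is symmetric, with roles of the abort/resolve constraints swapped: the response is $\comp{r_2^\O, r_2^\R}$, monotonicity of the predicates and the accountability constraint keep us within $\set{\idle, r_2^\O, r_2^\R, \comp{r_2^\O, r_2^\R}}$ thereafter, and all five conjuncts again hold. The final implication $\varphi_\ttp \subseteq (\varphi_\O \wedge \varphi_\R) \im \varphi_\ttp$ is logical.

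\textbf{Assertion~2.} For the converse I would exhibit a witness trace for each of the three constraints. If the abort constraint is violated, then on some trace the first response to an abort request sends only one of $a_2^\O, a_2^\R$ (or an idle/signature response), which I can pair with an $\R$ that never obtains $\eoo$ and an $\O$ that has already gotten $\sigb^\R$, so $\varphi_\O \wedge \varphi_\R$ holds while $\varphi_\ttp^4$ (or $\varphi_\ttp^1$) fails because $\diam \abr$ or $\diam \abo$ is missing. The resolve-constraint case is symmetric and breaks $\varphi_\ttp^2$ or $\varphi_\ttp^3$ by issuing a single signature but not its dual. For the accountability constraint I pick a trace whose first $\ttp$ response is $\comp{a_2^\O, a_2^\R}$ and then, on a subsequent resolve request, the $\ttp$ issues a signature $r_2^\O$ or $r_2^\R$ outside the allowed set; schedule $\O$ and $\R$ to keep their own objectives satisfied (e.g. one party already has its evidence via the abort tokens combined with the main subprotocol), and observe that $\varphi_\ttp^4$ is falsified because $\abo$ holds but $\siga^\ttp$ or $\sigb^\ttp$ also becomes true.

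\textbf{Main obstacle.} The genuinely delicate part is Assertion~2: for each constraint violation I must simultaneously (i) keep the witness trace consistent with the reasonable-$\ttp$ and FIFO assumptions, (ii) route the unreliable agent-to-agent and resilient agent-to-$\ttp$ channels so that the fair scheduler can actually realize the trace, and (iii) choose agent behaviors making $\varphi_\O \wedge \varphi_\R$ true so that the conclusion $\varphi_\ttp$ is the \emph{only} thing that can fail. Getting the message interleavings and channel schedules right in each of the three cases -- and noticing which conjunct of $\varphi_\ttp$ is broken -- is where the care is needed; the rest is bookkeeping on monotonic predicates.
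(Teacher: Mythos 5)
Your overall strategy is the paper's: for Assertion~1, a case split on the first request received by a reasonable $\ttp$, using the abort/resolve constraints to fix the first response and the accountability constraint to confine all later ones, then checking the five conjuncts of $\varphi_\ttp$; for Assertion~2, one witness trace per violated constraint in which the agents satisfy $\varphi_\O \wedge \varphi_\R$ while some $\varphi_\ttp^i$ fails. Assertion~1 and your accountability- and resolve-constraint witnesses match the paper's proof in substance (the paper additionally records the trivial case where no request ever arrives).

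There is, however, a concrete defect in your abort-constraint witness. You propose to pair the bad $\ttp$ response with ``an $\R$ that never obtains $\eoo$ and an $\O$ that has already gotten $\sigb^\R$.'' That trace does not exist: $\sigb^\R$ is set only when $\O$ receives $m_4$, which $\R$ sends only after receiving $m_3$, which in turn presupposes $m_1$ was received, so $\sigb^\R$ entails $\eoo$ under the imposed message order. If instead you drop the $\sigb^\R$ half and keep $\R$ silent so that $\varphi_\R$ holds vacuously, the construction still fails in the sub-case where the violating $\ttp'$ answers the abort request with $\idle$ forever: then $\O$ gets neither $\sigb^\R$, nor $\sigb^\ttp$, nor $\abo$, so $\varphi_\O$ is also violated and the implication $(\varphi_\O \wedge \varphi_\R) \im \varphi_\ttp$ is vacuously true --- no witness. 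The paper's witness avoids both problems: $\O$ sends $m_1$ and then $a_1^\O$ before $m_1$ is delivered, and afterwards the agents simply complete the main exchange $m_2, m_3, m_4$. This satisfies $\varphi_\O$ via the disjunct $\diam \sigb^\R$ (which, by the paper's reading of the objective, tolerates a co-occurring $\abo$) and $\varphi_\R$ via $\diam \siga^\O$, while the $\ttp$'s response of $\idle$ breaks $\varphi_\ttp^1$ and a response of $a_2^\O$ alone breaks $\varphi_\ttp^4$. This is exactly the channel-and-scheduling care you flag as the main obstacle, and it is the one place your plan as written would not go through.
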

\begin{proof}
For assertion 1, consider an arbitrary $\ttp' \preceq \ttp$
that satisfies the AGS constraints on the $\ttp$.
We consider the following cases of sets of traces of
$\semb{\O \para \R \para \ttp' \para \Sc}$ for the proof:
\begin{itemize}
\item {\em Case 1. Neither agent aborts nor resolves the protocol.\/}
In these traces, since the $\ttp$ is neither sent an abort nor a
resolve request, $\varphi_\ttp$ is satisfied trivially.
Therefore, all these traces satisfy 
$(\varphi_\O \wedge \varphi_\R) \im \varphi_\ttp$.
\item {\em Case 2. The first request to the $\ttp$ is an abort
request.\/}
For the set of traces where the first request to the $\ttp$ is
an abort request, given $\ttp'$ satisfies the AGS constraints on the 
$\ttp$, by the abort constraint, the response of the $\ttp$ to this 
request is $\comp{a_2^\O, a_2^\R}$.
For all subsequent abort or resolve requests, by the
accountability constraint, the $\ttp$ responds with a move in 
set $\set{\idle, a_2^\O, a_2^\R, \comp{a_2^\O, a_2^\R}}$.
This implies that both agents get the abort token and neither agent
gets non-repudiation evidences.
Therefore, $\varphi_\ttp$ is satisfied for all these traces and 
hence
$(\varphi_\O \wedge \varphi_\R) \im \varphi_\ttp$ is also satisfied.
\item {\em Case 3. The first request to the $\ttp$ is a resolve
request.\/}
Similar to the proof of Case 2, in the set of traces where the
first request to the $\ttp$ is a resolve request, by the resolve
constraint, the $\ttp$ responds to this request with move 
$\comp{r_2^\O, r_2^\R}$.
Since the response of the $\ttp$ to all subsequent abort or
resolve requests is in the set $\set{\idle, r_2^\O, r_2^\R, 
\comp{r_2^\O, r_2^\R}}$, by the accountability constraint, the agents 
get their non-repudiation evidences and neither gets the abort token.
Therefore, $\varphi_\ttp$ is satisfied for all these traces and hence
$(\varphi_\O \wedge \varphi_\R) \im \varphi_\ttp$ is also satisfied
and the result follows.
\end{itemize}

For assertion 2, consider an arbitrary $\ttp' \preceq \ttp$
that does not satisfy the AGS constraints on the $\ttp$.
We assume a reasonable $\ttp$ and consider violation of the AGS 
constraints on the $\ttp$ on a case by case basis.
For each case we produce a witness trace that violates the 
implication condition 
$(\varphi_\O \wedge \varphi_\R) \im \varphi_\ttp$.
We proceed as follows:
\begin{itemize}
\item {\em Case 1. The abort constraint is violated.\/}
To produce a witness trace we consider a partial trace that ends
in protocol state $\set{M_1, \reqa^\O}$; $\O$ requests the $\ttp$ to
abort the protocol after sending message $m_1$ but before it is
received.
Since $\ttp'$ violates the abort constraint, the only choice of 
moves for $\ttp'$ are $\idle$ or $a_2^\O$.
This leads to the following cases:
\begin{itemize}
\item {\em Case (a). $\ttp'$ chooses $\idle$.\/}
It is trivially the case that $\varphi_\ttp$ is violated for this
trace as $\varphi_\ttp^1$ is violated.
At this stage in the protocol, there exists a behavior of $\O$,
$\R$ and the channel between $\O$ and $\R$, where the channel
delivers all messages and the agents co-operate and complete the
protocol by exchanging their signatures.
Therefore, $\varphi_\O \wedge \varphi_\R$ is satisfied but
$\varphi_\ttp$ is violated.
Therefore, the trace does not satisfy 
$(\varphi_\O \wedge \varphi_\R) \im \varphi_\ttp$.
\item {\em Case (b). $\ttp'$ chooses $a_2^\O$.\/}
Since the channel between the agents and the $\ttp$ is resilient,
$\O$ eventually receives her abort token $\abo$.
At this stage in the protocol, there exists a 
behavior of $\O$, $\R$ and the channel between $\O$ and $\R$ such
that the channel delivers all messages and the agents exchange
their signatures, leading to the satisfaction of 
$\varphi_\O \wedge \varphi_\R$ but a violation of 
$\varphi_\ttp^4$ and hence $\varphi_\ttp$.
Therefore, the trace does not satisfy 
$(\varphi_\O \wedge \varphi_\R) \im \varphi_\ttp$.
\end{itemize}
\item {\em Case 2. The resolve constraint is violated.\/}
To produce a witness trace we consider a partial trace that ends
in protocol state $\set{M_1, \eoo, M_2, \eor, \reqr^\O}$; $\O$
resolves the protocol after messages $m_1$ and $m_2$ have been 
received.
Since $\ttp'$ violates the resolve constraint, the only choice of
moves for $\ttp'$ are $\idle$ or $r_2^\O$.
An argument similar to the argument for cases 1(a) and 1(b) again
leads to the satisfaction of $\varphi_\O \wedge \varphi_\R$ but
a violation of $\varphi_\ttp$.
\item {\em Case 3. The accountability constraint is violated.\/}
To produce a witness trace we consider a partial trace that ends
in protocol state $\set{M_1, \eoo, M_2, \eor, \reqa^\O, \reqr^\R,
A_2^\O, A_2^\R, \abo, \abr}$; 
$\O$ aborts the protocol and $\R$ resolves the protocol 
after messages $m_1$ and $m_2$ have been received.
The $\ttp$ receives the abort request before the resolve request
and aborts the protocol by sending $\comp{a_2^\O, a_2^\R}$.
Since $\ttp'$ violates the accountability constraint, the only 
choice of moves for $\ttp'$ to the resolve request from $\R$ are 
$r_2^\R$ or $\comp{r_2^\O, r_2^\R}$.
The leads to the following cases:
\begin{itemize}
\item {\em Case (a). $\ttp'$ chooses $r_2^\R$.\/}
This violates $\varphi_\ttp^4$ and $\varphi_\ttp^5$ and hence
violates $\varphi_\ttp$.
At this stage in the protocol, there exists a behavior of $\O$,
$\R$ and the channel between $\O$ and $\R$ such that the agents
exchange their signatures and complete the protocol thus satisfying 
$\varphi_\O \wedge \varphi_\R$.
Therefore, this trace does not satisfy the implication condition
$(\varphi_\O \wedge \varphi_\R) \im \varphi_\ttp$.
\item {\em Case (b). $\ttp'$ chooses $\comp{r_2^\O, r_2^\R}$.\/}
This violates $\varphi_\ttp^4$ and $\varphi_\ttp^5$ and hence
violates $\varphi_\ttp$.
An argument similar to Case 2(a) leads to a violation of 
$(\varphi_\O \wedge \varphi_\R) \im \varphi_\ttp$ for this trace.
\end{itemize}
As we have shown witness traces that do not satisfy the implication
condition $(\varphi_\O \wedge \varphi_\R) \im \varphi_\ttp$ when
$\ttp'$ violates any of the AGS constraints on the $\ttp$, the
result follows.
\end{itemize}
\qed
\end{proof}

In the following theorem we show that all refinements $P' \in
P_{AGS}$ are attack-free; no subset of participants can violate the
objective of at least one of the other participants while satisfying
their own objectives.

\begin{theo}{}
\label{theo-ags-attack-free}
All refinements $P' \in P_{AGS}$ are attack-free.
\end{theo}

\begin{proof}
We show that for all refinements $P' \in P_{AGS}$ there exists
no $Y$-attack for all $Y \subseteq \set{\O, \R, \ttp}$.
Let $P' = (\O', \R', \ttp')$ and $A = \set{\O, \R, \ttp}$ be the
set of participants.
We have the following cases:
\begin{itemize}
\item {\em Case 1. $|Y| = 0$.\/}
In this case $Y = \emptyset$ and 
$(A \setm Y)' = \set{\O', \R', \ttp'}$.
Since $(A \setm Y)'$ are the refinements in $P'$ which is in $P_{AGS}$, 
by the weak co-synthesis condition, the objectives
$\varphi_\O$, $\varphi_\R$ and $\varphi_\ttp$ are satisfied.
Therefore there is no $Y$-attack in this case.
\item {\em Case 2. $|Y| = 1$.\/}
We first show that there is no $Y$-attack for $Y = \set{\O}$.
The case of $Y = \set{\R}$ is similar.
By Lemma~\ref{lem-ttp-inviol} (assertion 2), for all refinements 
$P' \in P_{AGS}$, the refinement $\ttp'$ must satisfy the AGS 
constraints on the $\ttp$.
This implies, by Lemma~\ref{lem-ttp-inviol} (assertion 1), neither 
$\O$ nor $\R$ can violate $\varphi_\ttp$.
Since $\varphi_\ttp$ cannot be violated, a $Y$-attack in this
case must generate a trace where $\varphi_\R$ is violated but 
$\varphi_\O$ is satisfied.
But this violates the implication condition,
$\varphi_\O \wedge \varphi_\ttp \im \varphi_\R$,
contradicting the assumption that $P' \in P_{AGS}$.
We now show that there is no $Y$-attack for $Y = \set{\ttp}$.
Since we assume the $\ttp$ is reasonable, in all traces where 
neither agent sends an abort nor a resolve request to the $\ttp$, 
the $\ttp$ cannot violate the agent objectives.
In all traces where the first request from the agents is an abort 
request, given a reasonable $\ttp$, since the trace satisfies
$\varphi_\ttp$, it must be the case that the response to that request 
is $\comp{a_2^\O, a_2^\R}$.
Similarly, for resolve requests.
If the first response of the $\ttp$ is $\comp{x, y}$, then the only 
responses that satisfy $\varphi_\ttp$, to all subsequent abort and
resolve requests, are in the set $\set{\idle, x, y, \comp{x, y}}$.
This implies that either the agents get abort tokens or 
non-repudiation evidences but never both, which implies 
$\varphi_\O$ and $\varphi_\R$ are satisfied in all these traces.
Therefore there is no $Y$-attack in this case as well.
\item {\em Case 3. $|Y| = 2$.\/}
Since $P' \in P_{AGS}$, by the implication conditions of 
assume-guarantee synthesis, there cannot be a $Y$-attack where
$|Y| = 2$.
\item {\em Case 4. $|Y| = 3$.\/}
It is trivially the case that there is no $Y$-attack as 
$(A \setm Y)' = \emptyset$.
\end{itemize}
Since we have shown that for all refinements $P' \in P_{AGS}$,
for all $Y \subseteq A$, there is no $Y$-attack in $P'$, we conclude
that all refinements in $P_{AGS}$ are attack-free.
\qed
\end{proof}

We now present the following theorem that establishes conditions 
for any refinement in $P_{AGS}$ to be an attack-free fair 
non-repudiation protocol.

\begin{theo}{ (Fair non-repudiation protocols)} 
\label{theo-fair-nonrep}
For all refinements $P' \in P_{AGS}$, if 
$\semb{\O' \para \R' \para \ttp' \para \Sc} \cap
(\diam \nro \wedge \diam \nrr) \ne \emptyset$, then
$P'$ is an attack-free fair non-repudiation protocol.
\end{theo}

\begin{proof}
Consider an arbitrary refinement $P' = (\O', \R', \ttp') \in P_{AGS}$.
Since $P' \in P_{AGS}$, by Theorem~\ref{theo-ags-attack-free}, 
it is attack-free.
Further, by the weak co-synthesis condition, we have
$\semb{\O' \para \R' \para \ttp' \para \Sc} \subseteq 
(\varphi_\O \wedge \varphi_\R \wedge \varphi_\ttp)$ and hence
by Theorem~\ref{objectives-imply-ft},
we have $\semb{\O' \para \R' \para \ttp' \para \Sc} \subseteq 
\varphi_f$.
Thus $P'$ satisfies fairness.
Using PCS, that provides the designated verifier property, to 
encrypt all messages, we ensure that the protocol is abuse-free.
Since $\semb{\O' \para \R' \para \ttp' \para \Sc} \cap
(\diam \nro \wedge \diam \nrr) \ne \emptyset$, the refinement $P'$
enables an exchange of signatures and hence is an exchange protocol.
Given $\nro$ and $\nrr$ are non-repudiation evidences for $R$ and $\O$
respectively, we conclude that $P'$ is an attack-free fair 
non-repudiation protocol.
\qed
\end{proof}

\subsection{Analysis of Existing Fair Non-repudiation Protocols 
as $P_{AGS}$ Solutions}

In this subsection we analyze existing fair non-repudiation 
protocols and check if they are solutions to assume-guarantee
synthesis.
To facilitate the analysis, we first present an alternate 
characterization of the set $P_{AGS}$ of assume-guarantee 
refinements.
We then show that the KM non-repudiation protocol with offline $\ttp$ 
is in $P_{AGS}$ whereas the ASW certified mail protocol and the GJM 
protocol are not.
Finally, we present a systematic exploration of refinements leading
to the KM protocol.
Towards an alternate characterization of $P_{AGS}$, we begin
by defining constraints on $\O$, similar to the AGS constraints on 
the $\ttp$ that ensure satisfaction of the implication condition
for $\O$.
We then define maximal and minimal refinements that satisfy all the
implication conditions of assume-guarantee synthesis and introduce a
{\em bounded idle time\/} requirement to ensure satisfaction
of weak co-synthesis.

\medskip\noindent{\bf AGS constraints on $\O$.}
Given $P = (\O, \R, \ttp)$, the most general behaviors of the
agents and the $\ttp$, we say a refinement $P' \preceq P$ satisfies
the {\em AGS constraints on $\O$\/}, if the following
conditions hold:
\begin{enumerate}
\item $a_1^\O \not\in \mov_{\O'}(v_0)$;
\item $\siga^\O \not\in \mov_{\O'}(\set{M_1, \eor, \reqa^\O})$; and
\item $a_1^\O \not\in \mov_{\O'}(\set{M_1, \eor, M_3})$.
\end{enumerate}
In the Appendix, we show that these constraints are both necessary 
and sufficient restrictions on the moves of $\O$ that satisfy 
the implication condition
$(\varphi_\R \wedge \varphi_\ttp) \im \varphi_\O$ of assume-guarantee
synthesis.
We also show that all refinements $\R' \preceq \R$ satisfy the
implication condition
$(\varphi_\O \wedge \varphi_\ttp) \im \varphi_\R$ of assume-guarantee
synthesis.

\medskip\noindent{\bf The maximal refinement $\pmax$.}
We define the maximal refinement 
$\pmax = (\omax, \rmax, \tmax)$ as follows:
\begin{enumerate}
\item $\omax \preceq \O$ satisfies the AGS constraints on $\O$
and for all $\O'$ that satisfy the constraints, we have
$O' \preceq \omax$;
\item $\rmax = \R$; and
\item $\tmax \preceq \ttp$ satisfies the AGS constraints on
the $\ttp$ and for all $\ttp'$ that satisfy the constraints,
we have $\ttp' \preceq \tmax$.
\end{enumerate}

We show in the Appendix the correspondence between $\pmax$ and
the smallest restriction on the moves of $\O$ and the $\ttp$
so that $\pmax$ is a witness to $P_{AGS}$.
While there are restrictions on $\O$ and the $\ttp$, there are
no restrictions on $\R$.

\medskip\noindent{\bf The minimal refinement $\pmin$.}
We present the smallest refinement $\pmin = (\omin, \rmin, \tmin)$ 
in $P_{AGS}$, as the largest restriction on the
moves of $\O$, $\R$ and the $\ttp$, as follows:
\begin{enumerate}
\item $\pmin \preceq \pmax$; 
\item $\moves_{\omin} = \set{m_1, a_1^\O}$;
\item $\moves_{\rmin} = \set{\idle}$;
\item $\omin$ satisfies the AGS constraints on $\O$; and
\item $\tmin$ satisfies the AGS constraints on the $\ttp$.
\end{enumerate}
If $m_1 \not\in \moves_{\omin}$, then $\varphi_\O$ cannot be
satisfied as $\omin$ does have the ability to initiate a protocol
instance.
If $a_1^\O \not\in \moves_{\omin}$, then $\varphi_\O$ cannot be
satisfied whether or not $m_1$ is delivered, as $\rmin$ has no choice of
moves other than $\idle$.
If $\omin$ does not satisfy the AGS constraints on $\O$ and sends
$a_1^\O$ in the initial state of the protocol $v_0$, then the
resulting trace trivially violates $\varphi_\O$ while satisfying
$\varphi_\R \wedge \varphi_\ttp$.

\medskip\noindent{\bf The bounded idle time requirement.\/}
We say that a refinement $P'$ satisfies {\em bounded idle time\/}
if $\O$ and the $\ttp$ in $P'$ choose the idle move $\idle$, 
when scheduled by $\Sc$, at most $b$ times for a finite $b \in \nats$.
We prove that satisfaction of the bounded idle time requirement is both 
necessary and sufficient to ensure satisfaction of the weak co-synthesis 
condition of assume-guarantee synthesis, for all refinements that
satisfy the AGS constraints on the $\ttp$ and the AGS constraints
on $\O$, in the Appendix.

\medskip\noindent{\bf Alternate characterization of $P_{AGS}$.}
We now use $\pmin$ and $\pmax$ to provide an alternate 
characterization of the set $P_{AGS}$.
We first define the following set of refinements $\pb$:
\begin{align*}
\pb = \set{& P' = (\O', \R', \ttp') \ \mid \ P' 
\text{ satisfies bounded idle time}; \pmin \preceq P' \preceq \pmax;\\
& \ttp' \text{ satisfies the AGS constraints on the \ttp}}
\eqpun .
\end{align*}
The following lemma states that the set $\pb$ and the set $P_{AGS}$
coincide.
We present the lemma here and prove it in the Appendix.

\begin{lem}{\ (Alternate characterization of $P_{AGS}$)}
\label{lem-alt}
We have $\pb = P_{AGS}$.
\end{lem}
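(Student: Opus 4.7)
The plan is to prove the set equality $\pb = P_{AGS}$ by establishing both inclusions, using the characterizations of the AGS constraints on $\O$ and on the $\ttp$, the definitions of $\pmax$ and $\pmin$, and the bounded idle time requirement, all of whose sufficiency and necessity are spelled out in the Appendix.

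For the inclusion $\pb \subseteq P_{AGS}$, I fix $P' = (\O', \R', \ttp') \in \pb$ and verify each of the four defining conditions of assume-guarantee synthesis in turn. The implication condition for the $\ttp$, namely $\semb{\O \para \R \para \ttp' \para \Sc} \subseteq (\varphi_\O \wedge \varphi_\R) \im \varphi_\ttp$, follows directly from Lemma~\ref{lem-ttp-inviol} (assertion~1), since by definition of $\pb$, $\ttp'$ satisfies the AGS constraints on the $\ttp$. The implication condition for $\O$ follows from $\O' \preceq \omax$ (a consequence of $P' \preceq \pmax$) together with the sufficiency side of the AGS constraints on $\O$ established in the Appendix. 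The implication condition for $\R$ is immediate because, by the Appendix, every refinement $\R' \preceq \R = \rmax$ satisfies $\semb{\O \para \R' \para \ttp \para \Sc} \subseteq (\varphi_\O \wedge \varphi_\ttp) \im \varphi_\R$. Finally, the weak co-synthesis condition $\semb{\O' \para \R' \para \ttp' \para \Sc} \subseteq \varphi_\O \wedge \varphi_\R \wedge \varphi_\ttp$ follows from the Appendix result that bounded idle time, together with the AGS constraints on $\O$ and on the $\ttp$, is sufficient for joint satisfaction of all three objectives under every fair scheduler and every channel behavior.

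For the reverse inclusion $P_{AGS} \subseteq \pb$, I fix $P' = (\O', \R', \ttp') \in P_{AGS}$ and check the defining conditions of $\pb$. First, by Lemma~\ref{lem-ttp-inviol} (assertion~2), any $\ttp'$ that violates the AGS constraints on the $\ttp$ falsifies the implication condition for the $\ttp$; hence membership in $P_{AGS}$ forces $\ttp'$ to satisfy those constraints, which in turn gives $\ttp' \preceq \tmax$ by maximality of $\tmax$. Second, the necessity side of the AGS constraints on $\O$ (proved in the Appendix by exhibiting witness traces along the lines of Theorem~\ref{classical-failure}) gives $\O' \preceq \omax$; since $\R' \preceq \R = \rmax$ trivially, this yields $P' \preceq \pmax$. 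Third, for $\pmin \preceq P'$ I argue that $\O'$ must enable $m_1$ at $v_0$ (otherwise $\diam M_1$ in $\varphi_\O$ fails on some fair trace of $\semb{\O' \para \R \para \ttp \para \Sc}$ where $\varphi_\R$ and $\varphi_\ttp$ hold vacuously) and must enable $a_1^\O$ (otherwise a channel behavior dropping $m_1$ leaves $\O'$ no way to reach $\diam \abo$, violating $\varphi_\O$ while the other two objectives remain satisfied), and that $\R' \succeq \rmin$ holds trivially since $\rmin$'s only move is $\idle$. Bounded idle time is then necessary by the corresponding Appendix argument: any unbounded idling by $\O$ or the $\ttp$ on some fair schedule falsifies $\diam M_1$ or the eventuality clauses of $\varphi_\ttp^1$ in a trace where the other objectives are still achievable, contradicting the weak co-synthesis clause of assume-guarantee synthesis.

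The main obstacle I expect is the case analysis for the inclusion $P_{AGS} \subseteq \pb$, specifically the $\pmin \preceq P'$ and bounded-idle-time steps: in each one must produce a concrete adversarial scheduler and channel behavior that converts the absence of a required move, or an unbounded sequence of idle moves, into a genuine violation of one participant's objective while the remaining objectives are still satisfied. Structurally these arguments are of the same flavor as the witness-trace constructions already used in Theorems~\ref{classical-failure} and~\ref{theo-no-ttp} and in Lemma~\ref{lem-ttp-inviol}, so the plan is to reuse those patterns, appealing to the Appendix for the detailed bookkeeping rather than repeating it in the body of the proof.
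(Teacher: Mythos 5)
Your proposal is correct and follows essentially the same route as the paper's own proof: both inclusions are established by invoking the sufficiency and necessity halves of Lemma~\ref{lem-ttp-inviol}, the AGS constraints on $\O$ (Lemma~\ref{lem-max-or}), the universal implication condition for $\R$ (Lemma~\ref{lem-max-r}), and the bounded idle time lemma (Lemma~\ref{lem-bounded-idle}). The only difference is that you spell out the $\pmin \preceq P'$ step with explicit witness traces where the paper simply asserts minimality of $\pmin$ in $P_{AGS}$; your version is, if anything, slightly more complete on that point.
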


\medskip\noindent{\bf The KM non-repudiation protocol.}
The KM protocol, like the ASW and GJM protocols 
consists of a main protocol, an abort subprotocol and a resolve
subprotocol.
The main protocol is the same as in the ASW and GJM protocols and
is defined in terms of messages in Protocol~\ref{prot:main}.
The abort subprotocol and the resolve subprotocol are defined
in Table~\ref{table-ttp-moves}.
Let $P_{KM} = (\O_{KM}, \R_{KM}, \ttp_{KM})$ correspond to the
agent and $\ttp$ refinements in the KM protocol.
Since $\O$ does not abort the protocol in
state $v_0$ and in state $\set{M_1, \eor, M_3}$ in $\O_{KM}$,
it follows that $\omin \preceq \O_{KM} \preceq \omax$.
It is easy to verify that $\rmin \preceq \R_{KM} \preceq
\rmax$ and $\tmin \preceq \ttp_{KM} \preceq \tmax$.
Moreover, $\ttp_{KM}$ satisfies the AGS constraints
on the $\ttp$ and $P_{KM}$ satisfies bounded idle time.
Therefore $P_{KM} \in \pb$ and hence by Lemma~\ref{lem-alt}, 
$P_{KM} \in P_{AGS}$.
\IncMargin{0.25em}
\SetAlgorithmName{Protocol}{Algorithm}{}

\begin{algorithm}[t]
  \caption{$\textsc{The KM, ASW and GJM Main Protocol}$}
  \label{prot:main}
    $\O$ sends $m_1$ to $\R$\;
    $\R$ sends $m_2$ to $\O$\;
    \eIf{($\R$ does not send $m_2$ on time)}{
      $\O$ sends $a_1^\O$ to the $\ttp$\;
    }{
      $\O$ sends $m_3$ to $\R$\;
      \eIf{($\O$ does not send $m_3$ on time)}{
        $\R$ sends $r_1^\R$ to the $\ttp$\;
      }{
        $\R$ sends $m_4$ to $\O$\;
        \If{($\R$ does not send $m_4$ on time)}{
          $\O$ sends $r_1^\O$ to the $\ttp$\;
        }
      }
    }
\end{algorithm}

\medskip\noindent{\bf The ASW certified mail protocol.}
The ASW certified mail protocol differs from the KM protocol in its 
abort and resolve sequences.
To define the abort protocol, the $\ttp$ needs a move $req^O$ that
can be used to request $\O$ to resolve a protocol instance if $\R$ 
has already resolved it.
The abort and resolve subprotocols are defined in
Table~\ref{table-ttp-moves}.
Let $P_{ASW} = (\O_{ASW}, \R_{ASW}, \ttp_{ASW})$ correspond to the
agent and $\ttp$ refinements in the ASW certified mail protocol.
Since $\ttp_{ASW}$ neither has move $\comp{a_2^\O, a_2^\R}$ nor
$\comp{r_2^\O, r_2^\R}$, $\ttp_{ASW}$ does not satisfy the AGS 
constraints on the $\ttp$ and hence by Lemma~\ref{lem-ttp-inviol}
(assertion 2), we have $P_{ASW} \not\in P_{AGS}$.
Moreover, the ASW certified mail protocol is not attack-free as shown 
by the following attacks \cite{KremerR03}:
Consider a behavior of the channels that deliver all messages and
the sequence of messages $\seq{m_1, r_1^\R, r_2^\R, a_1^\O, 
req^O}$.
This is a valid sequence in the ASW protocol.
In this sequence a malicious $\R$ decides to resolve the protocol 
after receiving $m_1$ and thus succeeds in getting $\siga^\ttp$.
When $\O_{ASW}$ attempts to abort the protocol, $\ttp_{ASW}$ expects her
to resolve the protocol as $\R$ has already resolved it, but $\O_{ASW}$
cannot do so as she does not have $m_2$.
Therefore, $\varphi_\O$ is violated; $\O_{ASW}$ cannot abort or resolve
the protocol, neither can she get $\R$'s signature.
Consider the sequence of messages 
$\seq{m_1, m_2, r_1^\O, r_2^\O, a_1^\O, a_2^\O}$.
This is an attack that compromises fairness for $\R$; in the words of 
\cite{KremerR03} the protocol designers did not foresee that $\O$ could 
resolve the protocol and then abort it.
This violates $\varphi_\R$ and $\ttp$ accountability, violating 
$\varphi_\ttp$, while satisfying $\varphi_\O$.

\medskip\noindent{\bf The GJM protocol.}
The GJM protocol differs in the abort and resolve sequences
as shown in Table~\ref{table-ttp-moves}.
Garay et al., introduced the notion of abuse-freeness 
and invented {\em private contract signatures or PCS\/}, a
cryptographic primitive that ensures abuse-freeness and optionally
$\ttp$ accountability \cite{GarayJM99}.
Further, the GJM protocol is faithful to the informal definition of
fairness in that, when a protocol instance is aborted, neither agent 
gets partial information that can be used to negotiate a contract 
with a third party.
This is ensured by the use of PCS which provides the
{\em designated verifier property\/}; only $\R$ can verify the
authenticity of a message signed by $\O$ and vice versa.
The use of PCS in addition to the fixes to the original protocol 
proposed in \cite{ShmatikovM02} ensure that the protocol is free 
from replay attacks, is fair and abuse-free.
Let $P_{GJM} = (\O_{GJM}, \R_{GJM}, \ttp_{GJM})$ correspond to the
agent and $\ttp$ refinements in the GJM protocol.
Since $\ttp_{GJM}$ neither has move $\comp{a_2^\O, a_2^\R}$ nor
$\comp{r_2^\O, r_2^\R}$, $\ttp_{GJM}$ does not satisfy the AGS 
constraints on the $\ttp$ and hence by Lemma~\ref{lem-ttp-inviol}
(assertion 2), we have $P_{GJM} \not\in P_{AGS}$.
$P_{GJM}$ does not provide $\ttp$ inviolability and is not
attack-free by our definition.
Consider the message sequence
$g = \seq{m_1, m_2, m_3, r_1^\O, r_2^\O}$; agent $\R$
does not send his final signature but goes idle and stops participating
in the protocol after receiving $\O$'s signature.
$\O_{GJM}$ resolves the protocol by sending $r_1^\O$ and gets 
$\sigb^\ttp$.
In this case, while the objectives of $\O$ and $\R$ are satisfied, 
the $\ttp$ cannot satisfy $\varphi_\ttp$ unless $\R_{GJM}$ co-operates 
and sends a resolve request $r_1^\R$ after having satisfied his 
objective, which he may never do; it is rather unrealistic to expect
that he will.
Precisely,
$g \in \semb{\O \para \R \para \ttp_{GJM} \para \Sc}$ and
$g \not\in (\varphi_\O \wedge \varphi_\R) \im \varphi_\ttp$.

\begin{theo}{}
The refinement corresponding to the KM non-repudiation protocol is in
$P_{AGS}$ and the refinements corresponding to the ASW certified mail
protocol and the GJM protocol are not in $P_{AGS}$.
\end{theo}

\medskip\noindent{\bf Computation.}
We can obtain the solution of assume-guarantee synthesis by solving
graph games with {\em secure equilibria\/} \cite{CHJ06}.
In fact, the refinements that satisfy assume-guarantee synthesis
precisely correspond to secure equilibrium strategies of players
in the game.
This result was presented in \cite{CH07}.
All the objectives we consider in this paper are boolean 
combinations of B\"uchi ($\bo \diam$) and co-B\"uchi ($\diam \bo$)
objectives.
It follows from \cite{CH07} that secure equilibria with 
combinations of B\"uchi and co-B\"uchi objectives can be solved
in polynomial time.
This gives us a polynomial time algorithm for the assume-guarantee 
synthesis of fair exchange protocols.

\medskip\noindent{\bf From $P_{AGS}$ to $P_{KM}$.}
We now first present a systematic exploration of
the refinements of $P = (\O, \R, \ttp)$, the most general behavior
of the agents and the $\ttp$, leading to the KM protocol.
We consider the following refinements, that we assume satisfy
bounded idle time and the AGS constraints on the $\ttp$, and study 
their properties:
\begin{enumerate}
\item $\pmin = (\omin, \rmin, \tmin)$; the minimal refinement.
\item $P_1 = (\O_1, \R_1, \ttp_1)$ with \\
$\moves_{\O_1} = \moves_{\omin} \cup \set{\idle, m_3}$,
$\moves_{\R_1} = \moves_{\rmin} \cup \set{m_2, m_4}$ and 
$\ttp_1 = \tmax$.
\item $P_2 = (\O_2, \R_2, \ttp_2)$ with \\
$\moves_{\O_2} = \moves_{\O_1} \cup \set{r_1^\O}$,
$\moves_{\R_2} = \moves_{\R_1}$ and
$\ttp_2 = \tmax$.
\item $P_3 = (\O_3, \R_3, \ttp_3)$ with \\
$\moves_{\O_3} = \moves_{\O_2} \setm \set{a_1^\O}$,
$\moves_{\R_3} = \moves_{\R_1} \cup \set{r_1^\R}$ and
$\ttp_3 = \tmax$.
\item $\pmax = (\omax, \rmax, \tmax)$; the maximal refinement.
\end{enumerate}

\medskip\noindent{\bf Analysis of the refinement $\pmin$.}
It is easy to check that while $\pmin \in P_{AGS}$, it always
ends aborted as $a_1^\O$ is the only choice of moves for $\omin$
after $m_1$ is sent.
It is not an exchange protocol as it does not enable an exchange
of signatures.

\medskip\noindent{\bf Analysis of the refinement $\textrm{P}_1$.}
In this case, the agents do not have the ability to resolve the 
protocol.
The objectives of the agent and the $\ttp$ then reduce to,
\begin{gather*}
\varphi_\O = \diam M_1 \wedge \bo (\diam \sigb^\R \vee 
(\diam \abo \wedge \bo \neg \siga^\O)), \\
\varphi_\R = \bo (\eoo \im (\diam \siga^\O \vee
(\diam \abr \wedge \bo \neg \sigb^\R)), \\
\varphi_\ttp = \bo (\reqa \im (\diam \abo \vee \diam \abr)) 
\wedge \bo (\abo \im \diam \abr) \wedge 
\bo (\abr \im \diam \abo) \eqpun .
\end{gather*}
\begin{table*}[!t]
\begin{center}
\begin{tabular}{|l||l|l||l|l|l|}
\hline
\bf{Delivered message sequences \quad} & \multicolumn{5}{c|}{\bf{Moves for $\O_1$ and $\R_1$}} \\
\cline{2-6}
& \multicolumn{2}{c||}{\bf{Choices for $\O_1$}} & \multicolumn{3}{c|}{\bf{Choices for $\R_1$}} \\
\hline \hline
$\bf{\seq{}}$ & $m_1 \qquad$ & $m_1 \qquad$ & $\idle \qquad$ & $\idle \qquad$ & $\idle \qquad$ \\
$\bf{\seq{m_1}}$ & $\idle$ & $a_1^\O$ & $\idle$ & $m_2$ & \text{either} $\idle$ \text{or} $m_2$ \\
$\bf{\seq{m_1, m_2}}$ & $a_1^\O$ & $a_1^\O$ & $\idle$ & $\idle$ & $\idle$ \\
$\bf{\seq{m_1, m_2, m_3}}$ & $\emptyset$ & $\emptyset$ & $\idle$ & $m_4$ & $\idle$ \\
\hline
\end{tabular}
\end{center}
\caption[]{The moves that satisfy the objectives of assume-guarantee 
synthesis for $\O_1$ and $\R_1$ are shown in this table at relevant
protocol states represented by message sequences, when the agents have 
no ability to resolve the protocol.}
\label{table-stra-nor}
\end{table*}  
The agent moves that extend partial protocol runs such that the
implication conditions of assume-guarantee synthesis are satisfied 
in all resulting traces is shown in Table~\ref{table-stra-nor}.
Each row in the table corresponds to a protocol state and the moves 
available to $\O_1$ and $\R_1$ at that state, such that the 
implication conditions of assume-guarantee synthesis are satisfied
in all resulting traces.
For example, in the row corresponding to
$\seq{m_1}$, we have two move choices for $\O_1$, one 
that selects $\idle$ and the other that selects $a_1^\O$;
$\O_1$ can choose to wait for $\R$ to send $m_2$ or choose $a_1^\O$.
A similar interpretation is attached to the moves of $\R_1$.
We have $\pmin \preceq P_1 \preceq \pmax$.
As $P_1$ satisfies bounded idle time and the AGS constraints
on the $\ttp$, $P_1 \in \pb$ and hence, by Lemma~\ref{lem-alt},
$P_1 \in P_{AGS}$.
The refinement $P_1$, while attack-free, is not a fair non-repudiation 
protocol as it does not enable an exchange of non-repudiation evidences.
The protocol always ends up aborted as $a_1^\O$ is
the only move that satisfies $\varphi_\O$ for $\O$
in state $\set{M_1, \eoo}$ against all behaviors of $\R$ and the $\ttp$; 
once $\O_1$ sends her signature in $m_3$, there is no move available 
to $\O_1$ such that satisfaction of $\varphi_\R \wedge \varphi_\ttp$ 
is guaranteed to satisfy $\varphi_\O$, as $\R$ may decide to stop 
participating in the protocol.

\medskip\noindent{\bf Analysis of the refinement $\textrm{P}_2$.}
In this case, $\R$ has no ability to resolve the protocol.
It is easy to verify that $\pmin \preceq P_2 \preceq \pmax$.
Therefore, $P_2 \in \pb$ and hence, by Lemma~\ref{lem-alt},
$P_2 \in P_{AGS}$.
This protocol is a fair non-repudiation protocol that satisfies
fairness, balance and timeliness.
If $\O$ does not send $m_3$, then $\R_2$ has no choice of moves.
But since $P_2$ satisfies bounded idle time, $\O_2$ will
eventually either abort or resolve the protocol.
As $\ttp_2$ satisfies the AGS constraints on the $\ttp$, 
either both agents get abort tokens or they get their respective 
non-repudiation evidences eventually.

\medskip\noindent{\bf Analysis of the refinement $\textrm{P}_3$.}
Since $\O$ has no ability to abort the protocol, while both agents
have the ability to resolve it, the predicates $\abo$ and $\abr$
are always false.
The agent and $\ttp$ objectives then reduce to,
\begin{gather*}
\varphi_\O = \diam M_1 \wedge \bo(\diam \sigb^\R \vee \diam \sigb^\ttp), \\
\varphi_\R = \bo (\eoo \im (\diam \siga^\O \vee \diam \siga^\ttp)), \\
\varphi_\ttp = \bo (\reqr \im (\diam \siga^\ttp \vee \diam \sigb^\ttp)) 
\wedge \bo (\siga^\ttp \im \diam \sigb^\ttp) \wedge \\
\bo (\sigb^\ttp \im \diam \siga^\ttp) \eqpun .
\end{gather*}
\begin{table*}[!t]
\begin{center}
\begin{tabular}{|l||l|l||l|l|l|}
\hline
\bf{Delivered message sequences \quad} & \multicolumn{5}{c|}{\bf{Moves for $\O_3$ and $\R_3$}} \\
\cline{2-6}
& \multicolumn{2}{c||}{\bf{Choices for $\O_3$}} & \multicolumn{3}{c|}{\bf{Choices for $\R_3$}} \\
\hline \hline
$\bf{\seq{}}$ & $m_1 \qquad$ & $m_1 \qquad$ & $\idle \qquad$ & $\idle \qquad$ & $\idle \qquad$ \\
$\bf{\seq{m_1}}$ & $\idle$ & $\idle$ & $m_2$ & $r_1^\R$ & \text{either} $m_2$ \text{or} $r_1^\R$ \\
$\bf{\seq{m_1, m_2}}$ & $m_3$ & $r_1^\O$ & $\idle$ & $r_1^\R$ & \text{either} $\idle$ \text{or} $r_1^\R$ \\
$\bf{\seq{m_1, m_2, m_3}}$ & $\idle$ & $r_1^\O$ & $\idle$ & $m_4$ & $r_1^\R$ \\
\hline
\end{tabular}
\end{center}
\caption[]{The moves that satisfy the objectives of assume-guarantee 
synthesis for $\O_3$ and $\R_3$ are shown in this table at relevant
protocol states represented by message sequences, when the agents have 
no ability to abort the protocol.}
\label{table-stra-noa}
\end{table*}  
The moves of the agents that satisfy the objectives of
assume-guarantee synthesis at select protocol valuations represented
by message sequences are shown in Table~\ref{table-stra-noa}.
It is easy to verify that as $\pmin \not\preceq P_3 \preceq \pmax$,
$P_3 \not\in \pb$ and hence by Lemma~\ref{lem-alt}, 
$P_3 \not\in P_{AGS}$.
Since $\ttp_3$ satisfies the AGS constraints on the $\ttp$,
$P_3$ is a fair non-repudiation protocol similar to the
ZG optimistic non-repudiation protocol, but it does not satisfy 
timeliness \cite{KremerMZ02} as $\O$ does have the ability to abort 
the protocol.
If message $m_1$ is not delivered, then $\O$ has no choice of moves
to satisfy $\varphi_\O$, while $\varphi_\R \wedge \varphi_\ttp$
are satisfied trivially.
Balance does not apply in this case as there are no abort moves.

\medskip\noindent{\bf Analysis of the refinement $\pmax$.}
In the maximal refinement $\pmax = (\omax, \rmax, \tmax)$,
since $\tmax$ satisfies the AGS constraints on the $\ttp$, if her
first response to an abort or resolve request is $\comp{x, y}$,  
she can choose any move in $\set{\idle, x, y, \comp{x, y}}$ for
all subsequent abort or resolve requests.
Consider a refinement $P_{KM} = (\O_{KM}, \R_{KM}, \ttp_{KM}) \preceq \pmax$, 
where $\O_{KM}$ and $\R_{KM}$ correspond to $\omax$ and $\rmax$ and
$\ttp_{KM} \preceq \tmax$ such that $\ttp_{KM}$ goes idle after 
her first response to an abort or resolve request.
$P_{KM}$ is then the KM protocol.
We remark that given the choices of moves for the $\ttp$ after her 
first response as suggested by assume-guarantee synthesis, choosing 
$\idle$ satisfies the informal notion of efficiency.
This refinement ensures fairness, balance and timeliness.

\section{A Symmetric Fair Non-Repudiation Protocol}

In the KM, ASW and GJM protocols, $\R$ cannot abort the protocol.
While the ability of $\O$ to abort the protocol after sending $m_1$
is required in the event $m_1$ is not delivered or $\R$ does not
send $m_2$, it can be used to abort the protocol even if all channels are 
resilient or if $\O$ decides not to sign the contract after receiving $m_2$.
The protocols give $\O$ the ability to postpone abort decisions but
deny $\R$ a similar ability.
While this does not violate fairness or abuse-freeness as per
prevailing definitions, it is not equitable to both agents.
If $\R$ does not want to participate in a protocol instance, 
then the only choice of moves for $\R$ is $\idle$ and not $m_2$;
$\O$ will then eventually abort the protocol.
Once $m_2$ has been sent, if $\R$ decides not to participate in
the protocol and not be held responsible for signing the contract, he 
has no choice of moves.
If he decides to ignore $m_3$, then $\O$ will resolve the protocol
resulting in non-repudiation evidences being issued to $\O$,
using which she can claim $\R$ is obligated by the contract.

In this section we present a symmetric fair non-repudiation 
protocol that gives $\R$ the ability to abort the protocol, assuming 
that the channels between the agents and the $\ttp$ are operational.
If we enhance the ability of $\R$ by including an
abort move $a_1^\R$ without enhancing $\O$ and the $\ttp$, 
then assume-guarantee synthesis fails.
By enhancing both $\O$ and the $\ttp$, using assume-guarantee
analysis, we design a new fair non-repudiation protocol that
(a) has no $Y$-attack for all $Y \subseteq\set{\O,\R}$;
and (b) that provides $\R$ the ability to abort. 
In the following, we show that if we fix the behavior of the $\ttp$, 
ensuring $\ttp$ inviolability, then the protocol is attack-free.

\begin{algorithm}[t]
  \caption{$\textsc{Main Protocol of our Symmetric Non-repudiation Protocol}$}
  \label{prot:symm-main}
    $\O$ sends $m_1$ to $\R$\;
    \eIf{($\R$ does not want to participate)}{
      $\R$ sends $a_1^\R$ to the $\ttp$\;
    }{
      $\R$ sends $m_2$ to $\O$\;
      \eIf{($\R$ does not send $m_2$ on time)}{
        $\O$ sends $a_1^\O$ to the $\ttp$\;
      }{
        $\O$ sends $m_3$ to $\R$\;
        \eIf{($\O$ does not send $m_3$ on time)}{
          \eIf{($\R$ does not want to participate)}{
            $\R$ sends $a_1^\R$ to the $\ttp$\;
          }{
            $\R$ sends $r_1^\R$ to the $\ttp$\;
          }
        }{
          $\R$ sends $m_4$ to $\O$\;
          \If{($\R$ does not send $m_4$ on time)}{
            $\O$ sends $r_1^\O$ to the $\ttp$\;
          }
        }
      }
    }
\end{algorithm}

\begin{algorithm}[t]
  \caption{$\textsc{Abort Subprotocol}.\ X \in \set{\O, \R}$}
  \label{prot:symm-abort}
    $X$ sends $a_1^X$ to $\ttp$\;
    \eIf{(the protocol has been aborted or resolved)}{
      $\ttp$ goes idle\;
    }{
      \eIf{($X = \R$)}{
        $\ttp$ sends $req^\O$ to $\O$\;
        \eIf{($\O$ sends $res^\O$ on time)}{
          $\ttp$ marks this protocol instance as resolved in its persistent DB\;
          $\ttp$ sends $\comp{r_2^\O, r_2^\R}$ to $\O$ and $\R$\;
        }{
          $\ttp$ marks this protocol instance as aborted in its persistent DB\;
          $\ttp$ sends $\comp{a_2^\O, a_2^\R}$ to $\O$ and $\R$\;
        }
      }{
        $\ttp$ marks this protocol instance as aborted in its persistent DB\;
        $\ttp$ sends $\comp{a_2^\O, a_2^\R}$ to $\O$ and $\R$\;
      }
    }
\end{algorithm}

Consider the following refinement $\psm = (\osm, \rsm, \tsm)$
with $\pmax \preceq \psm$ defined as follows:
\begin{itemize}[]
\item $\moves_{\osm} = \moves_{\omax} \cup \set{res^\O}$;
\item $\moves_{\rsm} = \moves_{\rmax} \cup \set{a_1^\R}$; and
\item $\moves_{\tsm} = \moves_{\tmax} \cup \set{req^\O}$.
\end{itemize}
The move $req^\O$ may be sent by $\tsm$ only after receiving
an abort request from $\R$.
The move $res^\O$ may be sent by $\osm$ only after receiving
$req^\O$.
We present the main protocol and the abort subprotocol for our
symmetric fair non-repudiation protocol in Protocol~\ref{prot:symm-main}
and Protocol~\ref{prot:symm-abort}; the resolve 
subprotocol is identical to the one in the KM protocol.

To facilitate the assume-guarantee analysis of $\psm$, we present the 
following {\em enhanced AGS constraints on the $\ttp$\/} that is both
necessary and sufficient to ensure $\ttp$ inviolability (neither 
agent can violate $\varphi_\ttp$):
\begin{enumerate}
\item {\em Abort constraint.\/}
If the first request received by the $\ttp$ is $a_1^\O$, then her 
response to that request should be $\comp{a_2^\O, a_2^\R}$;
If the first request received by the $\ttp$ is $a_1^\R$, then her
response to that request should be $req^\O$;
\item {\em Resolve constraint.\/}
If the first request received by the $\ttp$ is a resolve
request, then her response to that request should be 
$\comp{r_2^\O, r_2^\R}$;
If the $\ttp$ receives $res^\O$ in response to $req^\O$ within
bounded idle time, then her response should be $\comp{r_2^\O, r_2^\R}$,
otherwise it should be $\comp{a_2^\O, a_2^\R}$.
\item {\em Accountability constraint.\/}
If the first response from the $\ttp$ is 
$\comp{x, y}$ or the first response from the $\ttp$ is $req^\O$ and
the next response is $\comp{x, y}$, then for all subsequent abort or 
resolve requests her response should be in the set 
$\set{\idle, x, y, \comp{x, y}}$.
\end{enumerate}
The enhanced AGS constraints on the $\ttp$ are required both to
satisfy the implication condition $(\varphi_\O \wedge \varphi_\R)
\im \varphi_\ttp$ and the condition for weak co-synthesis,
$(\varphi_\O \wedge \varphi_\R \wedge \varphi_\ttp)$.
Since $\tsm$ waits for a bounded number of turns before sending
abort tokens to both agents after sending $req^\O$, we require that
(a) the channels between the agents and the $\ttp$ are operational, and 
(b) the time taken to deliver messages $req^\O$ and
$res^\O$ be subsumed by the bound on idle time chosen by the $\ttp$ 
between sending $req^\O$ and abort tokens.
As there is no bound on the time taken to deliver messages on
resilient channels, the above AGS constraints on the $\ttp$ cannot 
be enforced without operational channels.
Consider a partial trace that ends in protocol state $\set{M_1, \eoo,
M_2, \eor, M_3}$; messages $m_1$ and $m_2$ have been received and
$m_3$ has been sent.
If $\R$ now aborts the protocol and the $\ttp$ sends $req^\O$ to $\O$,
then resilient channels can delay delivering either $req^\O$ or $res^\O$ 
sufficiently for the $\ttp$ to abort the protocol.
In this case if $m_3$ is eventually delivered, $\varphi_\O$ is violated
whereas $\varphi_\R \wedge \varphi_\ttp$ is satisfied.

In the following lemma we show that in $\psm$, $\O$ cannot violate
$\varphi_\R$ while satisfying $\varphi_\O$, $\R$ cannot violate
$\varphi_\O$ while satisfying $\varphi_\R$ and $\O$ and $\R$ cannot
violate $\varphi_\ttp$ while satisfying their objectives.
That is, in the refinement $\psm$ we have 
$\semb{\O \para \R \para \tsm \para \Sc}
\subseteq (\varphi_\O \wedge \varphi_\R) \im \varphi_\ttp$, and
$\semb{\O \para \rsm \para \ttp \para \Sc}
\subseteq (\varphi_\O \wedge \varphi_\ttp) \im \varphi_\R$.
However, it is not the case that
$\semb{\osm \para \R \para \ttp \para \Sc}
\subseteq (\varphi_\R \wedge \varphi_\ttp) \im \varphi_\O$.
But if the $\ttp$ is fixed then the implication condition holds, 
i.e., $\semb{\osm \para \R \para \tsm \para \Sc}
\subseteq \varphi_\R \im \varphi_\O \subseteq 
(\varphi_\R \wedge \varphi_\ttp) \im \varphi_\O$.
It follows that under the assumption that the $\ttp$ does not change 
her behavior, while satisfying her objective, the symmetric protocol 
is attack-free.
We present the following lemma and prove it in the Appendix.

\begin{lem}{}
\label{psm-is-ags}
For the refinement $\psm = (\osm, \rsm, \tsm)$, if the channels
between the agents and the $\ttp$ are operational, then there
exists no $Y$-attack for all $Y \subseteq \set{\O,\R}$.
\end{lem}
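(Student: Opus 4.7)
The plan is to reduce attack-freeness of $\psm$ to the three implication conditions the text previews just before the lemma statement. Concretely, I would prove (i) an analog of Lemma \ref{lem-ttp-inviol} assertion~1 adapted to the enhanced AGS constraints on $\tsm$, yielding $\tsm$-inviolability, (ii) $\semb{\O \para \rsm \para \tsm \para \Sc} \subseteq (\varphi_\O \wedge \varphi_\ttp) \im \varphi_\R$, and (iii) $\semb{\osm \para \R \para \tsm \para \Sc} \subseteq \varphi_\R \im \varphi_\O$. The attack-free conclusion then follows by a case split on $|Y|$ for $Y \subseteq \set{\O,\R}$, in direct parallel to the structure of the proof of Theorem~\ref{theo-ags-attack-free}.

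For (i), I would follow the template of Lemma \ref{lem-ttp-inviol}: case-split on whether the first request seen by $\tsm$ is absent, an $a_1^\O$, an $a_1^\R$, or a resolve request. The abort constraint, resolve constraint and accountability constraint guarantee that in every continuation the $\tsm$ responses are drawn from $\set{\idle, x, y, \comp{x,y}}$ for a suitable fixed $\comp{x,y}$, preventing any mix of abort tokens and non-repudiation evidences. The new subcase is an $a_1^\R$ request followed by $\tsm$ issuing $req^\O$: here operational channels are invoked, so that $res^\O$, if sent by $\osm$, arrives within the bounded idle window, after which $\tsm$ commits either to $\comp{r_2^\O,r_2^\R}$ or to $\comp{a_2^\O,a_2^\R}$; both choices respect $\varphi_\ttp^2$–$\varphi_\ttp^5$. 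This immediately handles the case $Y=\set{\O,\R}$.

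For $Y=\set{\R}$, I would show $\semb{\osm \para \R \para \tsm \para \Sc} \subseteq \varphi_\R \im \varphi_\O$ by analyzing every trace in which $\varphi_\R$ holds. If $\eoo$ is never set, $\varphi_\R$ is vacuous and $\osm$'s AGS constraints together with bounded idle time force $\osm$ to eventually send $a_1^\O$ and receive $\abo$, yielding $\varphi_\O$. If $\R$ obtains $\siga^\O$, then $\osm$ only sends $m_3$ after $\eor$, and since $\osm$ cannot abort from $\set{M_1,\eor,M_3}$, bounded idle time guarantees $\osm$ eventually issues $r_1^\O$; $\tsm$-inviolability then delivers $\sigb^\R$ or $\sigb^\ttp$. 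If $\R$ obtains $\siga^\ttp$ via $r_1^\R$, $\tsm$-inviolability forces $\tsm$ to also produce $\sigb^\ttp$ for $\osm$. The subtle subcase is $\R$ obtaining $\siga^\ttp$ via the $a_1^\R/req^\O/res^\O$ loop: here operational channels ensure $res^\O$ reaches $\tsm$ within the window, so $\tsm$ commits to resolution rather than abort, and $\osm$ receives $\sigb^\ttp$. The case $Y=\set{\O}$ is argued symmetrically, this time combining $\tsm$-inviolability with the observation that $\rsm$ can always respond to an adversarial $\O$ with an appropriate $a_1^\R$ or $r_1^\R$ to secure either $\abr$ or $\siga^\ttp$. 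Finally $Y=\emptyset$ drops out: applying both implications to the specific refinements $\osm,\rsm$ and combining with (i) gives $\semb{\osm\para\rsm\para\tsm\para\Sc}\subseteq \varphi_\O\wedge\varphi_\R\wedge\varphi_\ttp$.

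The hard part will be the $Y=\set{\R}$ analysis, specifically the scenario where a malicious $\R$ sends $a_1^\R$ \emph{after} $\osm$ has dispatched $m_3$. Here $\tsm$ issues $req^\O$ and must decide, within a bounded number of turns, whether to resolve or to abort; this is where the operational-channel hypothesis becomes essential, because on a merely resilient channel the text already observes (just above the lemma) that $res^\O$ can be delayed past $\tsm$'s window while $m_3$ still reaches $\R$, yielding a trace that satisfies $\varphi_\R\wedge\varphi_\ttp$ but violates $\varphi_\O$. I expect the bulk of the proof's technical content to be a careful enumeration of the partial traces ending in such a $\R$-move, mirroring the case-tables of Section~\ref{sec:protocol-cosynth}, and checking that for each of them the operational-channel bound, together with $\tsm$'s enhanced accountability constraint, precludes a $\set{\R}$-attack.
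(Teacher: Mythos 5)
Your proposal is correct and follows essentially the same route as the paper's proof: establish $\tsm$-inviolability under the enhanced AGS constraints (disposing of $Y=\set{\O,\R}$), then verify the two implication conditions against the fixed $\tsm$ for $Y=\set{\O}$ and $Y=\set{\R}$, and combine these for $Y=\emptyset$; you also correctly isolate the one genuinely delicate point, namely that operational channels are needed so that $res^\O$ reaches $\tsm$ within its idle bound when a malicious $\R$ sends $a_1^\R$ after $m_3$ has been dispatched. The only cosmetic difference is that the paper argues the $Y=\emptyset$ case via the bounded-idle-time lemma rather than by instantiating the implications, but the substance is identical.
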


The assumption that the bound on idle time of the $\ttp$ between 
sending $req^\O$ and abort tokens subsume the time taken for the 
delivery of messages $req^\O$ and $res^\O$ can easily be enforced 
before the beginning of a protocol; $\O$ agrees to participate in 
the protocol with a given $\ttp$, only if the bound chosen by the 
$\ttp$ is satisfactory.
We point out that in state $\set{\eoo, M_2}$, if $\R$ sends an 
abort request, he still needs $\O$'s co-operation to abort the
protocol.
Since she has $m_2$, she can launch recovery if she so desires 
by composing $res^\O$ when she receives $req^\O$.
But this is identical to the ability of $\O$ in aborting the protocol
after she sends $m_1$.
$\R$ can resolve the protocol as soon as he receives $m_1$ and thus 
hold $\O$ as a signatory to the contract even if she decided to abort
the protocol after sending $m_1$.
The protocol is therefore symmetrical to both $\O$ and $\R$.
In addition, we claim that this version of the protocol provides
better quality of service in terms of timeliness; $\O$ does not
have to wait after sending $m_1$ for $\R$ to send $m_2$,
in protocol instances where $\R$ has no desire to sign the 
contract.
The following theorem states that if the $\ttp$ does not change
her behavior, then the refinement $\psm$ is an attack-free fair 
non-repudiation protocol.
The proof is in the Appendix.

\begin{theo}{\ (Symmetric attack-free protocol)} 
\label{theo-psm-nonrep}
Given the channels between the agents and the $\ttp$ are operational
and the $\ttp$ does not deviate from satisfying the enhanced AGS 
constraints on the $\ttp$, the refinement $\psm = (\osm, \rsm, \tsm)$ 
is an attack-free fair non-repudiation protocol.
\end{theo}

\medskip\noindent{\bf From $P_{AGS}$ to $\psm$.}
We can systematically analyze refinements leading to $\psm$.
Similar to the case of synthesizing the KM non-repudiation protocol,
we now present the steps that explore refinements leading to
$\psm$.
We assume the $\ttp$ satisfies the AGS constraints on the $\ttp$ and all
refinements satisfy bounded idle time.
The analyzed refinements are as follows:
\begin{enumerate}
\item $\pmin = (\omin, \rmin, \tmin)$; the minimal refinement.
\item $P_1 = (\O_1, \R_1, \ttp_1)$ with \\
$\moves_{\O_1} = \moves_{\omin} \cup \set{\idle, m_3}$,
$\moves_{\R_1} = \moves_{\rmin} \cup \set{m_2, m_4}$ and 
$\ttp_1 = \tmax$.
\item $P_2 = (\O_2, \R_2, \ttp_2)$ with \\
$\moves_{\O_2} = \moves_{\O_1} \cup \set{r_1^\O}$,
$\moves_{\R_2} = \moves_{\R_1}$ and
$\ttp_2 = \tmax$.
\item $P_3 = (\O_3, \R_3, \ttp_3)$ with \\
$\moves_{\O_3} = \moves_{\O_2} \setm \set{a_1^\O}$,
$\moves_{\R_3} = \moves_{\R_1} \cup \set{r_1^\R}$ and
$\ttp_3 = \tmax$.
\item $\pmax = (\omax, \rmax, \tmax)$; the maximal refinement.
\item $\psm = (\osm, \rsm, \tsm)$ with \\
$\moves_{\osm} = \moves_{\omax} \cup \set{res^\O}$,
$\moves_{\rsm} = \moves_{\rmax} \cup \set{a_1^\R}$ and \\
$\moves_{\tsm} = \moves_{\tmax} \cup \set{req^\O}$.
\end{enumerate}

\smallskip\noindent{\bf Implementation.}
We have implemented a prototype for assume-guarantee synthesis of
fair non-repudiation protocols.
Our implementation considers triples of refinements $O' \preceq \O$, 
$R' \preceq \R$, and $\ttp' \preceq \ttp$ and then explores all possible 
message sequences given these participant refinements.
We implemented a scheduler that backtracks and systematically
schedules all participants at all protocol states.
Using the scheduler, given a subset of participant refinements, with all 
other participants being most general, the implementation explores all 
possible traces and checks if each trace satisfies the required 
AGS conditions.
Note that in checking the satisfaction of the AGS conditions, for the 
implication conditions we need to consider the most general participants 
against each of the refinements $\O'$, $\R'$ and $\ttp'$.
The checking of the implication conditions is achieved by solving 
secure equilibrium on graph games with lexicographic objectives.
Our implementation generates all possible AGS solutions.
The analysis of the AGS solutions generated by our implementation was key
in obtaining the symmetric protocol; using a procedure similar to 
obtaining $P_{KM}$ from $P_{AGS}$.

\section{Conclusion}

In this work we introduce and demonstrate the effectiveness of 
assume-guarantee synthesis in synthesizing fair exchange protocols. 
Our main goal is to introduce a general 
assume-guarantee synthesis framework that can be used with a
variety of objectives; we considered a 
$\ttp$ objective that treats the agents symmetrically, but the 
framework can be used with possibly weaker $\ttp$ objectives
that treat agents asymmetrically. 
Using assume-guarantee analysis we have obtained a new symmetric 
protocol that is attack-free, given the channels to the $\ttp$ 
are operational. 
While the need for operational channels may be considered 
impractical, we remark that it is this flexible framework  
that could automatically generate such protocols of theoretical 
interest in the first place.
For future work we will study the application of assume-guarantee 
synthesis to other security protocols.

\bibliographystyle{plain}
\bibliography{dvlab}

\newpage

\section{Appendix}

\medskip\noindent{\bf Translating protocol models to process models.}
We now present a translation from the protocol model introduced in
Section~\ref{sec:fair-nonrep} to the process model introduced in
Section~\ref{sec:co-synthesis}.
We take $\moves = \calm$, as the set of process moves, corresponding
to the set of all messages in $\calm$.
For $1 \le i \le n$, we map each participant $A_{i - 1}$ to a process 
$P_i$ as follows:
\begin{itemize}
\item $X_i = V_{i - 1} \cup \set{L_i}$, is the set of variables of process
$P_i$ that includes all participant variables $V_{i - 1}$ and a 
special variable $L_i$ corresponding to control points, taking finitely
many values in $\nats$,
\item for all valuations $f \in \Valua_i[X_i]$, we have
$\mov_i(f) = \mas_{i - 1}(f \obciach V_{i - 1})$ and
\item $\trans_i: \Valua_i[\set{L_i}] \times 
\Valua_i[X_i \setm \set{L_i}] \times \moves \mapsto 
\Valua_i[\set{L_i}] \times \Valua_i[X_i \setm \set{L_i}]$
is the process transition function that exactly corresponds to 
the participant transition function $\Lambda_{i - 1}$.
\end{itemize}
The sets $X_i$ form a partition of $X = \bigcup_{i = 1}^n X_i$.
The set of processes $P_i$, given all possible behaviors
of a fair scheduler $\Sc$, corresponds to the most general 
exchange program.
The realization of a protocol corresponds to a refinement 
$P_i' \preceq P_i$ for $1 \le i \le n$, where each participant
$A_{i - 1}'$ maps to the process $P_i'$ as follows:
\begin{itemize}
\item $X_i' = X_i = V_{i - 1} \cup \set{L_i}$ is the set of variables of
process $P_i'$,
\item for all valuations $f \in \Valua_i'[X_i']$, we have
$\mov_i'(f) = \mas_{i - 1}'(f \obciach V_{i - 1})$ and 
\item for all valuations $f \in \Valua_i'[X_i']$, 
for all moves $m \in \moves$, we have
$\trans_i'(f, m) = \Lambda_{i - 1}'(f(L_i), f \obciach V_{i - 1}, m)$.
\end{itemize}
A protocol instance (protocol run) is a trace in 
$\semb{P_1' \para P_2' \ldots \para P_n' \para \Sc}(v_0)$ for an initial 
valuation $v_0 \in \Valua[X]$.
The specifications of the participants, which were defined as
a set of desired sequences of messages, are subsets of traces
in $\semb{P_1' \para P_2' \ldots \para P_n' \para \Sc}(v_0)$.
Given specifications $\varphi_i$ for process $P_i$, a
$Y$-attack for $Y \subseteq \set{P_1, P_2, \ldots, P_n}$ satisfies 
$\varphi_i$ for all $P_i \in Y$, while violating $\varphi_j$
for at least one process 
$P_j' \in (\set{P_1, P_2, \ldots, P_n} \setm Y)'$.
There are three participants in a two party fair non-repudiation
protocol, the originator $\O$, the recipient $\R$ and the
trusted third party $\ttp$.
We therefore take $n = 3$ in modeling two party fair exchange
protocols in the above translation.

We now prove Lemma~\ref{lem-alt}.
Given a refinement $P' = (\O', \R', \ttp') \preceq P$, we first
characterize the smallest restriction on $\O'$ and $\R'$ that
satisfy the implication conditions:
\begin{align}
&\semb{\O \para \R' \para \ttp \para \Sc} \subseteq 
(\varphi_\O \wedge \varphi_\ttp) \im \varphi_\R; \text{ and} 
\label{r-implication} \\
&\semb{\O' \para \R \para \ttp \para \Sc} \subseteq 
(\varphi_\R \wedge \varphi_\ttp) \im \varphi_\O \eqpun . 
\label{o-implication}
\end{align}
We show that for all refinements $\R' \preceq \R$, the implication
condition (\ref{r-implication}) holds.
In order to characterize the smallest restrictions on $\O$ 
that satisfies the implication condition (\ref{o-implication}), we 
recall the following constraints on $\O$.
We show that these constraints are both necessary and sufficient
to satisfy (\ref{o-implication}).

\medskip\noindent{\bf AGS constraints on $\O$.}
We say that a refinement $\O' \preceq \O$ satisfies
the {\em AGS constraints on $\O$\/} if $\O'$ satisfies
the following constraints:
\begin{enumerate}
\item $a_1^\O \not\in \mov_{\O'}(v_0)$;
\item $\siga^\O \not\in \mov_{\O'}(\set{M_1, \eor, \reqa^\O})$; and
\item $a_1^\O \not\in \mov_{\O'}(\set{M_1, \eor, M_3})$.
\end{enumerate}

\medskip\noindent{\bf The most flexible refinements $\O' \preceq \O$
and $\R' \preceq \R$.}
We now characterize the most flexible refinements $\O' \preceq \O$
and $\R' \preceq \R$ that satisfy the implication conditions 
$(\varphi_\R \wedge \varphi_\ttp) \im \varphi_\O$ and
$(\varphi_\O \wedge \varphi_\ttp) \im \varphi_\R$.

\begin{lem}{} 
\label{lem-max-r}
For all refinements $\R' \preceq \R$, the following assertion holds:
\[
\semb{\O \para \R' \para \ttp \para \Sc} \subseteq 
(\varphi_\O \wedge \varphi_\ttp) \im \varphi_\R.
\]
\end{lem}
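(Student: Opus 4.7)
\noindent\textbf{Proof proposal for Lemma~\ref{lem-max-r}.}
The plan is to argue by contradiction: suppose there exists a trace
$\trace \in \semb{\O \para \R' \para \ttp \para \Sc}$
that satisfies $\varphi_\O \wedge \varphi_\ttp$ but violates $\varphi_\R$.
The key idea is that the protocol constraints together with the
symmetry built into $\varphi_\ttp$ make it impossible for the events
needed by $\varphi_\O$ and $\varphi_\ttp$ to occur without also
witnessing one of the disjuncts of $\varphi_\R$; hence $\R'$ never needs
any restriction at all.

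First I would unfold the violation of
$\varphi_\R = \bo(\eoo \im (\varphi_\R^1 \vee \varphi_\R^2))$.
Using the monotonicity of all predicates (once set, they remain set),
the negation along $\trace$ implies that $\eoo$ eventually holds and
there is a suffix on which
$\bo \neg \siga^\O \wedge (\bo \neg \siga^\ttp \vee \diam \abr)
\wedge (\bo \neg \abr \vee \diam(\sigb^\R \vee \sigb^\ttp))$.
I would then do a two-way case split on which of the outer disjuncts is
responsible.

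\emph{Case 1: $\bo \neg \abr$ holds on the suffix.} Then by the first
conjunct we also get $\bo \neg \siga^\O \wedge \bo \neg \siga^\ttp$.
For $\varphi_\O$ to hold, at every point on the suffix one of the
disjuncts of $\bo(\varphi_\O^2 \vee \varphi_\O^3)$ must be witnessed.
I rule out each possibility. If $\diam \sigb^\R$ holds, then $\R'$ must
have sent $m_4$; but in the protocol model $\R$ can send $m_4$ only
after receiving $m_3$, which sets $\siga^\O$, contradicting
$\bo \neg \siga^\O$. If $\diam \sigb^\ttp$ holds, then $\ttp$ sent
$r_2^\O$, and $\varphi_\ttp^3$ forces $\diam \siga^\ttp$ (since the
channel to $\R$ is resilient, $r_2^\R$ is eventually received, setting
$\siga^\ttp$), contradicting $\bo \neg \siga^\ttp$. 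If $\diam \abo$
holds, then $\ttp$ sent $a_2^\O$, and $\varphi_\ttp^4$ forces
$\diam \abr$ by the same resilience argument, contradicting
$\bo \neg \abr$.

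\emph{Case 2: $\diam(\sigb^\R \vee \sigb^\ttp)$ holds.} Again the
subcase $\diam \sigb^\R$ is impossible because $\R$ sending $m_4$
requires a prior $m_3$, forcing $\siga^\O$ and contradicting
$\bo \neg \siga^\O$. In the remaining subcase $\diam \sigb^\ttp$ we use
$\varphi_\ttp^3$, which yields $\diam \siga^\ttp$ together with
$\bo(\sigb^\ttp \im \bo \neg \abr)$. The first conjunct of
$\neg \varphi_\R^1 \wedge \neg \varphi_\R^2$ combined with
$\diam \siga^\ttp$ forces $\diam \abr$, and by monotonicity $\abr$ and
$\sigb^\ttp$ are simultaneously true from some point on, contradicting
$\bo(\sigb^\ttp \im \bo \neg \abr)$.

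Since every case leads to a contradiction, no such trace exists, and
the lemma follows. I expect the only delicate point to be the careful
use of monotonicity to rewrite the $\diam \bo$/$\bo \diam$ mixtures as
coexistence of set predicates, together with the resilient-channel
assumption that turns ``$\ttp$ sends $x$'' into ``the corresponding
receipt predicate eventually holds.'' Once those are in place, the case
analysis is essentially mechanical and mirrors the style of the proof
of Theorem~\ref{objectives-imply-ft}.
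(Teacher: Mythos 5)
Your proof is correct, and it uses the same essential ingredients as the paper's proof (the imposed message ordering that makes $m_4$ depend on a prior $m_3$, the symmetry clauses $\varphi_\ttp^3$ and $\varphi_\ttp^4$, and monotonicity of the predicates), but it organizes the case analysis differently. The paper first splits on whether $m_3$ has been received (in which case $\varphi_\R$ holds outright) and then, for the remaining traces satisfying $\varphi_\O \wedge \varphi_\ttp$, cases on the \emph{protocol event} that let $\O$ satisfy her objective: $\O$ aborts, someone resolves, or $\R'$ idles (whence $\O$ must abort or resolve anyway). You instead case on the \emph{logical structure}: you unfold $\neg\varphi_\R$ into its two outer disjuncts and, inside Case~1, enumerate the three ways $\bo(\varphi_\O^2 \vee \varphi_\O^3)$ can be witnessed ($\diam\sigb^\R$, $\diam\sigb^\ttp$, $\diam\abo$), refuting each. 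Your decomposition has the advantage that exhaustiveness is transparent -- it falls out of the syntax of the objectives rather than from an implicit claim that abort/resolve/idle covers all relevant behaviors -- and it makes explicit where the resilient-channel assumption and the receipt semantics of the predicates enter. The paper's event-based split is shorter and reads more operationally, and it reuses the same template in the companion Lemma for $\O$. One cosmetic slip: in your Case~1 the conjunct $\bo\neg\siga^\ttp$ follows from the \emph{second} conjunct $(\bo\neg\siga^\ttp \vee \diam\abr)$ together with $\bo\neg\abr$, not from the first conjunct as you wrote; the inference itself is fine.
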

\begin{proof}
Consider an arbitrary refinement $\R' \preceq \R$.
We have the following cases of sets of traces of
$\semb{\O \para \R' \para \ttp \para \Sc}$ for the proof:
\begin{itemize}
\item {\em Case 1. Set of traces where $m_3$ has been received.\/}
For all traces where $m_3$ has been received, $\varphi_\R$ is satisfied. 
Therefore all these traces satisfy the implication condition,
$(\varphi_\O \wedge \varphi_\ttp) \im \varphi_\R$.
\item {\em Case 2. Set of traces where $m_3$ has not been received.\/}
For all traces where $m_3$ has not been received, the traces where 
either $\varphi_\O$ or $\varphi_\ttp$ is violated, satisfy
the implication condition
$(\varphi_\O \wedge \varphi_\ttp) \im \varphi_\R$ trivially.
The interesting case are those traces that satisfy 
$\varphi_\O \wedge \varphi_\ttp$ but violate $\varphi_\R$.
These are exactly the traces where $\O$ does not have 
$\sigb^\R$, since $m_4$ is not sent before receiving $m_3$,
and $\R$ does not have $\siga^\O$, as otherwise $\varphi_\R$
would be satisfied.
We have following cases that lead to a contradiction:
\begin{itemize}
\item {\em Case (a). $\O$ aborts the protocol.\/}
In these traces, since $\varphi_\ttp$ is satisfied, the abort token 
must have been sent to both agents, and since neither agent will be
sent the other's signature and the channels between the agents and
the $\ttp$ are resilient, the traces satisfy $\varphi_\R$, leading
to a contradiction.
\item {\em Case (b). $\O$ or $\R'$ resolve the protocol.\/}
In these traces, since $\varphi_\ttp$ is true, the $\ttp$ sends 
$\siga^\ttp$ to $\R$ and $\sigb^\ttp$ to $\O$ and never sends either 
$\abo$ or $\abr$.
This implies, given the channel between the agents and the $\ttp$
is resilient, the traces satisfy $\varphi_\R$, leading to a
contradiction.
\item {\em Case (c). $\R'$ chooses move $\idle$.\/}
In these traces, since $\varphi_\O$ is true, either $\O$ aborts
the protocol after sending $m_1$ or she chooses to abort or resolve
the protocol after receiving $m_2$. 
In either case, given the traces satisfy $\varphi_\ttp$, by the above 
argument $\varphi_\R$ is satisfied as well, irrespective of the
behavior of the channel between $\O$ and $\R$.
This again leads to a contradiction.
\end{itemize}
\end{itemize}
Since we have shown that for all traces, either $\varphi_\R$ is
satisfied or satisfaction of $\varphi_\O \wedge \varphi_\ttp$ 
implies satisfaction of $\varphi_\R$, we conclude that
for all refinements $R' \preceq R$ the assertion holds.
\qed
\end{proof}

It follows from Lemma~\ref{lem-max-r}, that as $\R'$ can always 
resolve the protocol in state $\set{\eoo}$ and all successor states, 
such that the resulting trace satisfies
$(\varphi_\O \wedge \varphi_\ttp) \im \varphi_\R$, we have
$m_2 \in \mov_{\R'}(\set{\eoo})$.
Similarly, $m_4 \in \mov_{\R'}(\set{\eoo, M_2, \siga^\O})$ as
$\varphi_\R$ is satisfied in all traces where $m_3$ has been received,
thus satisfying $(\varphi_\O \wedge \varphi_\ttp) \im \varphi_\R$.

In the following lemma, in assertion 1 we show that for all
refinements $\O' \preceq \O$ that satisfy the AGS constraints on $\O$,
the implication condition (\ref{o-implication}) is satisfied; in 
assertion 2 we show that if $\O'$ does not satisfy the AGS constraints 
on $\O$, the implication condition (\ref{o-implication}) is violated.

\begin{lem}{\ (The smallest restriction on $\O' \preceq \O$)} 
\label{lem-max-or}
For all refinements $\O' \preceq \O$, the following assertions hold:
\begin{enumerate}
\item if $\O'$ satisfies the AGS constraints on $\O$, then
\[
\semb{\O' \para \R \para \ttp \para \Sc} \subseteq 
(\varphi_\R \wedge \varphi_\ttp) \im \varphi_\O.
\]
\item if $\O'$ does not satisfy the AGS constraints on $\O$, then
\[
\semb{\O' \para \R \para \ttp \para \Sc} \not\subseteq
(\varphi_\R \wedge \varphi_\ttp) \im \varphi_\O.
\]
\end{enumerate}
\end{lem}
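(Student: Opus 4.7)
\textbf{Proof proposal for Lemma~\ref{lem-max-or}.}
The plan is to prove the two assertions separately, reasoning directly about traces in $\semb{\O' \para \R \para \ttp \para \Sc}$, using the five conjuncts of $\varphi_\ttp$, the disjunction in $\varphi_\R$, the reasonable-$\ttp$ restrictions, resilience of the agent--$\ttp$ channels, and the monotonicity of all predicates.

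For assertion~1 (sufficiency) I fix an arbitrary $\O'$ satisfying the three AGS constraints on $\O$ and an arbitrary trace $\pi \in \semb{\O' \para \R \para \ttp \para \Sc}$ with $\pi \models \varphi_\R \wedge \varphi_\ttp$, and show $\pi \models \varphi_\O$. Since constraint~(1) removes $a_1^\O$ from $v_0$ and bounded idle time is in force throughout this characterization, $\O'$ must eventually play $m_1$, so $\varphi_\O^1 = \diam M_1$ is immediate. The rest is a case split on what $\O'$ does next:
\begin{itemize}
\item If $\O$ never receives $m_2$, the only productive remaining choice is $a_1^\O$. Then $\varphi_\ttp^1$ together with $\varphi_\ttp^4$ and the resilient $\O$--$\ttp$ channel give $\diam \abo$ and $\bo (\neg \siga^\ttp \wedge \neg \sigb^\ttp)$; since $m_3$ was never sent, $\siga^\O$ also stays false, and $\varphi_\O^3$ holds.
\item If $m_2$ is received, I examine the three permitted next moves. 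For $r_1^\O$, $\varphi_\ttp^1$ with $\varphi_\ttp^2$ and the resilient channel produces $\diam \sigb^\ttp \wedge \bo \neg \abo$, so $\varphi_\O^2$ holds. For $a_1^\O$, the previous bullet applies. For $m_3$, constraint~(3) blocks a later abort by $\O'$, so $\reqa$ and hence $\abo$ stay false; then $\varphi_\R$ (with $\eoo$ now set) forces $\R$ either to send $m_4$ (giving $\sigb^\R$ for $\O$ in the traces where the channel delivers) or to send a resolve request (which, via $\varphi_\ttp^3$ and $\varphi_\ttp^2$, forces $\sigb^\ttp$ to be delivered to $\O$), again giving $\varphi_\O^2$.
\end{itemize}

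For assertion~2 (necessity) I produce an explicit witness trace for each violated constraint. If~(1) is violated and $\O'$ plays $a_1^\O$ at $v_0$, then $\R$ idles and the $\ttp$ issues $\comp{a_2^\O, a_2^\R}$: $\varphi_\R$ and $\varphi_\ttp$ hold, but $\diam M_1$ fails. If~(2) is violated, consider the trace in which $\O'$ plays $m_1$, $\R$ sends and $\O$ receives $m_2$, $\O'$ plays $a_1^\O$ followed by $m_3$, the $\ttp$ processes the abort first and issues $\comp{a_2^\O, a_2^\R}$, and $\R$ receives $m_3$ (setting $\siga^\O$) and then idles: $\varphi_\R$ holds via $\diam \siga^\O$ and $\varphi_\ttp$ holds, but $\siga^\O \wedge \diam \abo$ breaks $\varphi_\O^3$ while the absence of $\sigb^\R$ and $\sigb^\ttp$ breaks $\varphi_\O^2$. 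If~(3) is violated, the symmetric trace $m_1$, receive $m_2$, $m_3$, $a_1^\O$, with $\R$ idling after receiving $m_3$, works identically.

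The main obstacle will be the $m_3$ subcase of assertion~1: I must show that no continuation consistent with $\varphi_\R \wedge \varphi_\ttp$ can leave $\O$ with neither $\sigb^\R$ nor $\sigb^\ttp$. The subtlety is that $\R$ may elect a resolve request instead of $m_4$, so I have to chase the consequences through $\varphi_\ttp^2$ and $\varphi_\ttp^3$, which together force the $\ttp$ to deliver $\sigb^\ttp$ to $\O$ on the resilient channel whenever it has delivered $\siga^\ttp$ to $\R$. The ideas are clear; the real work is the careful bookkeeping across the five $\varphi_\ttp^i$ conjuncts, particularly in ensuring that no misbehaviour of $\R$ allowed by $\varphi_\R$ leaves $\O$ stranded without either agent-originated or $\ttp$-originated evidence.
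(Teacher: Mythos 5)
Your overall plan matches the paper's: assertion~1 by a case analysis on the moves of $\O'$ using the conjuncts of $\varphi_\ttp$, the reasonable-$\ttp$ assumption and resilience of the agent--$\ttp$ channels, and assertion~2 by exhibiting one witness trace per violated constraint. Your three witnesses for assertion~2 are exactly the paper's (abort at $v_0$; abort then $m_3$ with the $\ttp$ issuing $\comp{a_2^\O, a_2^\R}$ and $m_3$ eventually delivered; $m_3$ then abort with $\R$ idling), so that half is fine.

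There is, however, a genuine gap in the $m_3$ subcase of assertion~1 --- the very point you flag as ``the main obstacle.'' You claim that $\varphi_\R$ \emph{forces} $\R$ either to send $m_4$ or to send a resolve request. It does not: once $m_3$ is delivered, $\siga^\O$ is set, $\varphi_\R^1$ holds, and $\R$ may idle forever while still satisfying $\varphi_\R$. In that continuation nothing you have invoked produces $\sigb^\R$ or $\sigb^\ttp$ for $\O$, so your argument for $\varphi_\O^2$ collapses. The paper closes this case differently: after sending $m_3$ the constraint-satisfying $\O'$ still has the move $r_1^\O$, and it is \emph{$\O$ herself} who resolves when $m_4$ does not arrive; then $\varphi_\ttp^1$ (plus the reasonable-$\ttp$ assumption and the fact that $\R$ cannot abort) forces the $\ttp$ to answer with her evidence rather than an abort token, and resilience delivers $\sigb^\ttp$. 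You need that mechanism, not pressure on $\R$. A second, smaller misstep: you import ``bounded idle time'' to get $\diam M_1$, but bounded idle time is not a hypothesis of this lemma --- it is a separate condition used only for the weak co-synthesis requirement in Lemma~\ref{lem-bounded-idle} --- so it cannot be used here to argue that $\O'$ eventually plays $m_1$.
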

\begin{proof}
Consider an arbitrary refinement $\O' \preceq \O$ that satisfies
the AGS constraints on $\O$.
We have the following cases of sets of traces of
$\semb{\O' \para \R \para \ttp \para \Sc}$ for the proof:
\begin{itemize}
\item {\em Case 1. Set of traces where $m_4$ has been received.\/}
In the case of classical co-synthesis, an adversarial $\R$ will never 
send $m_4$ as that satisfies $\varphi_\O$ unconditionally, but in 
assume-guarantee synthesis, from Lemma~\ref{lem-max-r}, 
since all refinements of $\R$ satisfy the weaker condition of
$(\varphi_\O \wedge \varphi_\ttp) \im \varphi_\R$,
$m_4 \in \mov_{\R'}(\seq{\eoo, M_2, \siga^\O})$.
For all traces where $m_4$ has been received, $\varphi_\O$ is satisfied. 
Therefore all these traces satisfy the implication condition
$(\varphi_\R \wedge \varphi_\ttp) \im \varphi_\O$.
\item {\em Case 2. Set of traces where $m_4$ has not been received.\/}
For all traces where $m_4$ has not been received, the traces where 
either $\varphi_\R$ or $\varphi_\ttp$ is violated, satisfy
the implication condition
$(\varphi_\R \wedge \varphi_\ttp) \im \varphi_\O$ trivially.
The interesting case are those traces that satisfy 
$\varphi_\R \wedge \varphi_\ttp$ but violate $\varphi_\O$.
These are exactly the traces where $\O$ does not have 
$\sigb^\R$, since $m_4$ has not been received.
We have the following cases that lead to a contradiction:
\begin{itemize}
\item {\em Case (a). $\O'$ has sent $m_3$.\/} 
In these traces, since $\O'$ satisfies the AGS constraints on $\O$,
the only choice of moves for $\O'$ are $\idle$ or $r_1^\O$; she can 
wait for $\R$ to send $m_4$ or resolve the protocol.
In the set of traces where she eventually receives $m_4$, by Case 1, 
the traces satisfy
$(\varphi_\R \wedge \varphi_\ttp) \im \varphi_\O$.
If she does not receive $m_4$, she will eventually resolve the
protocol to satisfy $\varphi_\O$.
In the set of traces where she eventually resolves the protocol,
since $\varphi_\ttp$ is satisfied, and $\R$ cannot abort the
protocol, the $\ttp$ will eventually respond to her request by sending 
her non-repudiation evidence and not the abort token.
These traces therefore satisfy $\varphi_\O$, leading to a contradiction.
\item {\em Case (b). $\O'$ aborts the protocol before sending $m_3$.\/}
Since $\O'$ satisfies the AGS constraints on $\O$, she cannot abort
the protocol in the initial state $v_0$.
Therefore, $\O'$ must have started the protocol by sending $m_1$.
In all these traces, $\O'$ aborts the protocol after sending $m_1$ but
before sending $m_3$ and since $\O'$ satisfies the AGS constraints on 
$\O$, she will not send $m_3$ after sending the abort request.
Since these traces satisfy $\varphi_\ttp$, the abort token 
must have been sent to both agents, and since neither agent will be
sent the other's signature and the channels between the agents and
the $\ttp$ are resilient, the traces satisfy $\varphi_\O$, leading
to a contradiction.
\item {\em Case (c). $\O'$ resolves the protocol before sending $m_3$.\/}
In these traces, since $\varphi_\ttp$ is true, the $\ttp$ sends 
$\sigb^\ttp$ to $\O$ and $\siga^\ttp$ to $\R$ and never sends either 
$\abo$ or $\abr$.
This implies, given the channel between the agents and the $\ttp$
is resilient, the traces satisfy $\varphi_\O$, leading to a
contradiction.
\item {\em Case (d). $\O'$ chooses move $\idle$ instead of sending $m_3$.\/}
In these traces, since $\varphi_\R$ is true, $\R$ must have resolved
the protocol after receiving $m_1$. 
In this case, given the traces satisfy $\varphi_\ttp$, by the above 
argument $\varphi_\O$ is satisfied as well.
This again leads to a contradiction.
\item {\em Case (e). The channel between $\O$ and $\R$ is unreliable.\/}
If either $m_1$ or $m_2$ are not delivered, then $\O'$ can abort the
protocol.
If either $m_3$ or $m_4$ are not delivered, then $\O'$ can resolve the
protocol.
In either case, by Case (a), Case (b) and Case (c), we have
$\varphi_\O$ is satisfied even when the channel between $\O$ and $\R$
is unreliable, leading to a contradiction.
\end{itemize}
\end{itemize}
We conclude that for all $\O'$ that satisfy the AGS constraints on
$\O$, we have $\semb{\O' \para \R \para \ttp \para \Sc} \subseteq
(\varphi_\R \wedge \varphi_\ttp) \im \varphi_\O$.

For assertion 2, consider an arbitrary refinement $\O' \preceq \O$
that does not satisfy the AGS constraints on $\O$.
We consider violation of the constraints on a case by case basis.
For each case we produce a witness trace that violates the
implication condition $(\varphi_\R \wedge \varphi_\ttp) \im \varphi_\O$.
We proceed as follows:
\begin{itemize}
\item {\em Case 1. $a_1^\O \in \mov_{\O'}(v_0)$.\/}
In a trace where $\O'$ sends an abort request before sending message 
$m_1$ in the initial protocol state $v_0$, it is trivially the case
that the trace does not satisfy $\varphi_\O$ but satisfies 
$\varphi_\R$.
If the $\ttp$ satisfies the AGS constraints on the $\ttp$ and sends
$\comp{a_2^\O, a_2^\R}$ in response, then the trace satisfies
$\varphi_\ttp$.
Therefore, the trace violates $(\varphi_\R \wedge \varphi_\ttp) \im
\varphi_\O$.
\item {\em Case 2. $\siga^\O \in \mov_{\O'}(M_1, \eoo, \reqa^\O)$.\/}
To produce a witness trace we consider a partial trace that ends
in protocol state $\set{M_1, \eoo, \reqa^\O}$; messages $m_1$ and $m_2$
have been received and $a_1^\O$ has been sent.
Since the channel between $\O$ and the $\ttp$ is resilient, the
abort request is eventually processed by the $\ttp$.
If $\O'$ sends message $m_3$ in this state and the $\ttp$ responds with 
move $\comp{a_2^\O, a_2^\R}$ to her abort request, then
there exists a behavior of the channel between $\O$ and $\R$ such
that $m_3$ is eventually delivered and the protocol is aborted.
The trace therefore satisfies $\varphi_R \wedge \varphi_\ttp$ but
violates $\varphi_\O$; as $\O$ cannot get $\R$'s signature after the
protocol is aborted and $\R$ has her signature.
\item {\em Case 3. $a_1^\O \in \mov_{\O'}(M_1, \eoo, M_3)$.\/}
To produce a witness trace we consider a partial trace that ends in
protocol state $\set{M_1, \eoo, M_3}$; messages $m_1$ and $m_2$ have
been received and $m_3$ has been sent.
If $\O'$ aborts the protocol in this state and the $\ttp$ satisfies
the AGS constraints on the $\ttp$ and responds with 
move $\comp{a_2^\O, a_2^\R}$, then there exists a behavior of the 
channel between $\O$ and $\R$, where $m_3$ is eventually delivered to 
$\R$.
The trace satisfies $\varphi_\R \wedge \varphi_\ttp$ but violates
$\varphi_\O$.
\end{itemize}
We conclude that if $\O'$ does not satisfy the AGS constraints on
$\O$, then 
$\semb{\O' \para \R \para \ttp \para \Sc} \not\subseteq 
(\varphi_\R \wedge \varphi_\ttp) \im \varphi_\O$.
\qed
\end{proof}

From Lemma~\ref{lem-max-or}, it is both necessary and sufficient
that $\O$ satisfies the AGS constraints on $\O$ to ensure
the implication condition (\ref{o-implication}).

\medskip\noindent{\bf The maximal refinement $\pmax = (\omax, \rmax, \tmax)$.}
We recall the definition of the maximal refinement 
$\pmax = (\omax, \rmax, \tmax)$ below:
\begin{enumerate}
\item $\omax \preceq \O$ satisfies the AGS constraints on $\O$
and for all $\O'$ that satisfy the constraints, we have
$O' \preceq \omax$;
\item $\rmax = \R$; and
\item $\tmax \preceq \ttp$ satisfies the AGS constraints on
the $\ttp$ and for all $\ttp'$ that satisfy the constraints,
we have $\ttp' \preceq \tmax$.
\end{enumerate}

\medskip\noindent{\bf The weak co-synthesis requirement.}
Let $b \in \nats$ be a bound on the number of times that $\O$
or the $\ttp$ may choose the idle move $\idle$ when
scheduled by $\Sc$.
In the following lemma, for all refinements $P' \preceq \pmax$ that 
satisfy the AGS constraints on the $\ttp$, in assertion 1 we show that 
if $b$ is finite, then the condition for weak co-synthesis is satisfied;
in assertion 2 we show that if $b$ is unbounded, then the condition
for weak co-synthesis is violated.

\begin{lem}{\ (Bounded idle time lemma)} \label{lem-bounded-idle}
For all refinements $P' = (\O', \R', \ttp') \preceq \pmax$ that
satisfy the AGS constraints on the $\ttp$, for all $b \in \nats$ 
with $\O'$ and $\ttp'$ choosing
at most $b$ idle moves when scheduled by $\Sc$, the following
assertions hold:
\begin{enumerate}
\item if $b$ is finite, then
$\semb{\O' \para \R' \para \ttp' \para \Sc} \subseteq 
(\varphi_\O \wedge \varphi_\R \wedge \varphi_\ttp)$.
\item if $b$ is unbounded, then
$\semb{\O' \para \R' \para \ttp' \para \Sc} \not\subseteq 
(\varphi_\O \wedge \varphi_\R \wedge \varphi_\ttp)$.
\end{enumerate}
\end{lem}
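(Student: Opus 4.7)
The plan is to treat the two assertions separately: the sufficient direction by showing that bounded idle combined with fairness forces each participant into productive actions that satisfy every objective, and the necessary direction by exhibiting a single counter-trace.

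For assertion~1, since $\ttp'$ satisfies the AGS constraints on $\ttp$, Lemma~\ref{lem-ttp-inviol} (assertion~1) immediately yields $\varphi_\ttp$ on every trace of $\semb{\O' \para \R' \para \ttp' \para \Sc}$, so only $\varphi_\O$ and $\varphi_\R$ remain. Because $\O' \preceq \omax$, the AGS constraints on $\O$ prohibit $a_1^\O$ at $v_0$; together with bounded idle and a fair scheduler, this forces $\O'$ to eventually play $m_1$, discharging $\varphi_\O^1$. I would then case-split on the subsequent evolution. If $m_2$ is never received by $\O'$ (because $\R'$ never sends it, or the unreliable channel drops it), then her only enabled non-idle move is $a_1^\O$, and bounded idle forces her to play it; the abort constraint on $\ttp'$ produces $\comp{a_2^\O,a_2^\R}$, discharging $\varphi_\O^3$ and, if $\eoo$ ever held, $\varphi_\R^2$. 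If $m_2$ arrives, the AGS constraint forbidding $a_1^\O$ after $m_3$ means $\O'$ must play one of $a_1^\O$, $m_3$, or $r_1^\O$ within $b$ turns; in each subcase the AGS abort, resolve, and accountability constraints on $\ttp'$ deliver a matching pair of tokens, and either $m_4$ arrives yielding $\sigb^\R$ (so $\varphi_\O^2$), or $\O'$ resolves within another $b$ turns and obtains $\sigb^\ttp$ without $\abo$ (again $\varphi_\O^2$), or both agents receive abort tokens (giving $\varphi_\O^3$ and $\varphi_\R^2$). A symmetric reading of these branches establishes $\varphi_\R$.

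For assertion~2, I would exhibit a concrete witness. Take $\O'$ that chooses $\idle$ on every turn granted by the scheduler (unbounded idle on $\O'$'s side suffices). Then $M_1$ never becomes true, so $\diam M_1$ is violated and hence $\varphi_\O$ fails; on this trace $\eoo$ never holds, so $\varphi_\R$ is vacuously true, and $\ttp'$ receives no request, so $\varphi_\ttp$ is trivially true. Thus the conjunction $\varphi_\O \wedge \varphi_\R \wedge \varphi_\ttp$ fails and the trace witnesses the non-containment. An alternative unbounded-idle witness targeting $\ttp'$ proceeds similarly: let $\O'$ send $m_1$ and then $a_1^\O$, but let $\ttp'$ idle forever on the abort request, so $\reqa$ holds perpetually without any of $\diam\abo$, $\diam\abr$, $\diam\siga^\ttp$, or $\diam\sigb^\ttp$, violating $\varphi_\ttp^1$.

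The delicate point is that $\R'$ is \emph{not} assumed to satisfy bounded idle, so the case analysis for $\varphi_\R$ must be discharged purely by the actions of $\O'$ and $\ttp'$. The key observation closing this gap is that $\varphi_\R = \bo(\eoo \im (\varphi_\R^1 \vee \varphi_\R^2))$ is guarded by $\eoo$: when $\eoo$ is false the obligation is vacuous, and once $\eoo$ holds the combination of bounded idle for $\O'$ with the AGS constraints on $\ttp'$ forces a decisive $\ttp$ response that sets one of $\siga^\O$, $\siga^\ttp$, or $\abr$. I expect this observation, together with disciplining the case analysis by the current valuation (tracked via the monotone predicates), to be the main technical effort.
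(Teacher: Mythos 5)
Your proof is correct and rests on the same pillars as the paper's: Lemma~\ref{lem-ttp-inviol} (assertion~1) to dispose of $\varphi_\ttp$, the AGS constraints on $\O$ plus bounded idle to force $m_1$ at $v_0$ and to force an abort or resolve whenever $\O'$ is stuck waiting, and a single witness trace with unbounded idling for assertion~2. The organization differs in two ways worth noting. For assertion~1 the paper case-splits on channel behavior (resilient vs.\ unreliable) and on whether the agents ever contact the $\ttp$, reducing everything to the ``agents abort or resolve'' case; you instead split on the progress of the message exchange (whether $m_2$ arrives, which of $a_1^\O$, $m_3$, $r_1^\O$ is played next). The two decompositions cover the same trace space; yours makes the role of the guard $\eoo$ in $\varphi_\R$ and the non-boundedness of $\R'$ more explicit, which is a genuine clarification. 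For assertion~2 your primary witness ($\O'$ idling forever, so $\diam M_1$ fails while $\varphi_\R$ and $\varphi_\ttp$ hold vacuously) is simpler than the paper's (a run reaching $\set{M_1,\eoo,M_2,\eor,M_3,\siga^\O,\reqr^\O}$ with $\ttp'$ idling forever), and by also supplying the $\ttp'$-idling witness you cover both sources of unboundedness, which the paper's single witness does not. One small imprecision: in your ``$m_2$ never received'' branch you assert that the abort constraint yields $\comp{a_2^\O,a_2^\R}$ and hence $\varphi_\O^3$, but the abort constraint only governs the case where $a_1^\O$ is the \emph{first} request the $\ttp$ receives; if $\R$ has already resolved with $r_1^\R$, the accountability constraint instead forces a response in $\set{\idle, r_2^\O, r_2^\R, \comp{r_2^\O,r_2^\R}}$ and $\varphi_\O$ is then discharged via $\varphi_\O^2$ (she gets $\sigb^\ttp$ and never $\abo$), not $\varphi_\O^3$. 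You invoke exactly this machinery in the ``$m_2$ arrives'' branch, so the fix is routine, but the branch as written is incomplete.
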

\begin{proof}
For the first assertion, we show that the condition for
weak co-synthesis holds against all possible behaviors of the 
channel between $\O$ and $\R$.
We have the following cases:
\begin{itemize}
\item {\em Case 1. Agents abort or resolve the protocol.\/}
In all traces where the agents abort or resolve the 
protocol, given $b$ is finite and that $\ttp'$ satisfies the
AGS constraints on the $\ttp$, by Lemma~\ref{lem-ttp-inviol}
(assertion 1), $\ttp'$ will eventually respond to the first and all 
subsequent requests such that $\varphi_\ttp$ is satisfied.
In all these traces, given the channels between the agents and the 
$\ttp$ are resilient, both agents get either the abort
token or non-repudiation evidences but never both.
This ensures $\varphi_\O$ and $\varphi_\R$ are satisfied.
\item {\em Case 2. The channel between $\O$ and $\R$ is resilient.\/}
In all traces where neither agent aborts nor resolves the protocol,
$\varphi_\ttp$ is satisfied trivially.
Further, the only refinements of the agents that neither abort nor 
resolve the protocol are those where $\set{m_1, m_3} \in \moves_{\O'}$
and $\set{m_2, m_4} \in \moves_{\R'}$.
Since $b$ is finite, the only choice of moves for $\O'$, since she
does not abort or resolve the protocol, are $m_1$ in state $v_0$ and 
$m_3$ in state $\set{M_1, \eor}$, after choosing at most $b$ idle moves 
at each state.
Similarly, the only choice of moves for $\R'$ are $\idle$ or
$m_2$ in state $\set{\eoo}$ and $\idle$ or $m_4$ in state 
$\set{\eoo, M_2, \siga^\O}$.
If $\R'$ never sends $m_2$, then $\O'$ will eventually abort 
the protocol after bounded idle time and this case reduces to
Case 1.
If $\R'$ never sends $m_4$, then $\O'$ will eventually resolve
the protocol after bounded idle time and this case reduces to
Case 1.
If $\R'$ sends $m_2$ and $m_4$ eventually, since the channel 
between $\O$ and $\R$ is assumed resilient,
messages $m_1$, $m_2$, $m_3$ and $m_4$ are eventually delivered
satisfying $\varphi_\O$ and $\varphi_\R$.
\item {\em Case 3. The channel between $\O$ and $\R$ is unreliable.\/}
Since $\O' \preceq \omax$, we have 
$a_1^\O \not\in \mov_{\O'}(v_0)$ and 
$a_1^\O \not\in \mov_{\O'}(\set{M_1, \eor, M_3})$; $\O'$ can abort
the protocol in all other states.
Therefore, $\O'$ satisfies the AGS constraints on $\O$.
Since $b$ is finite and $\O'$ cannot resolve the protocol before 
initiating it, the only choice of moves for $\O'$ in state $v_0$
is to send $m_1$ eventually.
If the channel between $\O$ and $\R$ does not deliver either
messages $m_1$ or $m_2$, the only choice of moves for $\O'$ is
to abort the protocol.
If either messages $m_3$ or $m_4$ are not delivered, then 
the only choice of moves for $\O'$ is to resolve the 
protocol.
In both these cases, since $\O'$ chooses to abort or resolve the 
protocol, by Case 1 the result follows.
\end{itemize}
We conclude that irrespective of the behavior of the channel 
between $\O$ and $\R$, if $b$ is finite, we have 
$\semb{\O' \para \R' \para \ttp' \para \Sc}
\subseteq (\varphi_\O \wedge \varphi_\R \wedge \varphi_\ttp)$.

For the second assertion, given an unbounded $b$, to show that weak 
co-synthesis fails, it suffices to show that there exists a behavior
of the agents, the $\ttp$ and the channels that
violates the condition for weak co-synthesis.
Consider a partial trace ending in protocol state 
$\set{M_1, \eoo, M_2, \eor, M_3, \siga^\O, \reqr^\O}$; messages
$m_1$, $m_2$ and $m_3$ have been received, $\R'$ chooses to
go idle, never sending $m_4$ and $\O'$ has sent $r_1^\O$.
Since $b$ is unbounded, if $\ttp'$ chooses to remain idle forever, 
then $\varphi_\O$ and $\varphi_\ttp$ are violated leading to a 
violation of $(\varphi_\O \wedge \varphi_\R \wedge \varphi_\ttp)$.
Therefore, given an unbounded $b$, we have
$\semb{\omax \para \rmax \para \tmax \para Sc}
\not\subseteq (\varphi_\O \wedge \varphi_\R \wedge \varphi_\ttp)$.
\qed
\end{proof}

From Lemma~\ref{lem-bounded-idle}, it is both necessary and sufficient
that the refinements $P' \preceq \pmax$ that satisfy the AGS 
constraints on the $\ttp$, also satisfy bounded idle time to ensure 
weak co-synthesis.
While $\O$ and the $\ttp$ should satisfy bounded idle time, there are
no restrictions on $\R$.
Using Lemma~\ref{lem-ttp-inviol}, Lemma~\ref{lem-max-r}, 
Lemma~\ref{lem-max-or} and Lemma~\ref{lem-bounded-idle} we now
present a proof of Lemma~\ref{lem-alt}.

\begin{proof}
{\bf (Proof of Lemma~\ref{lem-alt}).}
In one direction, consider an arbitrary refinement 
$P' = (\O', \R', \ttp') \in \pb$.
We show that the conditions of assume-guarantee synthesis are
satisfied as follows:
\begin{itemize}
\item {\em The implication condition for $\O$.\/}
Since $P' \preceq \pmax$, we have
$\O' \preceq \omax$, $\R' \preceq \rmax$ and $\ttp' \preceq \tmax$.
As $a_1^\O \not\in \mov_{\omax}(v_0)$ and
$a_1^\O \not\in \mov_{\omax}(M_1, \eor, M_3)$, the refinement
$P'$ satisfies the AGS constraints on $\O$.
Therefore, by Lemma~\ref{lem-max-or} (assertion 1),
we have $\semb{\O' \para \R \para \ttp \para \Sc} \subseteq
(\varphi_\R \wedge \varphi_\ttp) \im \varphi_\O$.
\item {\em The implication condition for $\R$.\/}
By Lemma~\ref{lem-max-r}, we have
$\semb{\O \para \R' \para \ttp \para \Sc} \subseteq
(\varphi_\O \wedge \varphi_\ttp) \im \varphi_\R$.
\item {\em The implication condition for the $\ttp$.\/}
Since $\ttp' \preceq \tmax$ and $\ttp'$ satisfies the AGS
constraints on the $\ttp$, by Lemma~\ref{lem-ttp-inviol} (assertion 1),
$\varphi_\ttp$ is satisfied irrespective of the behavior of $\O$
and $\R$, which implies 
$\semb{\O \para \R \para \ttp' \para \Sc} \subseteq
(\varphi_\O \wedge \varphi_\R) \im \varphi_\ttp$.
\item {\em The weak co-synthesis condition.\/}
Given $P'$ satisfies bounded idle time, by 
Lemma~\ref{lem-bounded-idle} we have
$\semb{\O' \para \R' \para \ttp' \para \Sc} \subseteq
(\varphi_\O \wedge \varphi_\R \wedge \varphi_\ttp)$; weak 
co-synthesis holds.
\end{itemize}
Since we have shown that the refinement $P'$ satisfies all
the implication conditions and the weak co-synthesis condition
of assume-guarantee synthesis, we have $P' \in P_{AGS}$.
Hence $\pb \subseteq P_{AGS}$.

In the other direction, consider an arbitrary refinement 
$P'' = (\O'', \R'', \ttp'') \in P_{AGS}$.
We show that $P'' \in \pb$ as follows:
\begin{itemize}
\item {\em The AGS constraints on $\O$.\/}
By Lemma~\ref{lem-max-or}, since it is both necessary and sufficient
that a refinement satisfy the AGS constraints on $\O$ to ensure the
implication condition $(\varphi_\R \wedge \varphi_\ttp) \im \varphi_\O$
is satisfied, given the implication condition holds, we conclude that 
$P''$ satisfies the AGS constraints on $\O$.
Therefore, $\O'' \preceq \omax$.
\item {\em The AGS constraints on the $\ttp$.\/}
By Lemma~\ref{lem-ttp-inviol}, since it is both necessary and sufficient
that a refinement satisfy the AGS constraints on the $\ttp$ to ensure the
implication condition $(\varphi_\O \wedge \varphi_\R) \im \varphi_\ttp$
is satisfied, given the implication condition holds, we conclude that 
$P''$ satisfies the AGS constraints on the $\ttp$ and 
$\ttp'' \preceq \tmax$.
\item {\em The bounded idle time condition.\/}
By Lemma~\ref{lem-bounded-idle}, since it is both necessary and sufficient
that a refinement satisfy bounded idle time to ensure weak co-synthesis,
since weak co-synthesis holds in this case, we conclude that $P''$ 
satisfies bounded idle time.
\item {\em $P'' \preceq \pmax$.\/}
Since we have shown that $\O'' \preceq \omax$ and $\ttp'' \preceq \tmax$,
we have $P'' \preceq \pmax$.
\item {\em $\pmin \preceq P''$.\/}
Since $\pmin$ is the smallest refinement in the set $P_{AGS}$, given
$P'' \in P_{AGS}$, it must be the case that $\pmin \preceq P''$.
\end{itemize}
For $P'' \in P_{AGS}$, as we have shown that $\pmin \preceq P'' \preceq
\pmax$, $P''$ satisfies the AGS constraints on the $\ttp$ and 
satisfies bounded idle time.
Thus we have $P'' \in \pb$ and hence $P_{AGS} \subseteq \pb$.
The result follows.
\qed
\end{proof}

We now present a proof of Lemma~\ref{psm-is-ags}.
We recall the {\em enhanced AGS constraints on the $\ttp$\/} 
below:
\begin{enumerate}
\item {\em Abort constraint.\/}
If the first request received by the $\ttp$ is $a_1^\O$, then her 
response to that request should be $\comp{a_2^\O, a_2^\R}$;
If the first request received by the $\ttp$ is $a_1^\R$, then her
response to that request should be $req^\O$;
\item {\em Resolve constraint.\/}
If the first request received by the $\ttp$ is a resolve
request, then her response to that request should be 
$\comp{r_2^\O, r_2^\R}$;
If the $\ttp$ receives $res^\O$ in response to $req^\O$ within
bounded idle time, then her response should be $\comp{r_2^\O, r_2^\R}$,
otherwise it should be $\comp{a_2^\O, a_2^\R}$.
\item {\em Accountability constraint.\/}
If the first response from the $\ttp$ is 
$\comp{x, y}$ or the first response from the $\ttp$ is $req^\O$ and
the next response is $\comp{x, y}$, then for all subsequent abort or 
resolve requests her response should be in the set 
$\set{\idle, x, y, \comp{x, y}}$.
\end{enumerate}

\begin{proof}
{\bf (Proof of Lemma~\ref{psm-is-ags}).}
From Protocol~\ref{prot:symm-main}, since the refinement $\osm$ 
does not abort the protocol either in the initial state $v_0$ or
after sending message $m_3$, we have $\osm$ satisfies
the AGS constraints on $\O$.
By our definition of the behavior of $\tsm$, we have $\tsm$
satisfies the enhanced AGS constraints on the $\ttp$.
From the definition of the main protocol in 
Protocol~\ref{prot:symm-main} and the abort subprotocol in
Protocol~\ref{prot:symm-abort}, since the resolve subprotocol is
identical to the KM protocol, we have $\osm$ and $\tsm$
satisfy the bounded idle time requirement.
We take $A = \set{\O, \R, \ttp}$ and show that there is no $Y$-attack 
for $Y \subseteq \set{\O, \R}$ through the following cases:
\begin{itemize}
\item {\em Case 1. $|Y| = 2$.\/}
In this case $Y = \set{\O, \R}$.
We show that 
$\semb{\O \para \R \para \tsm \para \Sc} \subseteq \varphi_\ttp$.
For all traces in $\semb{\O \para \R \para \tsm \para \Sc}$
where $\R$ does not abort the protocol, since $\tsm$ satisfies the 
enhanced AGS constraints on the $\ttp$, by
Lemma~\ref{lem-ttp-inviol} (assertion 1), $\varphi_\ttp$ is satisfied.
For all traces where $\R$ sends an abort request, the $\ttp$
sends $req^\O$.
If $\O$ responds with $res^\O$ within bounded idle time, then the 
$\ttp$ resolves the protocol for both $\O$ and $\R$ such that the 
AGS constraints on the $\ttp$ are satisfied.
If $\O$ does not send $res^\O$ within bounded idle time, then the 
$\ttp$ aborts the protocol, such that the AGS constraints on the 
$\ttp$ are satisfied.
For all subsequent abort requests from $\R$, the $\ttp$
response satisfies the AGS constraints on the $\ttp$.
All traces therefore satisfy $\varphi_\ttp$.
Hence, there is no $Y$-attack in this case.
\item {\em Case 2. $|Y| = 1$.\/}
In this case, either $Y = \set{\O}$ or $Y = \set{\R}$.
We have the following cases towards the proof:
\begin{itemize}
\item {\em Case (a). $Y = \set{\O}$.\/}
We show that 
$\semb{\O \para \rsm \para \ttp \para \Sc} \subseteq 
(\varphi_\O \wedge \varphi_\ttp) \im \varphi_\R$; it will
follow that
$\semb{\O \para \rsm \para \tsm \para \Sc} \subseteq
(\varphi_\O \wedge \varphi_\ttp) \im \varphi_\R$.
Consider the set of traces in 
$\semb{\O \para \rsm \para \ttp \para \Sc}$.
For all traces where $\R$ does not abort the protocol, by
Lemma~\ref{lem-max-r}, we have $\varphi_\O \wedge \varphi_\ttp
\im \varphi_\R$.
For all traces where $\R$ aborts the protocol, if he has received
$m_3$, then $\varphi_\R$ is satisfied.
For all traces where $\R$ aborts the protocol and message $m_3$
has not been received, if $\varphi_\ttp$ is violated, then
the implication holds and if $\varphi_\ttp$ is satisfied, then
either both agents get abort tokens or their respective
non-repudiation evidences, thus satisfying $\varphi_\R$.
We have shown that all traces satisfy the implication condition
$\varphi_\O \wedge \varphi_\ttp \im \varphi_\R$.
Since we have a fixed $\ttp$ that satisfies the AGS constraints
on the $\ttp$, we have $\varphi_\ttp$ is satisfied in all traces
by Case 1. 
As $\varphi_\O$ is satisfied by assumption, we conclude
$\varphi_\R$ is satisfied as well.
Therefore, there is no $Y$-attack in this case.
\item {\em Case (b). $Y = \set{\R}$.\/}
It can be shown that
$\semb{\osm \para \R \para \ttp \para \Sc} \not\subseteq
(\varphi_\R \wedge \varphi_\ttp) \im \varphi_\O$.
We show that, by fixing the $\ttp$, we have
$\semb{\osm \para \R \para \tsm \para \Sc} \subseteq
(\varphi_\R \wedge \varphi_\ttp) \im \varphi_\O$.
Consider the set of traces
$\semb{\osm \para \R \para \tsm \para \Sc}$.
For all traces where $\R$ does not abort the protocol, 
since $\O$ satisfies the AGS constraints on $\O$, 
by Lemma~\ref{lem-max-or}, we have $(\varphi_\R \wedge \varphi_\ttp)
\im \varphi_\O$.
If $\R$ aborts the protocol, since the $\ttp$ satisfies the
enhanced AGS constraints on the $\ttp$, and the channel between
$\O$ and the $\ttp$ is operational, $req^\O$ must have
been received by $\O$.
At this stage, if $\osm$ has sent message $m_3$, then the only choice 
of moves for $\osm$ to satisfy $\varphi_\O$ is $res^\O$; a request to 
resolve the protocol.
Since the channels are operational, there exists a bound on the idle 
time of the $\ttp$ such that both $req^\O$ and $res^\O$ can be delivered 
within this bound.
Moreover, as $\tsm$ satisfies the enhanced AGS constraints on the $\ttp$,
both $\O$ and $\R$ will be issued non-repudiation evidences and never
abort tokens, thus satisfying $\varphi_\O$.
If $\osm$ has not sent message $m_3$, then the only choice of moves
for $\osm$ to satisfy $\varphi_\O$ are $\idle$ or $res^\O$.
In all these traces, since $\tsm$ satisfies bounded idle time and the 
AGS constraints on the $\ttp$, either both agents get non-repudiation 
evidences or abort tokens but never both, thus satisfying $\varphi_\O$.
Therefore, all these traces satisfy
$(\varphi_\R \wedge \varphi_\ttp) \im \varphi_\O$, which given
$\varphi_\R$ is satisfied by assumption and $\varphi_\ttp$ is
satisfied by Case 1, implies $\varphi_\O$ is satisfied as well.
There is no $Y$-attack in this case.
\end{itemize}
\item {\em Case 2. $|Y| = 0$.\/}
In this case $Y = \emptyset$ and $(A \setm Y)' = \set{\osm, \rsm, \tsm}$.
Since $\psm$ satisfies bounded idle time, in all traces
where $\R$ does not abort the protocol, by Lemma~\ref{lem-bounded-idle},
the condition for weak co-synthesis is satisfied.
In all traces where $\R$ aborts the protocol, as $\tsm$ satisfies the 
AGS constraints on the $\ttp$, she sends $req^\O$.
In all these traces, since $\tsm$ and $\osm$ satisfy bounded idle time, 
and the channels are operational, $\osm$ chooses $\idle$ or sends 
$res^\O$ and $\tsm$ responds with either abort tokens or non-repudiation 
evidences but not both, leading to the satisfaction of 
$\varphi_\O$ and $\varphi_\R$.
Since $\varphi_\ttp$ is satisfied by Case 1, all these traces satisfy 
$(\varphi_\O \wedge \varphi_\R \wedge \varphi_\ttp)$.
Therefore, there is no $Y$-attack in this case.
\end{itemize}
The result follows.
\qed
\end{proof}

\begin{proof}
{\bf (Proof of Theorem~\ref{theo-psm-nonrep}).}
By Lemma~\ref{psm-is-ags}, it follows that if the $\ttp$ does not 
change her behavior, then $\psm$ is attack-free.
Further, by the weak co-synthesis condition, we have
$\semb{\osm \para \rsm \para \tsm \para \Sc} \subseteq
(\varphi_\O \wedge \varphi_\R \wedge \varphi_\ttp)$ and hence
by Theorem~\ref{objectives-imply-ft},
we have $\semb{\osm \para \rsm \para \tsm \para \Sc} \subseteq
\varphi_f$.
Thus $\psm$ satisfies fairness.
Using PCS we ensure abuse-freeness.
Since $\semb{\osm \para \rsm \para \tsm \para \Sc} \cap
(\diam \nro \wedge \diam \nrr) \ne \emptyset$, the refinement $\psm$
enables an exchange of signatures and hence is an exchange protocol.
We conclude that if the $\ttp$ does not change her behavior, then
$\psm$ is an attack-free fair non-repudiation protocol.
\qed
\end{proof}

\end{document}